\newtheorem{theorem}{Theorem}
\newtheorem{definition}{Definition}
\newtheorem{corollary}{Corollary}
\newtheorem{lemma}{Lemma}
\newtheorem{fact}{Fact}
\newtheorem{proposition}{Proposition}
\newtheorem{assumption}{Assumption}
\tikzstyle{link}=[line width=2pt, ->,>=latex]
\tikzstyle{junc}=[draw,circle,inner sep=1pt,minimum width=8pt]
\tikzstyle{onramp}=[line width=2pt, dashed,->,>=latex]
\DeclareMathOperator*{\argmin}{arg\,min}
\DeclareMathOperator*{\argmax}{arg\,max}
\DeclareRobustCommand\widecheck[1]{{\mathpalette\@widecheck{#1}}}
\def\@widecheck#1#2{%
    \setbox\z@\hbox{\m@th$#1#2$}%
    \setbox\tw@\hbox{\m@th$#1%
       \widehat{%
          \vrule\@width\z@\@height\ht\z@
          \vrule\@height\z@\@width\wd\z@}$}%
    \dp\tw@-\ht\z@
    \@tempdima\ht\z@ \advance\@tempdima2\ht\tw@ \divide\@tempdima\thr@@
    \setbox\tw@\hbox{%
       \raise\@tempdima\hbox{\scalebox{1}[-1]{\lower\@tempdima\box
\tw@}}}%
    {\ooalign{\box\tw@ \cr \box\z@}}}
\newcommand{\F}{\mathscr{F}}
\def\BibTeX{{\rm B\kern-.05em{\sc i\kern-.025em b}\kern-.08em
    T\kern-.1667em\lower.7ex\hbox{E}\kern-.125emX}}
\begin{document}
\title{Specification-guided Verification and Abstraction Refinement of Mixed Monotone Stochastic Systems}
\author{Maxence Dutreix and Samuel Coogan
\thanks{Manuscript submitted on ...}
\thanks{This project was supported in part by the NSF under project \#1749357.}
\thanks{M. Dutreix is with the School of Electrical and Computer Engineering, Georgia Institute of Technology, Atlanta, GA, USA (e-mail: maxdutreix@gatech.edu).}
\thanks{S. Coogan is with the School of Electrical and Computer Engineering and the School of Civil and Environmental Engineering, Georgia Institute of Technology, Atlanta, GA, USA (e-mail: sam.coogan@gatech.edu).}}

\maketitle

\begin{abstract}
This paper addresses the problem of verifying discrete-time stochastic systems against omega-regular specifications using finite-state abstractions. Omega-regular properties allow specifying complex behavior and encompass, for example, linear temporal logic. We focus on a class of systems with mixed monotone dynamics. This class has been show to be amenable to efficient reachable set computation and models a wide-range of physically relevant systems. In general, finite-state abstractions of continuous state stochastic systems give rise to augmented Markov Chains wherein the probabilities of transition between states are restricted to an interval. We present a procedure to compute a finite-state Interval-valued Markov Chain abstraction of discrete-time, mixed-monotone stochastic systems subject to affine disturbances given a rectangular partition of the state-space. Then, we suggest an algorithm for performing verification against omega-regular properties in IMCs. Specifically, we aim to compute bounds on the probability of satisfying a specification from any initial state in the IMC. This is achieved by solving a reachability problem on sets of so-called winning and losing components in the Cartesian product between the IMC and a Rabin automaton representing the specification. Next, the verification of IMCs may yield a set of states whose acceptance status is undecided with respect to the specification, requiring a refinement of the abstraction. We describe a specification-guided approach that compares the best and worst-case behaviors of accepting paths in the IMC and targets the appropriate states accordingly. Finally, we show a case study.
\end{abstract}

\begin{IEEEkeywords}
finite-state abstractions, interval-valued Markov chains, mixed monotone systems, stochastic systems.
\end{IEEEkeywords}

\section{Introduction}

Recent years have witnessed a growing effort to design tools for verifying and optimizing the behavior of systems with respect to increasingly complex specifications. Many efficient verification techniques can be readily applied in the context of purely deterministic models. Yet, the study of stochastic systems generally requires a machinery of its own.  For example, the model checking problem for discrete-time and continuous-time \textit{Markov Chains} (MC) has been solved for the \textit{Probablistic Computation Tree Logic} (PCTL) and \textit{$\omega$-regular properties} \cite{baier2008principles} \cite{aziz2000model} \cite{baier2003model}. Numerous off-the-shelf tools, such as PRISM \cite{kwiatkowska2011prism}, can efficiently verify MCs for a wide range of specifications.

However, these MC models assume that the probabilities of transition between states are known exactly. While computationally convenient, such representations may not be realistic in practice. In particular, the verification of continuous state systems is often performed by partitioning its domain into a discrete set of states. Because these states abstract a collection of continuous behaviors simultaneously, modeling a stochastic finite-state abstraction as a transition system can be challenging. One approach uses approximate Markov chains \cite{abate2011approximate} as in the FAUST${}^2$ model checker \cite{soudjani2014faust}, where each discrete state is reduced to a single representative point. Another approach uses Markov set-chains \cite{abate2008markov} to abstract the evolution of such systems so that existing techniques are applicable. Similar automata-based techniques are found in \cite{abate2011quantitative} and \cite{tkachev2013formula} for linear time objectives. However, these techniques generally rely on a fine gridding of the continuous state-space which overlooks qualitative aspects that are important to certain specifications, e.g. the creation of absorbing states as pointed out in \cite{abate2011quantitative}. Abstraction-based methods have been employed as well for the verification of continuous-time stochastic hybrid models against safety specifications in \cite{hahn2013compositional} and \cite{franzle2011measurability}.

Instead of approximate Markov chains, finite-state abstractions of continuous state-space, discrete-time stochastic systems are also amenable to Markovian models where the probabilities of transitions are restricted to some interval \cite{lahijanian2015formal} \cite{dutreix2018}, referred to as \textit{Interval-valued Markov Chains} (IMC) \cite{kozine2002interval}. We consider this approach in this work. Constructing interval-valued abstractions of stochastic systems with continuous state-spaces is often a computationally expensive process. Indeed, in the general case, calculating the exact lower and upper bounds of transitions between all discrete states involves numerical searches over the state-space, rendering this procedure highly time-inefficient. Nevertheless, we aim to show that this impractical computational blowup can be avoided in some cases by exploiting the inherent structure of the system's dynamics. Our proposed approach results in an IMC abstraction with conservative transition probability ranges, which remains sufficient for verification.

In particular, we consider a class of stochastic systems for which the dynamics exhibits a \emph{mixed monotone} property \cite{smith2008global, coogan2015efficient}. Mixed monotonicity generalizes the property of monotonicity for dynamical systems for which trajectories maintain a partial ordering on states \cite{Hirsch:1985fk}, \cite{Smith:2008fk}, \cite{Angeli:2003fv}. Many physical systems have been shown to be monotone or mixed monotone such as biological systems \cite{Sontag:2007ad} and transportation networks \cite{Gomes:2008fk}, \cite{Lovisari:2014yq}, \cite{Coogan:2016rp}. This property enables efficient computation of reachable sets: a rectangular over-approximation of the one-step reachable set from any rectangular discrete state is determined by evaluating a certain decomposition function at the least and the greatest point of that state. In this paper, we study mixed monotone systems that are subject to an affine random disturbance vector whose components are mutually independent and for which the probability distribution for each component is unimodal and symmetric. For such systems, we show that an upper bound and a lower bound on the probability of transitions between states of a rectangular partition are found by evaluating only two integrals per dimension and per transition. We use these bounds to create an IMC abstraction of the original system that is suitable for verification.

 Efficient verification algorithms for IMCs have been derived for the logic PCTL \cite{lahijanian2015formal}. However, many specifications of interest cannot be expressed using PCTL, such as liveness, i.e., the infinitely repeated occurrence of an event \cite{rozier2011linear}. This motivates the implementation of a machinery for handling such properties for the class of abstractions mentioned above. As a superset of the broadly used \textit{Linear Temporal Logic} (LTL), $\omega$-regular properties exhibit considerable expressiveness. 
 
%Two main semantic interpretations of IMCs are available: the \textit{Uncertain Markov Chain} (UMC) interpretation which presumes the existence of a ``true'' underlying MC consistent with the bounds imposed by the IMC; and the \textit{Interval Markov Decision Process} (IMDP) interpretation where the intervals are resolved in an adversarial sense and a valid probability distribution over the transitions from the current state is nondeterministically selected at each time step. In the context of model checking, different results can be obtained whether one or the other interpretation is considered \cite{sen2006model}. A complexity analysis of model checking against $\omega$-regular properties has been conducted in \cite{chatterjee2008model} for both interpretations, and a randomized algorithm was proposed in \cite{benedikt2013ltl} to compute an optimal MC with respect to LTL formulas in UMCs.
%
%When IMCs are employed as abstractions of continuous state spaces, with each discrete state encompassing all possible behaviors of its continuous constituents, the IMDP interpretation appears as most relevant to describe the evolution of the abstracted system. Indeed, in general, the transition probability distributions of two different continuous states abstracted by a unique discrete state are not the same. 

In this work, we treat the problem of verification for discrete-time, continuous-space stochastic systems against $\omega$-regular properties. To this end, we use finite state abstractions in the form of IMCs. We develop an algorithm for determining the best and worst-case outcome of the IMC by solving a reachability problem in the product between the IMC and a Rabin automaton corresponding to the specification of interest. Our approach can be decomposed into a graph search which identifies the \textit{largest} and \textit{permanent} so-called \textit{winning} and \textit{losing components} \cite{baier2004controller} in the product IMC, and the computation of bounds on the probability of reaching these components.

Such verification techniques may yield a set of states whose satisfiability status with respect to the desired specification is undecided. Previous works suggested a methodology to compute a gridding parameter that guarantees an upper bound on the size of the interval of satisfaction for all states in the resulting partition and for all specifications in the logic PCTL \cite{abate2011approximate}. This parameter is a function of the system's properties only, such as Lipschitz constants. However, this method is often conservative and likely to provide very fine partitions and is therefore computationally expensive. Instead, we apply a specification-guided refinement procedure on an initial crude partition of the state-space and iteratively produce finer IMC abstractions until some precision criterion is attained. Verification is performed on each new partition and a specific set of states is selected for the next refinement step. Refinement heuristics were proposed for the PCTL framework in \cite{lahijanian2015formal} and \cite{dutreix2018}. We present a technique for the $\omega$-regular framework that accounts for the behavior of the paths generated by the best and worst case adversary of the IMC to refine regions of the state-space with the highest potential of reducing uncertainty. 

In summary, the first contribution of this paper is an efficient procedure for constructing an IMC abstraction of stochastic mixed-monotone systems subject to affine, unimodal and symmetrical disturbances. The second contribution is a technique for computing satisfiability bounds on $\omega$-regular specifications in IMCs. The third contribution is a specification-guided refinement method that selectively and iteratively partitions certain regions of the continuous domain of stochastic systems for the purpose of verification against $\omega$-regular properties. This technique improves on other specification-guided approaches by inspecting the accepting paths produced by the worst and best case adversary of the IMC instead of observing only the one-step transitions of each state and by considering the qualitative structure of the product IMC through its largest and permanent components to target the states accordingly.

The paper is organized as follows: Section II introduces preliminaries; Section III presents the problems investigated in this paper; Section IV derives an IMC abstraction techniques for affine-in-disturbance mixed monotone systems;  Section V discusses the verification of IMCs against $\omega$-regular properties; Section VI discusses state-space refinement; Section VII contains case studies; Section VIII summarizes the conclusions.

Preliminary results were reported in the conference papers \cite{dutreix2018} and \cite{dutreixCDC2018}. The verification approach considered here is a significant improvement of these prior works. Specifically, the methodology proposed here does not require computing the complement of the system specification and presents a more advanced algorithm for partition refinement.

% For peer review papers, you can put extra information on the cover
% page as needed:
% \ifCLASSOPTIONpeerreview
% \begin{center} \bfseries EDICS Category: 3-BBND \end{center}
% \fi
%
% For peerreview papers, this IEEEtran command inserts a page break and
% creates the second title. It will be ignored for other modes.
\IEEEpeerreviewmaketitle

\begin{figure}[t]
\setlength{\belowcaptionskip}{-8pt}
\centering
\includegraphics[scale=0.30]{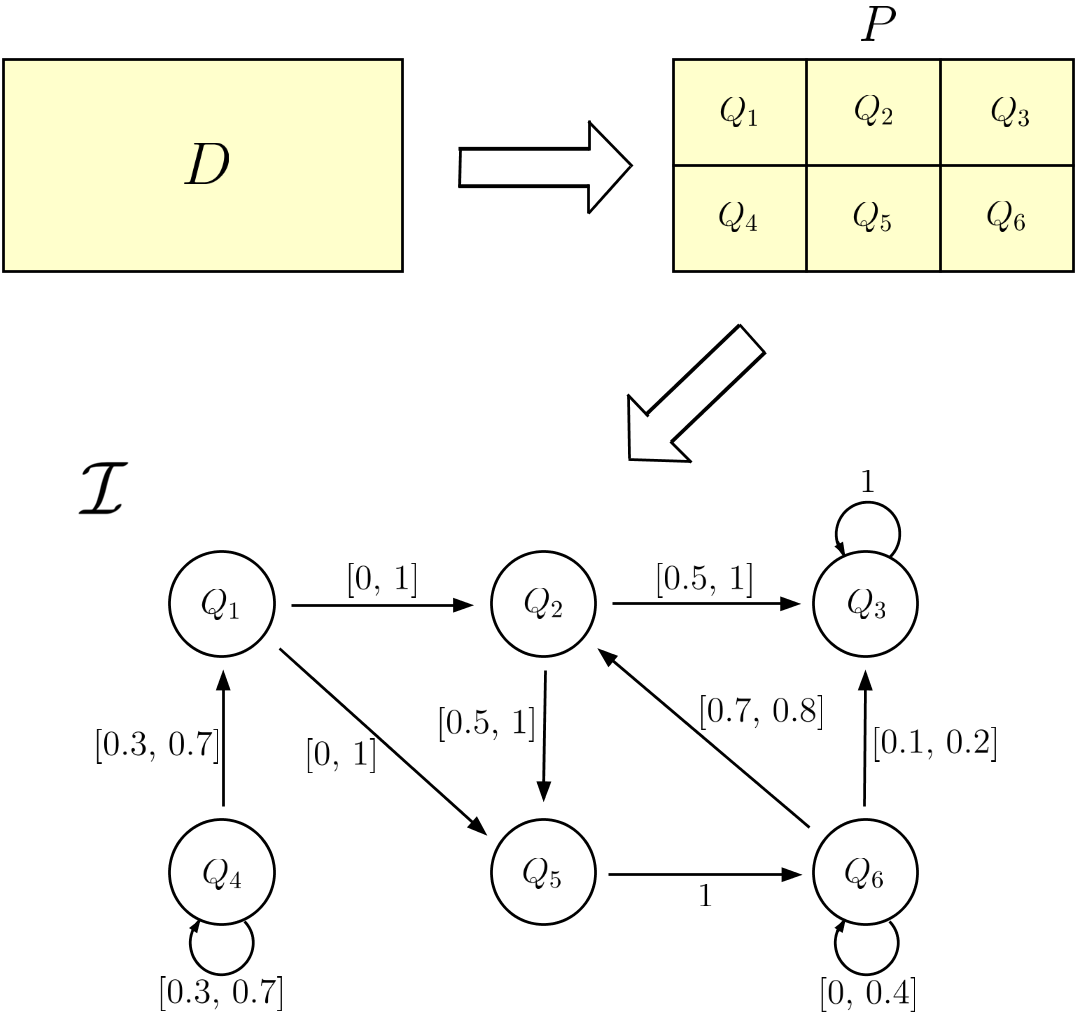}
\label{figabs}
\vspace{-0.2cm}
\caption{A finite state IMC abstraction $\mathcal{I}$ over a continuous domain $D$. A partition $P$ of $D$ is generated and bounds on the transition probabilities between states are estimated.}
\end{figure}

\section{Preliminaries}

%An IMC $\mathcal{I}$ is said to be interpreted as an \textit{Uncertain Markov Chain} (UMC) \cite{jonsson1991specification} if the external environment non-deterministically chooses a transition matrix $T$ induced by $\mathcal{I}$ at $k = 0$ and the sequence of states $\pi = q_0 q_1 q_2 \ldots$ is determined by the transition probabilities in $T$.\\

%An IMC $\mathcal{I}_{2}$ with transition functions $\widecheck{T}_{2}$ and $\widehat{T}_{2}$ is said to be \textit{induced} by IMC $\mathcal{I}_{1}$ with transition functions $\widecheck{T}_{1}$ and $\widehat{T}_{1}$ if both $\mathcal{I}_{1}$ and $\mathcal{I}_{2}$ have the same $Q$, $\Pi$ and $L$, and, for all $Q_j,Q_\ell\in Q$,
%\begin{align}
%\widecheck{T}_{1}(Q_j,Q_\ell) \leq \widecheck{T}_{2}(Q_j,Q_\ell) \leq \widehat{T}_{2}(Q_j,Q_\ell) \leq \widehat{T}_{1}(Q_j,Q_\ell) \; \;.
%\end{align}
%\noindent In this case, it follows that any Markov Chain induced by $\mathcal{I}_{2}$ is also induced by $\mathcal{I}_{1}$.\\
% 
A \textit{Deterministic Rabin Automaton (DRA)} \cite{baier2008principles} is a 5-tuple $\mathcal{A} = (S, \Sigma, \delta, s_0, Acc)$ where:
\begin{itemize}
\item $S$ is a finite set of states,
\item $\Sigma$ is an alphabet,
\item $\delta : S \times \Sigma \rightarrow S$ is a transition function
\item $s_0$ is an initial state
\item $Acc \subseteq 2^{S} \times 2^{S}$. An element $(E_{i}, F_{i}) \in Acc$, with $E_{i}, F_{i} \subset S$, is called a \textit{Rabin Pair}.
\end{itemize}

\noindent A DRA $\mathcal{A}$ reads an infinite string or \textit{word} over alphabet $\Sigma$ as an input and transitions from state to state according to $\delta$. The resulting sequence of states or \textit{run} is an \textit{accepting} run if it contains an infinite number of states belonging to $F_i$ and a finite number of states in $E_i$ for some $i$. A word is said to be accepted by $\mathcal{A}$ if it produces an accepting run in $\mathcal{A}$. We call a set of words a \textit{property}. The property \textit{accepted} by $\mathcal{A}$ is the set of all words accepted by $\mathcal{A}$.

A property over an alphabet $\Sigma$ is \textit{$\omega$-regular} if and only if it is accepted by a Rabin Automaton with alphabet $\Sigma$ (for more detailed definitions of $\omega$-regular properties, see \cite[Section 4.3.1]{baier2008principles}). In particular, all properties defined by a \textit{Linear Temporal Logic} (LTL) formula are $\omega$-regular. See \cite{baier2008principles} for a description of the semantics of LTL.
%For example, the property ``Eventually reach A", written in LTL as $\Diamond A$, has an equivalent $\omega$-regular expression representation $(\neg A)^{*}A(\Sigma)^{\omega}$, where $*$ and $\omega$ are respectively the finite and infinite repetition operators. 

%An \textit{$\omega$-regular expression} $G$ \cite{baier2008principles} over an alphabet $\Sigma$ has the form
%\begin{align}
%G = E_{1} . F_{1}^{\omega} + \ldots + E_{n} . F_{n}^{\omega}
%\end{align}
%\noindent where $n \geq 1$ and $E_1, \ldots, E_n , F_1, \ldots, F_n$ are regular expressions over $\Sigma$ such that the empty word $\epsilon \notin \mathcal{L}(F_i)$, for all $1 \leq i \leq n$, with $\mathcal{L}(F_i)$ denoting the language of finite words induced by the regular expression $F_i$.
% 
%An \textit{$\omega$-regular property} or \textit{language} is a set of infinite words $\mathcal{L}_{\omega}(G)$ induced by an $omega$-regular expression $G$, where
%\begin{align}
%\mathcal{L}_{\omega}(G) = \mathcal{L}(E_1) . \mathcal{L}(F_1)^{\omega} \cup \ldots \cup \mathcal{L}(E_n) . \mathcal{L}(F_n)^{\omega} \; .
%\end{align} 
%
%\noindent $\mathcal{L}(E_i)$ denotes the language induced by regular expression $E_i$; ; and given a language $\mathcal{L}_1$ of finite words and a language $\mathcal{L}_2$ of infinite words, the concatenation operator $\mathcal{L}_1 .  \mathcal{L}_2$ is defined as the set of infinite words $\{ w \sigma | w \in \mathcal{L}_1, \sigma \in \mathcal{L}_2 \}$.\\

Throughout, all inequalities are interpreted elementwise so that, for $x, y \in \mathbb{R}^{n}$, $x \leq y$ means $x_{i} \leq y_{i}$ for $i = 1, \ldots
, n$, and similarly for $\geq, <$ and $>$. 

$Q_{j} = \lbrace x: a^{j} \leq x \leq b^{j} \rbrace$ for some $a^{j}, b^{j} \in \mathbb{R}^{n}$ such that $a^{j} \leq b^{j}$ is a \textit{compact rectangular set} and $a^{j}$, $b^j$  are respectively called the \textit{least point} and the \textit{greatest point} of $Q_{j}$. For vectors, we reserve the subscript to index elements of the vector so that, \emph{e.g.}, $a^j_i$ for $i\in\{1,\ldots,n\}$ denotes the $i$-th element of $a^j\in\mathbb{R}^n$.

\section{Problem Formulation}

We consider the discrete-time, continuous-state stochastic system
\begin{align}
\label{eq1}
x[k+1] = \mathcal{F}(x[k], w[k])
\end{align}
where $x[k] \in D \subset \mathbb{R}^{n}$ is the state of the system at time $k$, $w[k] \in W \subset \mathbb{R}^{p}$ is a random disturbance and $\mathcal{F}: D \times W \rightarrow D$ is a continuous map. At each time-step $k$, the random disturbance $w[k]$ is sampled from a probability distribution with density function $f_w:\mathbb{R}^p\to \mathbb{R}_{\geq 0}$ satisfying $f_w(z)=0$ if $z\not\in W$. Let $L: D \rightarrow \Sigma$ be a labeling function, where $\Sigma$ is a finite alphabet. A random path $x[0] x[1] \ldots$ satisfying \eqref{eq1} generates the word $L(x[1]) L(x[2]) \ldots$ over $\Sigma$.

We denote by $\Psi$ an $\omega$-regular property over alphabet $\Sigma$ and define a probability operator $\mathcal{P}_{\bowtie p_{sat}}[ \Psi ]$ over $\omega$-regular properties, with $\bowtie \; \in \lbrace \leq, <, \geq, > \rbrace$, $p_{sat} \in [0, 1]$. For any initial state $x \in D$, we define the satisfaction relation $\models$ where
\begin{align*}
x \models \mathcal{P}_{\bowtie p_{sat}}[ \Psi ] \Leftrightarrow p^x_{\Psi} \bowtie p_{sat}\; ,
\end{align*}
\noindent with $p^x_{\Psi}$ being the probability that the word generated by a random path starting in $x$ satisfies property $\Psi$ (for a rigorous formalization, see, e.g., \cite{abate2011approximate}). In this paper, we concentrate on formulas of the type
\begin{align}
\label{eq2}
\phi = \mathcal{P}_{\bowtie p_{sat}}[ \Psi ] \;.
\end{align}
 
\noindent Our objective is to sort all initial states of system \eqref{eq1} into those that satisfy and those that do not satisfy specification \eqref{eq2}, denoted by the sets $Q^{yes}_{\phi}$ and $Q^{no}_{\phi}$ respectively.\\

\textbf{Problem}: \textit{Given a system of the form \eqref{eq1}, find the sets of initial states $Q^{yes}_{\phi} \subseteq D$ and $Q^{no}_{\phi}  \subseteq D$ that respectively satisfy and do not satisfy a formula $\phi$ of the form \eqref{eq2}.}\\

The domain $D$ of system \eqref{eq1} generally contains an uncountably infinite number of states and obtaining exact solutions to this problem may be infeasible for rich specifications. A common approximation approach consists in partitioning $D$ into a finite collection of states $P$ to obtain a finite abstraction of the stochastic dynamics. In this paper, we only consider partitions which are rectangular.\\

\begin{definition}[Rectangular Partition]
A \emph{rectangular partition} $P$ of the compact domain $D \subset \mathbb{R}^{n}$ is a collection of discrete states $P = \lbrace Q_{j} \rbrace_{j=1}^{m}, \; Q_{j} \subset D,$ satisfying
\begin{itemize}
\setlength{\itemsep}{0pt}
\item $Q_{j}$ is a compact rectangular set $\forall j = 1, \ldots, m$,
\item $\bigcup_{j=1}^{m} Q_{j} = D$,
\item $\text{int}(Q_{j}) \cap \text{int}(Q_{\ell}) = \emptyset \;\; \forall j, \ell, \;  j \not = \ell \;\;,$
\end{itemize}
\noindent where int denotes the interior.
For any continuous state $x$ belonging to a state $Q_j$, we write $x \in Q_j$.\\
\end{definition}
\noindent Henceforth, we assume that $D$ in system (1) is amenable to a rectangular partition.

Given a rectangular partition $P$ for a system (1), the likelihood of transitioning from a state $Q_j$ of $P$ to another state $Q_{\ell}$ generally varies with the continuous state abstracted by $Q_{j}$ from which the transition is actually taking place. This prevents using the partition $P$ to uniquely abstract \eqref{eq1} as a finite discrete-time MC. Instead, we produce an IMC abstraction of the system where the transition probabilities between states are constrained within some bounds. Fig. 1 depicts a schematic of a partition-based finite abstraction.\\

\begin{definition}[Interval-Valued Markov Chain]
An \textit{Interval-Valued Markov Chain (IMC)} \cite{dutreix2018} is a 5-tuple $\mathcal{I} = (Q, \widecheck{T}, \widehat{T}, \Sigma, L)$ where:
\begin{itemize}
\item $Q$ is a finite set of states,
%\item $q_{0}$ is the initial state,
\item $\widecheck{T}: Q \times Q \rightarrow [0, 1] $ maps pairs of states to a lower transition bound so that $\widecheck{T}_{Q_{j} \rightarrow Q_{\ell}} := \widecheck{T}(Q_{j}, Q_{\ell})$ denotes the lower bound of the transition probability from state $Q_{j}$ to state $Q_{\ell}$, and 
\item $\widehat{T}: Q \times Q \rightarrow [0, 1] $ maps pairs of states to an upper transition bound so that $\widehat{T}_{Q_{j} \rightarrow Q_{\ell}} := \widehat{T}(Q_{j}, Q_{\ell})$ denotes the upper bound of the transition probability from state $Q_{j}$ to state $Q_{\ell}$,
\item $\Sigma$ is a finite set of atomic propositions,
\item $L : Q \rightarrow \Sigma$ is a labeling function from states to $\Sigma$,
\end{itemize}
and $\widecheck{T}$ and $\widehat{T}$ satisfy $\widecheck{T}(Q_j,Q_\ell)\leq \widehat{T}(Q_j,Q_\ell)$ for all $Q_j,Q_\ell\in Q$ and 
\begin{equation}
\label{eq:36}
\sum_{Q_\ell\in Q} \widecheck{T}(Q_j,Q_\ell)\leq 1\leq \sum_{Q_\ell\in Q}\widehat{T}(Q_j,Q_\ell)  
\end{equation}
 for all $Q_j\in Q$.
 \end{definition}\mbox{}
 
\begin{definition}[Markov Chain]
A \textit{Markov Chain} (MC) $\mathcal{M} = (Q, T, \Sigma, L)$ is defined similarly to an IMC with the difference that the transition probability function or transition matrix $T: Q \times Q \rightarrow [0, 1] $ satisfies $0 \leq T(Q_j, Q_\ell) \leq 1$ for all $Q_j, Q_{\ell}\in Q$ and $\sum_{Q_{\ell} \in Q} T(Q_j, Q_\ell) = 1$ for all $Q_j\in Q$.
\end{definition}\mbox{}
 
A Markov Chain $\mathcal{M} = (Q, T, \Sigma, L)$ is said to be \textit{induced} by IMC $\mathcal{I} = (Q, \widecheck{T}, \widehat{T}, \Sigma, L)$ if they share the same $Q$, $\Sigma$ and $L$, and for all $Q_j,Q_\ell\in Q$, $\widecheck{T}(Q_j,Q_\ell) \leq T(Q_j, Q_\ell) \leq \widehat{T}(Q_j,Q_\ell).$ A transition matrix $T$ satisfying this inequality is said to be induced by $\mathcal{I}$.\\
%\begin{align}
%\widecheck{T}(Q_j,Q_\ell) \leq T(Q_j, Q_\ell) \leq \widehat{T}(Q_j,Q_\ell) \; \;.
%\label{eq:50}
%\end{align}
%\noindent A transition matrix $T$ satisfying \eqref{eq:50} is said to be induced by $\mathcal{I}$.\\

\noindent The notation $\mathcal{P}_{\mathcal{M}}(Q_i\models \Diamond U)$ for $U \subseteq Q$ denotes the probability of eventually reaching set $U$ from initial state $Q_i$ in Markov Chain $\mathcal{M}$. With a slight abuse of notation, when the probability of reaching $U$ is the same from all states in a set $G$, we write it as $\mathcal{P}_{\mathcal{M}}(G \models \Diamond U)$.\\

An IMC $\mathcal{I}$ is interpreted as an \textit{Interval Markov Decision Process} (IMDP) \cite{sen2006model} if, at each time step $k$, the environment non-deterministically chooses a transition matrix $T_k$ induced by $\mathcal{I}$ and the next transition occurs according to $T_k$. A mapping $\mathcal{\nu}$ from a finite path $\pi = q_0 q_1\ldots q_k$ in $\mathcal{I}$ to a transition matrix $T_k$ is called an \textit{adversary}. The set of all adversaries of $\mathcal{I}$ is denoted by $\mathcal{\nu}_{\mathcal{I}}$.

The probability of satisfying $\omega$-regular property $\Psi$ starting from initial state $Q_i$ in IMC $\mathcal{I}$ under adversary $\mathcal{\nu}$ is denoted by $\mathcal{P}_{\mathcal{I}[\mathcal{\nu}]}(Q_i \models \Psi)$.
The greatest lower bound and least upper bound on the probability of satisfying property $\Psi$ starting from initial state $Q_i$ in IMC $\mathcal{I}$ are denoted by $\widecheck{\mathcal{P}}_{\mathcal{I}}(Q_i \models \Psi) = \inf_{\nu \in \nu_{\mathcal{I}}} \mathcal{P}_{\mathcal{I}[\mathcal{\nu}]}(Q_i \models \Psi)$ and $\widehat{\mathcal{P}}_{\mathcal{I}}(Q_i \models \Psi) = \sup_{\nu \in \nu_{\mathcal{I}}} \mathcal{P}_{\mathcal{I}[\mathcal{\nu}]}(Q_i \models \Psi)$ respectively.\\

\begin{definition}[IMC Abstraction]
Given the system \eqref{eq1} evolving on a domain $D \subset  \mathbb{R}^{n}$ and a partition $P=\{Q_j\}_{j=1}^m$ of $D$, an IMC $\mathcal{I}=(Q, \widecheck{T}, \widehat{T}, \Sigma, L)$ is an \emph{abstraction} of \eqref{eq1} if:  
\begin{itemize}[leftmargin=1.1em]
\item $P=Q$, i.e., the set of states of the IMC is the partition $P$, 
%\item $Q_{i} \in P \Rightarrow  Q_{i} \in \mathscr{I}$ 
\item For all $Q_j,Q_\ell\in P$,
\begin{align}
\label{eq:3}   \widecheck{T}_{Q_{j} \rightarrow Q_{\ell}}  &\leq  \inf_{x\in Q_j} Pr( \mathcal{F} (x,w) \in Q_\ell ),\text{ and}\\
\label{eq:3-2}   \widehat{T}_{Q_{j} \rightarrow Q_{\ell}}  &\geq \sup_{x\in Q_j} Pr( \mathcal{F} (x,w) \in Q_\ell ),
\end{align}
where $Pr( \mathcal{F}(x,w) \in Q_\ell)$ for fixed $x$ is the probability that \eqref{eq1} transitions from $x$ to a state $x'=F(x,w)$ in $Q_{\ell}$,
\item For all $Q_j \in P$ and for any two states $x_i, x_{\ell} \in Q_j$, it holds that $L(Q_j) := L(x_i) = L(x_{\ell})$, that is, the partition respects the boundaries induced by the labeling function,
\item $\mathcal{I}$ is interpreted as an IMDP.
\end{itemize}\mbox{}
%The IMC abstraction $\mathcal{I}$ is said to be \emph{tight} if \eqref{eq:3} and \eqref{eq:3-2} hold with equality.
\end{definition}

The fact that two continuous states within the same discrete state of the abstraction may engender different transition probabilities is captured by interpreting the IMC as an IMDP.

\noindent One can always build a trivial IMC abstraction where all transitions are set to range from 0 to 1. The ease of finding tighter bounds on the transitions between states is dictated by the dynamics of the system and the geometry of the states in the partition. A contribution of this paper is to show that \textit{affine-in-disturbance stochastic mixed monotone systems} evolving on a domain amenable to a rectangular partition can efficiently be abstracted by IMCs. A formal definition of such systems is provided in the next section.\\

\textbf{Subproblem 1}: \textit{Given an affine-in-disturbance stochastic mixed monotone system, construct a non-trivial IMC abstraction from a rectangular partition of its domain.}\\

Performing verification on an IMC abstraction provides probabilistic guarantees with respect to the original system's states. A consequence of model checking an IMC $\mathcal{I}$ is that the probability of satisfying property $\Psi$ from any of its initial states $Q_j$ must be specified as an interval $I_{j} = [p^{j}_{min}, p^{j}_{max}]$, where $\mathcal{P}_{\mathcal{I}[\mathcal{\nu}]}(Q_j \models \Psi) \in I_{j}, \; \forall \nu \in \nu_{\mathcal{I}}$. For any initial state $Q_j$ in an IMC $\mathcal{I}$, we define the satisfaction relation $\models$ for formulas of the type \eqref{eq2} where
\begin{align*}
Q_j \models \mathcal{P}_{\bowtie p_{sat}}[ \Psi ] \Leftrightarrow (p^{Q_j}_{\Psi})_{\nu} \bowtie p_{sat}\;\; \forall \nu \in {\nu_{\mathcal{I}}} ,
\end{align*} 
\noindent with $(p^{Q_j}_{\Psi})_{\nu}$ being the probability that the word generated by a random path starting in $Q_j$ satisfies property $\Psi$ under adversary $\nu$. We denote the set of initial states satisfying $\phi$ in $\mathcal{I}$ by $(Q^{yes}_{\phi})_{\mathcal{I}}$, while states that do not satisfy $\phi$ are in $(Q^{no}_{\phi})_{\mathcal{I}}$.  Note that any $Q_{j}$ such that $p_{sat} \in \; ]p^{j}_{min}, p^{j}_{max}[$ if $\bowtie \; \in \lbrace \leq, \geq\rbrace$, or $p_{sat} \in [p^{j}_{min}, p^{j}_{max}]$ if $\bowtie \; \in \lbrace <,  > \rbrace$, is undecided with respect to $\phi$ in \eqref{eq2} and we write $Q_{j} \in (Q^{?}_{\phi})_{\mathcal{I}}$. The remaining states either satisfy $\phi$ or do not satisfy $\phi$.\\

\begin{fact}
Let $\mathcal{I}$ be an IMC abstraction of \eqref{eq1} induced by a partition $P=\{Q_j\}_{j=1}^m$ of $D$. For any formula of the form \eqref{eq2}, it holds that:
\begin{itemize}
\item $Q_{j} \in (Q^{yes}_{\phi})_{\mathcal{I}} \Rightarrow x \in Q^{yes}_{\phi} \;\; \forall x \in Q_{j}$
\item $Q_{j} \in (Q^{no}_{\phi})_{\mathcal{I}} \; \Rightarrow x \in Q^{no}_{\phi} \;\;\; \forall x \in Q_{j} \; .$
\end{itemize}
\end{fact}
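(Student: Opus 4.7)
The plan is to reduce the statement to the observation that any continuous trajectory of \eqref{eq1} from an initial state $x\in Q_j$ can be realized inside the IMC $\mathcal{I}$ by a suitably chosen adversary, and then invoke the fact that $(Q^{yes}_\phi)_\mathcal{I}$ and $(Q^{no}_\phi)_\mathcal{I}$ are defined via a universal quantifier over all adversaries. Specifically, I would fix $x\in Q_j$ and build an adversary $\nu_x\in \nu_\mathcal{I}$ with $p^x_\Psi = (p^{Q_j}_\Psi)_{\nu_x}$. The two bullets then follow immediately: in the $(Q^{yes}_\phi)_\mathcal{I}$ case the relation $(p^{Q_j}_\Psi)_\nu \bowtie p_{sat}$ holds for every $\nu$, hence in particular for $\nu_x$, giving $p^x_\Psi \bowtie p_{sat}$; the $(Q^{no}_\phi)_\mathcal{I}$ case is symmetric, noting that membership in $(Q^{no}_\phi)_\mathcal{I}$ means that $(p^{Q_j}_\Psi)_\nu$ fails the comparison for every $\nu$.

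To construct $\nu_x$, I would set its value on a finite discrete path $\pi = Q_{j_0}\cdots Q_{j_k}$ with $Q_{j_0}=Q_j$ by letting the row indexed by $Q_{j_k}$ of the chosen transition matrix equal the average of $y\mapsto Pr(\mathcal{F}(y,w)\in Q_\ell)$ against the conditional distribution of $x[k]$ given that the continuous trajectory from $x$ has visited $Q_{j_0},\ldots,Q_{j_k}$ in order; the remaining rows can be filled with any fixed row-stochastic vector induced by $\mathcal{I}$. By \eqref{eq:3}--\eqref{eq:3-2}, for each $y\in Q_{j_k}$ we have $\widecheck{T}_{Q_{j_k}\to Q_\ell}\le Pr(\mathcal{F}(y,w)\in Q_\ell)\le \widehat{T}_{Q_{j_k}\to Q_\ell}$, and since averaging preserves these inequalities entrywise, $\nu_x$ is indeed an adversary of $\mathcal{I}$; the row sums equal $1$ because $\mathcal{F}(y,w)\in D=\bigcup_\ell Q_\ell$ almost surely.

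With $\nu_x$ in hand, I would match the two induced distributions on infinite discrete paths. A telescoping argument based on the tower property shows that the probability under $\mathcal{I}[\nu_x]$ of any cylinder determined by a prefix $Q_{j_0}\cdots Q_{j_k}$ coincides by construction with the probability that the continuous trajectory from $x$ visits the corresponding cells in order. A standard $\pi$-$\lambda$ extension then yields agreement on the full Borel $\sigma$-algebra of discrete paths. Because the partition respects $L$ (so $L(Q_{j_i}) = L(x[i])$ for any $x[i]\in Q_{j_i}$), the generated word $L(x[1])L(x[2])\cdots$ is a function only of the sequence of visited cells, so $\Psi$ is measurable with respect to this $\sigma$-algebra and $p^x_\Psi = (p^{Q_j}_\Psi)_{\nu_x}$ follows.

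The main obstacle is technical rather than conceptual: one has to define the conditional laws underlying $\nu_x$ carefully when the conditioning event (visiting a prescribed discrete prefix) has probability zero---in that case the corresponding row of $\nu_x(\pi)$ can be chosen as any induced row without affecting the path measure---and verify that the two stochastic processes are genuinely coupled in law on discrete paths. These are routine measure-theoretic steps, but they constitute the only non-trivial bookkeeping in the proof.
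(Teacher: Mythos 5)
The paper states this result as a Fact with no accompanying proof (it is the standard soundness property of IMC abstractions, taken as given), so there is no in-paper argument to compare against; your proposal supplies the missing justification, and it is correct. The construction you give --- a history-dependent adversary whose row at the current cell is the conditional one-step kernel of the continuous process averaged over the conditional law of $x[k]$ given the discrete prefix --- is exactly the right way to realize the continuous trajectory's discrete shadow as an induced path measure: inequalities \eqref{eq:3}--\eqref{eq:3-2} are preserved under averaging, the paper's adversaries are explicitly allowed to depend on the finite path, and the cylinder probabilities telescope as you describe, so $p^x_\Psi = (p^{Q_j}_\Psi)_{\nu_x}$ and the universal quantifier over $\nu\in\nu_{\mathcal{I}}$ in the definition of $(Q^{yes}_\phi)_{\mathcal{I}}$ and $(Q^{no}_\phi)_{\mathcal{I}}$ does the rest. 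The only loose end, which you partially flag, is that the cells of a rectangular partition share boundaries, so ``the cell visited by $x[i]$'' and the row sums $\sum_\ell Pr(\mathcal{F}(y,w)\in Q_\ell)$ are only well-behaved up to sets of measure zero; one should fix a measurable assignment of boundary points to cells so that the discrete shadow is well defined and the averaged rows are exactly stochastic. This is consistent with the paper's own conventions (it writes $x\in Q_j$ without resolving the ambiguity) and does not affect the conclusion.
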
\mbox{}

\noindent Given an IMC abstraction $\mathcal{I}$ of \eqref{eq1} generated from a partition $P$ of $D$, our approach for addressing the Verification Problem is thus to implement a technique for determining non-trivial values of $p^{j}_{min}$ and $p^{j}_{max}$ and sort all states of $P$ into the sets $(Q^{yes}_{\phi})_{\mathcal{I}}, (Q^{no}_{\phi})_{\mathcal{I}}$ and $(Q^{?}_{\phi})_{\mathcal{I}}$.\\

\textbf{Subproblem 2}: \textit{Given an IMC $\mathcal{I}$ interpreted as an IMDP and an $\omega$-regular property $\Psi$, find the greatest lower bound $\widecheck{\mathcal{P}}_{\mathcal{I}}(Q_j \models \Psi)$ and the least upper bound  $\widehat{\mathcal{P}}_{\mathcal{I}}(Q_j \models \Psi)$ on the probability of satisfying $\Psi$ from any initial state $Q_j$ in $\mathcal{I}$.}\\

%\noindent To calculate such bounds, we resort to an automaton-based approach where property $\Psi$ is represented as a DRA $\mathcal{A}$. We present a verification technique that can be decomposed in two steps: a graph search in the cartesian product between the abstraction $\mathcal{I}$ and $\mathcal{A}$ that aims to find a particular class of states, and a reachability computation on these sets.

It may be the case that the total volume of uncertain states in $Q^{?}_{\phi}$ is unsatisfactorily large due to a crude partition $P$ of the continuous state-space. In such event, we aim to produce a finer partition $P'$ from $P$ which gives rise to a new IMC abstraction of the system with a greater number of states with tighter transition bounds between them. Verifying this finer IMC yields a decreased volume of uncertain states. This refinement procedure is applied to subsequent partitions until some discretionary level of precision is reached.\\

\textbf{Subproblem 3}: \textit{Given a system of the form \eqref{eq1} with an IMC abstraction $\mathcal{I}$ of (1) and the corresponding sets $Q^{yes}_{\phi}$, $Q^{no}_{\phi}$ and $Q^{?}_{\phi}$ obtained by solving Subproblem 2, refine the partition $P$ of $D$ until a predefined threshold of precision has been met.}\\

\noindent In order to avoid state-space explosion, a careful choice of the states to be refined in partition $P$ has to be made. In this work, we suggest a novel method for targeting the states that have the greatest potential of reducing the uncertainty of an abstraction with respect to the particular specification $\Psi$.

\section{IMC Abstraction of Mixed Monotone Systems Affine in Disturbance}

In this section, we study a large class of stochastic systems that proves amenable to efficient computation. In particular, we consider affine-in-disturbance systems of the form
\begin{align}
x[k+1] = \mathcal{F}(x[k]) + w[k]
\label{eq:4}
\end{align}
with specific assumptions on $\mathcal{F}$ and the distribution of the disturbance $w$. We introduce several definitions and specify these necessary assumptions for addressing Subproblem 1.\\

\begin{definition}[Mixed monotone function]
A function $\mathcal{F}: D \rightarrow D$ is \emph{mixed monotone} if there exists a \emph{decomposition function} $g: D \times D \rightarrow D$ satisfying \cite{smith2008global, coogan2015efficient}:
\begin{itemize}
\setlength{\itemsep}{0pt}
\item $\forall x \in D:  \mathcal{F}(x) = g(x,x)$
\item $\forall x^{1}, x^{2}, y \in D: x^{1} \leq x^{2}$  implies $g(x^{1}, y) \leq g(x^{2}, y)$
\item $\forall x, y^{1}, y^{2} \in D: y^{1} \leq y^{2}$ implies $g(x, y^{2}) \leq g(x, y^{1})$
\end{itemize}
\end{definition}\mbox{}

Mixed monotonicity generalizes the notion of monotonicity in dynamical systems, which is recovered when $g(x,y)=\mathcal{F}(x)$ for all $x,y$. Systems with monotone state update maps exhibit considerable structure useful for analysis \cite{Hirsch:2005ek, Angeli:2003fv, Smith:2008fk}. Systems with mixed monotone state update maps have been shown to enjoy many of these same structural properties \cite{coogan2015efficient, smith2008global}. For example, for mixed monotone $\mathcal{F}$ with decomposition function $g$, for $x,y, z\in D$ satisfying $x\leq z\leq y$, we have $g(x,y)\leq \mathcal{F}(z) \leq g(y,x)$. This leads to the following proposition.\\

\begin{proposition}[{\cite[Theorem 1]{coogan2015efficient}}]
\label{prop:reach}
  Let $\mathcal{F}: D \rightarrow D$ be mixed monotone with decomposition function $g: D \times D \rightarrow D$, and let $a,b\in D$ satisfy $a\leq b$. Then
  \begin{align}
  \{\F(x):a\leq x\leq b \}\subseteq \{z: g(a,b)\leq z\leq g(b,a)\}.    
  \end{align}
% For any $x,y, z\in D,$
%   \begin{align}
%     x\leq z\leq y \implies g(x,y)\leq F(z) \leq g(y,x).
%   \end{align}
\end{proposition}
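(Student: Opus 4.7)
The plan is to prove the set inclusion pointwise: fix an arbitrary $x \in D$ with $a \leq x \leq b$, and show that $g(a,b) \leq \mathcal{F}(x) \leq g(b,a)$. The first step is to invoke the identity property of the decomposition function to rewrite $\mathcal{F}(x) = g(x,x)$, which reduces the task to sandwiching $g(x,x)$ between $g(a,b)$ and $g(b,a)$ using only the monotonicity and antitonicity properties of $g$.

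For the lower bound, I would chain two inequalities. Applying monotonicity in the first argument with $a \leq x$ yields $g(a,b) \leq g(x,b)$, and then applying antitonicity in the second argument with $x \leq b$ yields $g(x,b) \leq g(x,x)$. Transitivity gives $g(a,b) \leq g(x,x) = \mathcal{F}(x)$. For the upper bound, I would symmetrically chain: monotonicity in the first argument with $x \leq b$ gives $g(x,x) \leq g(b,x)$, and antitonicity in the second argument with $a \leq x$ gives $g(b,x) \leq g(b,a)$, so $\mathcal{F}(x) = g(x,x) \leq g(b,a)$. Since $x$ was arbitrary in $\{x: a \leq x \leq b\}$, this establishes the claimed containment.

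The main obstacle is essentially nonexistent, as the result is a near-immediate consequence of the three defining properties of a decomposition function. The only subtlety worth flagging is bookkeeping: one must track carefully which slot of $g$ is being varied and in which direction, since the second-slot inequalities reverse. A clean presentation would therefore introduce a single intermediate point (namely $g(x,b)$ for the lower bound and $g(b,x)$ for the upper bound) so that each step changes exactly one argument of $g$ and invokes exactly one of the two monotonicity axioms.
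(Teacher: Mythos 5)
Your proof is correct and is essentially the argument the paper relies on: the paper cites this result from \cite{coogan2015efficient} and, in the preceding paragraph, states exactly the sandwich inequality $g(a,b)\leq \mathcal{F}(x)\leq g(b,a)$ for $a\leq x\leq b$ that you derive by chaining the monotonicity and antitonicity properties of $g$ through the intermediate points $g(x,b)$ and $g(b,x)$. No gaps; the two-step pointwise chaining is the standard and intended route.
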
\mbox{}

Proposition \ref{prop:reach} implies that the one-step reachable set from the rectangular region bounded between $a$ and $b$ is overapproximated by the rectangular region  bounded by the two points $g(a,b)$ and $g(b,a)$. This property will prove key for efficient computation of IMC abstractions.\\

\begin{definition}[Unimodal symmetric distribution]
For a random disturbance $\omega \in \Omega \subset \mathbb{R}$ with $\Omega$ an interval, its probability density function $f_{\omega} : \mathbb{R} \rightarrow \mathbb{R}$ is \emph{unimodal} if $f_{\omega}$ is differentiable on $\Omega$ and  there exists a unique number $c \in \mathbb{R}$, referred as the mode of the distribution, such that, for $x\in \Omega$:
\begin{itemize}
\item $x < c \Rightarrow f_{\omega}'(x) \geq 0$,
\item $x = c \Rightarrow f_{\omega}'(x) = 0$, and
\item $x > c  \Rightarrow f_{\omega}'(x) \leq 0$.
\end{itemize}
\noindent We only consider distributions without a ``flat'' peak, that is, unimodal distributions with a unique mode $c$. The probability density function $f_{\omega}$ is \emph{symmetric} if there exists a number $d \in \mathbb{R}$ such that $f_{\omega}(d-x) = f_{\omega}(d+x)$ for all $ x$.% its probability density function $f_{w} : \mathbb{R} \rightarrow \mathbb{R}$ of $w$ verifies 
\end{definition}\mbox{}

Note that if $f_\omega$ is unimodal with mode $c$ and symmetric, then it must be that $f_{\omega}(c-x) = f_{\omega}(c+x)$.

Henceforth, we make the following assumptions. \\

\begin{assumption}
\label{assum:1}
  $\mathcal{F}(x)$ in \eqref{eq:4} is mixed monotone with decomposition function $g(x,y)$. Furthermore, the domain $D$ of \eqref{eq:4} can be partitioned into a rectangular partition.
\end{assumption}\mbox{}

\begin{assumption}
\label{assum:indep}
  The random disturbance $w[k]$ in \eqref{eq:4} is of the form $w[k]= \begin{bmatrix}w_{1}[k]& w_{2}[k]& \ldots& w_{n}[k]\end{bmatrix}^{T}$, where each $w_{i} \in W_{i} \subset \mathbb{R}$ has probability density function $f_{w_i}(x_i)$, $W_i$ is an interval, and the collection $\{w_i\}_{i=1}^n$ is mutually independent. Denote by $F_{w_i}(x)=\int_{-\infty}^x f_{w_i}(\sigma) d\sigma$ the cumulative distribution function for $w_i$. %Moreover, $f_{w_i}(x)$ denotes the probability density function for each $w_i$, and $f_{w_i}(x)$ is piecewise continuously differentiable.
\end{assumption}\mbox{}

\begin{assumption}
\label{assum:uni}
%  The random disturbance $w[k]$ in \eqref{eq:4} is of the form $w[k]= [w_{1}[k], w_{2}[k], \ldots, w_{n}[k]]^{T}$, where 
The probability density function $f_{w_i}$ for each random variable $w_{i}$ is symmetric and unimodal with mode $c_i$.%, and the collection $\{w_i\}_{i=1}^n$ is mutually independent.
%is symmetric and unimodal with probability density  $f_{w_i}$, and 
\end{assumption}\mbox{}

We now address Subproblem 1 for systems of the form \eqref{eq:4} under Assumptions 1-3. We decompose our procedure for bounding the transition probability from a state $Q_{1}$ to a state $Q_{2}$ in two steps: first, we compute the rectangular over-approximation of the $\mathcal{F}$-reachable set from state $Q_{1}$ by taking advantage of the mixed monotonicity property. Next, we determine the positions of $f_w$ within this rectangular region that respectively minimize and maximize its overlap with $Q_{2}$. In the next section, we exploit the characteristics of $w$ previously evoked to obtain an efficient computational procedure for computing these extremum points.\\

\begin{proposition}
\label{prop:1}
Consider system \eqref{eq:4} under Assumptions \ref{assum:1}--\ref{assum:uni}. Let $Q_1=\{x: a^1\leq x\leq b^1\}$ and $Q_2=\{x:a^2\leq x\leq b^2\}$ be two nonempty rectangular sets with least point $a^j$ and greatest point $b^j$ for $j=1,2$. Then
\begin{align}
\nonumber  &\min_{x\in Q_1} Pr(\mathcal{F}(x)+w\in Q_2)\\
\label{eq:2}&\geq \prod_{i=1}^n\min_{\substack{z_i\in[ g_i(a^1,b^1), g_i(b^1,a^1)]}}\int_{a_i^2}^{b_i^2}f_{w_i}(x-z_i) dx 
\end{align}
and
\begin{align}
\nonumber  &\max_{x\in Q_1} Pr(\mathcal{F}(x)+w\in Q_2)\\
\label{eq:5}&\leq \prod_{i=1}^n\max_{\substack{z_i\in[ g_i(a^1,b^1), g_i(b^1,a^1)]}}\int_{a_i^2}^{b_i^2}f_{w_i}(x-z_i) dx 
\end{align}
where $g_i$ denotes the $i$-th element of $g(x,y)$, the decomposition function of $\mathcal{F}$.
\end{proposition}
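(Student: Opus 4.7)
The plan is to reduce the $n$-dimensional probability to a product of scalar integrals via independence, enlarge the domain of optimization using the mixed-monotonicity over-approximation from Proposition \ref{prop:reach}, and then exchange the optimization with the product using non-negativity. Assumption \ref{assum:uni} plays no role here; symmetry and unimodality only become relevant in the next section for explicitly evaluating the resulting one-dimensional min and max.

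\emph{Step 1 (factorization via independence).} Fix $x\in Q_1$ and set $z := \mathcal{F}(x)$. Because the components of $w$ are independent (Assumption \ref{assum:indep}) and $Q_2$ is a product rectangle, Fubini gives
\begin{align}
Pr(\mathcal{F}(x)+w\in Q_2) \;=\; \prod_{i=1}^n \int_{a_i^2}^{b_i^2} f_{w_i}(y_i-z_i)\, dy_i \;=:\; \prod_{i=1}^n h_i(z_i),
\end{align}
so the probability depends on $x$ only through $\mathcal{F}(x)$, and is a product of scalar functions each depending on a single coordinate of $z$.

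\emph{Step 2 (reachable-set enlargement).} By Assumption \ref{assum:1} and Proposition \ref{prop:reach}, $\mathcal{F}(Q_1)\subseteq R$, where $R := \prod_{i=1}^n [g_i(a^1,b^1),\,g_i(b^1,a^1)]$. As $x$ ranges over $Q_1$, the vector $\mathcal{F}(x)$ ranges over a subset of $R$, so
\begin{align}
\min_{x\in Q_1}\prod_{i=1}^n h_i(\mathcal{F}_i(x)) \;\geq\; \min_{z\in R}\prod_{i=1}^n h_i(z_i),
\end{align}
and dually $\max_{x\in Q_1}\prod_i h_i(\mathcal{F}_i(x))\leq \max_{z\in R}\prod_i h_i(z_i)$.

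\emph{Step 3 (product-form optimization).} Each $h_i\geq 0$ as an integral of a density, each $h_i$ depends on a single coordinate of $z$, and $R$ is a Cartesian product of intervals. Fixing all coordinates but one and replacing the free coordinate by a minimizer of $h_i$ weakly decreases the product; iterating this one-coordinate-at-a-time argument for $i=1,\ldots,n$ yields $\min_{z\in R}\prod_i h_i(z_i)=\prod_i \min_{z_i\in[g_i(a^1,b^1),g_i(b^1,a^1)]} h_i(z_i)$, and the analogous identity for the max. Chaining Steps 1--3 then produces \eqref{eq:2} and \eqref{eq:5}. The main obstacle is conceptual rather than technical: the only source of conservatism is the enlargement in Step 2, since $\mathcal{F}(Q_1)$ may be a strict subset of $R$; tightening this would require knowing the exact image $\mathcal{F}(Q_1)$, which is precisely what mixed-monotone abstraction trades away for tractability.
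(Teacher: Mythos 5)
Your proposal is correct and follows essentially the same route as the paper's proof: over-approximate $\mathcal{F}(Q_1)$ by the rectangle $[g(a^1,b^1),g(b^1,a^1)]$ via Proposition \ref{prop:reach}, factor the probability into scalar integrals using the mutual independence of the $w_i$, and decouple the optimization coordinate-wise over the product domain. The only (immaterial) difference is that you factorize before enlarging the domain while the paper enlarges first; your observation that Assumption \ref{assum:uni} is not needed here is also consistent with the paper, which invokes symmetry and unimodality only in Lemma \ref{lem:main}.
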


%All proofs in this section presume strict monotonicity of the probability density function $f_{w}(x)$ on the intervals $(-\infty; c)$ and $(c; \infty)$, that is $f'_{w}(x) > 0$ for $x < c$ and $f'_{w}(x) < 0$ for $x > c$, where $c$ is the mode of the distribution. 
All proofs are found in the appendix. Before generalizing to higher dimensions, we treat a 1-dimensional version of our original problem. In Lemma 1, we prove that for a fixed interval $[a, b] \subset\mathbb{R}$, there exists a unique position for a unimodal and symmetric distribution which maximizes its integral over $[a, b]$.\\

\begin{figure}
\centering
\includegraphics[scale=0.3, trim={0 5.5cm 0 5cm}, clip]{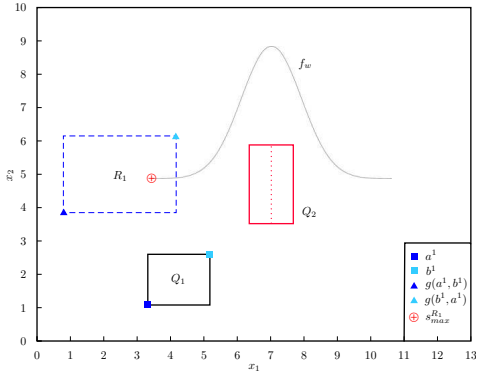}
\caption{Depiction of the procedure for computing an upper
bound on the probability of transition from $Q_1$ to $Q_2$. First, the one-step reachable set $R_1$ from $Q_1$ is over-approximated by evaluating the
decomposition function at only two extremal points. Then, the distribution of $z+w$ is positioned as close to the center of $Q_2$ as possible under the restriction that $z \in R_1$. A lower bound on the transition probability is achieved by positioning the distribution as far from the center of $Q_2$ as possible.}
\end{figure}

\begin{lemma}
\label{lem:main}
Let $\omega \in \Omega \subset \mathbb{R}$ with $\Omega$ an interval be a random variable with symmetric and unimodal probability density function $f_{\omega} : \mathbb{R} \rightarrow \mathbb{R}$ and mode $c\in\mathbb{R}$. For any $a,b\in\mathbb{R}$ satisfying $a\leq b$ and any $r_1,r_2\in\mathbb{R}$ satisfying $r_1\leq r_2$, let
\begin{align}
  s_{max} = \frac{a+b}{2} - c
\end{align}
and define
\begin{align}
  \label{eq:9}
s^{r}_{max}&=\underset{s \in [r_{1}, r_{2}]}{\argmin} |s_{max}- s| =\begin{cases}
s_{max},&\text{if} \;\; s_{max} \in [r_1,r_2] \\
r_{2}, &\text{if} \;\; s_{max} > r_{2}\\
r_{1},& \text{if} \;\; s_{max}< r_1,
\end{cases}\\
\label{eq:11} s^{r}_{min} &=\underset{s \in [r_{1}, r_{2}]}{\argmax} |s_{max} - s| =\begin{cases}
r_{1},&\text{if} \;\; s_{max} < \frac{r_{1}+r_{2}}{2} \\
r_{2}, &\text{otherwise}.
\end{cases}
\end{align}
Then
\begin{align}
\label{eq:6}
\underset{s \in [r_1,r_2]}{\max}\int_{a}^{b} f_{\omega}(x-s) \; dx&= \int_{a}^{b} f_{\omega}(x-s^{r}_{max}) \; dx\\
\label{eq:10}\underset{s \in  [r_1,r_2]}{\min}\int_{a}^{b} f_{w}(x-s) \; dx &= \int_{a}^{b} f_{\omega}(x-s^{r}_{min}) \; dx.
\end{align}

%  and $s_{max} = \frac{a+b}{2} - c$. For $s \in R = [r_{1}, r_{2}]\ \subset \mathbb{R}$, 

% \begin{align*}
% \underset{s \in \mathbb{R}}{\max}\int_{a}^{b} f_{w}(x-s) \; dx = \underset{s \in [r_{1}, r_{2}]}{\min} |s_{max} - s| = 
% \end{align*}

% \begin{equation*}
% \begin{cases}
% & s_{max}, \;\; \text{if} \;\; s_{max} \in R \\
% & r_{2}, \; \;\;\;\;\; \text{if} \;\; s_{max} \geq r_{2}\\
% & r_{1}, \; \;\;\;\;\; \text{otherwise}
% \end{cases}
% \end{equation*}

% and 

% \begin{equation*}
% \begin{cases}
% & r_{1}, \;\;\;\;\; \text{if} \;\; s_{max} < \frac{r_{1}+r_{2}}{2} \\
% & r_{2}, \;\;\;\;\; otherwise
% \end{cases}
% \end{equation*}
\end{lemma}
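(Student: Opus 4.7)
The plan is to recast the integral as a function of the shift parameter $s$ and show that this function is unimodal in $s$, symmetric about $s_{max}$, and non-increasing as $|s-s_{max}|$ grows. All three conclusions of the lemma will then follow by elementary optimization of a unimodal function over an interval.

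First I would introduce $I(s):=\int_a^b f_\omega(x-s)\,dx$ and change variables $v=x-s-c$ to rewrite
\begin{align}
I(s)=\int_{-\ell+\tau(s)}^{\ell+\tau(s)} g(v)\,dv, \qquad \tau(s):=s_{max}-s,\ \ \ell:=\tfrac{b-a}{2},
\end{align}
where $g(v):=f_\omega(v+c)$ is, by Assumption~\ref{assum:uni}, symmetric about $0$ and unimodal with mode $0$. Substituting $v\mapsto -v$ in the integral and using $g(-v)=g(v)$ gives $I(s_{max}-t)=I(s_{max}+t)$, so $I$ is symmetric about $s_{max}$. This step is routine and identifies $s_{max}=(a+b)/2-c$ as the unique symmetry axis.

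Next I would show $I$ is non-increasing for $s\geq s_{max}$, which combined with symmetry yields unimodality with peak at $s_{max}$. Differentiating under the integral,
\begin{align}
I'(s)=f_\omega(a-s)-f_\omega(b-s)=g(\ell+\tau)-g(\ell-\tau)
\end{align}
after substituting $\tau=s_{max}-s$ and using $a-s-c=-\ell+\tau$, $b-s-c=\ell+\tau$. For $s\geq s_{max}$ we have $\tau\leq 0$, so I would split into $-\ell\leq\tau\leq 0$ (both $\ell+\tau$ and $\ell-\tau$ are $\geq 0$, apply non-increasingness of $g$ on $[0,\infty)$) and $\tau<-\ell$ (apply symmetry $g(\ell-\tau)=g(\tau-\ell)$ then compare on $[0,\infty)$). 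In each subcase $I'(s)\leq 0$, proving monotonicity. The mild technical care needed here — keeping track of signs when $s_{max}$ is not in $[a,b]$ relative to $c$, which is exactly the regime where the symmetry trick around $c$ is unavoidable — is the main obstacle, but it is a straightforward case split.

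Finally, with $I$ proved unimodal and symmetric around $s_{max}$, I obtain both optimization formulas immediately. For the maximum \eqref{eq:6}: the constrained maximizer over $[r_1,r_2]$ is the point of $[r_1,r_2]$ closest to $s_{max}$, which is $s_{max}$ when $s_{max}\in[r_1,r_2]$, and the nearer endpoint otherwise; this matches the three cases of $s_{max}^r$ in \eqref{eq:9}. For the minimum \eqref{eq:10}: because $I$ is strictly decreasing in $|s-s_{max}|$ away from $s_{max}$, its minimizer over the compact interval must be an endpoint, and specifically the endpoint farther from $s_{max}$, which is determined by comparing $s_{max}$ to the midpoint $(r_1+r_2)/2$, giving \eqref{eq:11}. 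No further computation is needed beyond invoking the monotonicity established in the previous paragraph.
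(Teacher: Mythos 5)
Your approach is essentially the same as the paper's: the paper writes $H(s_{max})-H(s)$ as $\int_0^{s_{max}-s}\big[f_\omega(\sigma+c-\tfrac{b-a}{2})-f_\omega(\sigma+c+\tfrac{b-a}{2})\big]\,d\sigma$ and sign-analyzes that odd integrand via symmetry and unimodality, which is exactly the integral form of your derivative computation $I'(s)=f_\omega(a-s)-f_\omega(b-s)$; both then conclude that $H$ decreases monotonically in $|s-s_{max}|$ and read off the constrained max and min. One slip: your displayed derivative $I'(s)=g(\ell+\tau)-g(\ell-\tau)$ has the sign reversed --- since $f_\omega(a-s)=g(-\ell+\tau)=g(\ell-\tau)$, the correct expression is $g(\ell-\tau)-g(\ell+\tau)$, and as written your formula would yield $I'\geq 0$ for $s\geq s_{max}$; with the sign fixed, your two-case analysis goes through verbatim. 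A minor advantage of the paper's integral formulation is that it never needs $I$ to be pointwise differentiable, which matters when $f_\omega$ jumps at the boundary of $\Omega$ (e.g., the truncated Gaussians of the case study); in your version you should note that the derivative formula holds a.e.\ and that $I$ is absolutely continuous, which is enough for the monotonicity conclusion.
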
\mbox{}

When $s_{max}\in[r_1,r_2]$ in Lemma \ref{lem:main}, the lemma confirms the intuitive idea that the integral of a unimodal, symmetric distribution over some interval $I=[a,b]$ is maximized when the peak of its probability  distribution lies at the center of $I$. However, for the type of systems considered in this work, the shift of such distributions will always be restricted to take values within a given rectangular set $[r_1,r_2]$ so that, when $s_{max}\not \in [r_1,r_2]$, the shift $s\in[r_1,r_2]$ maximizing the overlap of the density function over $I$ is the one closest to the global maximizing shift $s_{max}$. Conversely, a shift $s\in[r_1,r_2]$ minimizing this overlap is the one furthest from $s_{max}$.

Theorem 1 combines Lemma \ref{lem:main} and Proposition \ref{prop:1} to provide a procedure for constructing an IMC abstraction for \eqref{eq:4}  given a rectangular partition of its domain $D$.\\% and is one of the main results of the paper.

\begin{theorem}
\label{thm:1}
Consider system \eqref{eq:4} under Assumptions \ref{assum:1}--\ref{assum:uni} and let $P=\{Q_j\}_{j=1}^m$ be a rectangular partition of $D$ with each $Q_j=\{x: a^j\leq x\leq b^j\}$ for some $a^j,b^j\in\mathbb{R}^n$ satisfying $a^j\leq b^j$. %For each $f_{w_{i}}$, define its cumulative distribution function $F_{w_{i}}: \mathbb{R} \rightarrow \mathbb{R}$.  
For all $Q_j,Q_\ell\in P$, let
\begin{align}
\label{eq:18}    s^\ell_{i,max}&=\frac{a^\ell_i+b^\ell_i}{2}-c_i\text{ for $i=1,\ldots,n$},\\
\label{eq:19}\widecheck{r}^j&=g(a^j,b^j),\\
\label{eq:20} \widehat{r}^j&=g(b^j,a^j),
\end{align}
and define
%Let $Q_{1} = \lbrace x : a^{1} \leq x \leq b^{1} \rbrace$ and $Q_{2} = \lbrace x : a^{2} \leq x \leq b^{2} \rbrace$ be two states from a rectangular partition $P \subset \mathbb{R}^{n}$ under dynamics (4). Disturbance $w = [w_{1}, w_{2}, \ldots, w_{n}]^{T}$ has corresponding probability density functions $\lbrace f_{w_{i}} \rbrace_{i=1}^{n}, \;  f_{w_{i}} : \mathbb{R} \rightarrow \mathbb{R}$. We denote the one-step reachable set of $Q_{1}$ by $R_{1} = \lbrace x : \widecheck{r}^{1} \leq x \leq \widehat{r}^{1} \rbrace$. Then,  
\begin{align}
\label{eq:15}\widehat{T}_{Q_{j} \rightarrow Q_{\ell}} & = \prod_{i = 1}^{n} \int_{a^{\ell}_{i}}^{b^{\ell}_{i}} f_{w_{i}}(x_{i}-s_{i,max}^{j \rightarrow \ell})\; dx_{i}, \\
& = \prod_{i = 1}^{n} \bigg(F_{w_{i}}(b^{\ell}_{i}-s_{i,max}^{j \rightarrow \ell}) - F_{w_{i}}(a^{\ell}_{i}-s_{i,max}^{j \rightarrow \ell})\bigg),\\
\label{eq:16}\widecheck{T}_{Q_{j} \rightarrow Q_{\ell}} & = \prod_{i = 1}^{n} \int_{a^{\ell}_{i}}^{b^{\ell}_{i}} f_{w_{i}}(x_{i}-s_{i,min}^{j\rightarrow \ell})\; dx_{i} \\ 
& = \prod_{i = 1}^{n} \bigg(F_{w_{i}}(b^{\ell}_{i}-s_{i,min}^{j \rightarrow \ell}) - F_{w_{i}}(a^{\ell}_{i}-s_{i,min}^{j \rightarrow \ell})\bigg)
\end{align}
where $F_{w_i}$ is the cumulative distribution function for $w_i$ and
\begin{align}
\label{eq:13}
  s_{i,max}^{j \rightarrow \ell}&=\begin{cases}
  s^\ell_{i,max},&\text{if} \;\;   s^\ell_{i,max} \in [\widecheck{r}_i^j, \widehat{r}_i^j] \\
\widehat{r}^j_i, &\text{if} \;\;  s^\ell_{i,max} > \widehat{r}^j_i\\
\widecheck{r}^j_i,& \text{if} \;\;  s^\ell_{i,max} < \widecheck{r}^j_i,
\end{cases}\\
\label{eq:14}  s_{i,min}^{j \rightarrow \ell}&=\begin{cases}
\widecheck{r}^j_i,&\text{if} \;\; s^\ell_{i,max} < \frac{\widecheck{r}^j_i+\widehat{r}^j_i}{2} \\
\widehat{r}^j_i, &\text{otherwise}.
\end{cases}
\end{align}
Then $\mathcal{I}=(P, \widecheck{T},\widehat{T})$ is an IMC abstraction of \eqref{eq:4}.
% with
% \begin{align}

% \end{align}
% \noindent where $R_{1}^i = \lbrace x : \widecheck{r}^{1}_i \leq x \leq \widehat{r}^{1}_i \rbrace$.
\end{theorem}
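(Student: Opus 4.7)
The plan is to verify that the constructed $\widecheck{T}$ and $\widehat{T}$ satisfy the inequalities \eqref{eq:3} and \eqref{eq:3-2} of Definition 4 by chaining Proposition \ref{prop:1} with Lemma \ref{lem:main}, applied coordinate-wise, so the proof reduces to identifying that the case analyses in the lemma coincide with the definitions \eqref{eq:13} and \eqref{eq:14}.

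First, I would fix an arbitrary pair of states $Q_j, Q_\ell \in P$. Proposition \ref{prop:1} immediately bounds $\sup_{x\in Q_j} Pr(\mathcal{F}(x)+w\in Q_\ell)$ and $\inf_{x\in Q_j} Pr(\mathcal{F}(x)+w\in Q_\ell)$ by products over $i=1,\ldots,n$ of one-dimensional extremal integrals, where the shift $z_i$ is constrained to the interval $[g_i(a^j,b^j), g_i(b^j,a^j)] = [\widecheck{r}_i^j, \widehat{r}_i^j]$. Note that this interval is non-empty and properly ordered: applying Proposition \ref{prop:reach} with $a=a^j \leq b^j=b$ and $z=b^j$ gives $g(a^j,b^j)\leq \mathcal{F}(b^j)\leq g(b^j,a^j)$ componentwise, so $\widecheck{r}^j\leq \widehat{r}^j$.

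Next, I would apply Lemma \ref{lem:main} separately in each coordinate $i$, with the substitutions $\omega = w_i$, $c = c_i$, $a = a_i^\ell$, $b = b_i^\ell$, $r_1 = \widecheck{r}_i^j$, and $r_2 = \widehat{r}_i^j$. Assumption \ref{assum:uni} ensures that each $f_{w_i}$ is symmetric and unimodal, so the lemma's hypotheses are met. The quantity $s_{max}$ of the lemma then equals $s^\ell_{i,max}=(a_i^\ell + b_i^\ell)/2 - c_i$ of \eqref{eq:18}, and the case expressions \eqref{eq:9}, \eqref{eq:11} become exactly the definitions \eqref{eq:13}, \eqref{eq:14} of $s_{i,max}^{j\to\ell}$ and $s_{i,min}^{j\to\ell}$. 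Multiplying these optimal one-dimensional integrals over $i$ yields precisely \eqref{eq:15} and \eqref{eq:16}, and the identity $\int_{a_i^\ell}^{b_i^\ell} f_{w_i}(x-s)\,dx = F_{w_i}(b_i^\ell - s) - F_{w_i}(a_i^\ell - s)$ converts to the CDF form. Finally, I would check the consistency condition \eqref{eq:36}: since $\sum_\ell Pr(\mathcal{F}(x)+w\in Q_\ell)=1$ for every $x\in D$, swapping sum and sup/inf yields $\sum_\ell \widehat{T}_{Q_j\to Q_\ell}\geq 1\geq \sum_\ell \widecheck{T}_{Q_j\to Q_\ell}$, and the labeling and IMDP interpretation conditions of Definition 4 follow directly from the construction.

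The proof is mostly bookkeeping once Proposition \ref{prop:1} and Lemma \ref{lem:main} are in hand; the main obstacle is carefully matching the constrained extremal problem over $z_i\in[\widecheck{r}_i^j,\widehat{r}_i^j]$ appearing in Proposition \ref{prop:1} with the parametrization of Lemma \ref{lem:main} in terms of a shift $s\in[r_1,r_2]$, and verifying that the three-case and two-case splits for argmax and argmin translate correctly into the expressions \eqref{eq:13} and \eqref{eq:14}. No new analytic content is required beyond the two earlier results.
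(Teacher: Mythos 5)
Your proposal is correct and follows essentially the same route as the paper's proof: apply Lemma \ref{lem:main} coordinate-wise inside the product bounds of Proposition \ref{prop:1}, identify the case splits with \eqref{eq:13}--\eqref{eq:14}, and then verify the IMC conditions \eqref{eq:3}--\eqref{eq:3-2} and \eqref{eq:36}. Your extra checks (that $\widecheck{r}^j\leq\widehat{r}^j$ and the sum/extremum argument for \eqref{eq:36}) are details the paper leaves implicit, and they are handled correctly.
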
\mbox{}

%\noindent 

Theorem 1 provides the mathematical foundation for solving Subproblem 1. Given a system of the form $\eqref{eq:4}$ satisfying Assumptions 1 to 3, and a rectangular partition $P$, this theorem shows that an IMC abstraction of \eqref{eq:4} can be computed efficiently. Specifically, for any state in $P$, we establish an over-approximation of its one-step reachable set by evaluating the system's decomposition function at only two points. Finding the maximizing and minimizing shifts inside the reachable sets decouples along each coordinate and involves a number of operations that is linear in the dimension $n$ of the state-space, according to $\eqref{eq:13}$ and $\eqref{eq:14}$. Finally, we see in  $\eqref{eq:15}$ and $\eqref{eq:16}$ that $n$ integral evaluations are needed per transition bound. This last step requires two evaluations of $F_{w_i}$ per bound for each $i$ and thus amounts to $2n$ function evaluations per bound. Because the computed over-approximations of reachable sets for mixed monotone dynamics were shown to be tight in certain cases \cite{coogan2015efficient}, our abstraction technique is generally not as conservative as those employing Lipschitz constants \cite{cauchi2019efficiency}, while being computationally efficient.

 %An approximation technique was proposed in \cite{dutreix2018} when the symmetry assumption in Assumption 3 is relaxed.

\section{Verification of IMCs}

In this section, we develop the machinery to address Subproblem 2. Let $\mathcal{I}$ be an IMC abstraction of \eqref{eq1} obtained from, e.g., the abstraction approach in Section IV for systems with specific form \eqref{eq:4}. For a formula $\phi$ of the type \eqref{eq2}, our goal is to sort the initial states of $\mathcal{I}$ into the sets $Q^{yes}_{\phi}, Q^{no}_{\phi}$ and $Q^{?}_{\phi}$. To this end, for any initial state $Q_{j}$ of $\mathcal{I}$, we require a lower bound and an upper bound on the probability of satisfying the $\omega$-regular property $\Psi$ for the probabilistic specification $\phi=\mathcal{P}_{\bowtie p_{sat}}[\Psi]$. We thus seek to compute the greatest lower bound $\widecheck{\mathcal{P}}_{\mathcal{I}}(Q_j \models \Psi)$ and least upper bound $\widehat{\mathcal{P}}_{\mathcal{I}}(Q_j \models \Psi)$ such that, for any adversary $\nu \in \nu_{\mathcal{I}}$,
\begin{align}
\widecheck{\mathcal{P}}_{\mathcal{I}}(Q_j \models \Psi) \leq \mathcal{P}_{\mathcal{I}[\nu]}(Q_j \models \Psi) \leq \widehat{\mathcal{P}}_{\mathcal{I}}(Q_j \models \Psi) \; .
\end{align} 

Our approach draws from the verification of regular MCs against $\omega$-regular properties using automata-based methods \cite[Section 10.3]{baier2008principles}. First, we generate a DRA $\mathcal{A}$ that recognizes the language induced by property $\Psi$. Such a DRA always exists. Several algorithms can generate a DRA for a large subset of $\omega$-regular expressions \cite{klein2006experiments} \cite{babiak2013effective}. Then, we construct the product $\mathcal{I} \otimes \mathcal{A}$ as defined below.\\

\begin{definition}[Product Interval-valued Markov Chain]
Let $\mathcal{I} = (Q, \widecheck{T}, \widehat{T}, \Sigma, L)$ be an Interval-valued Markov Chain and $\mathcal{A} = (S, \Sigma, \delta, s_0, Acc)$ be a Deterministic Rabin Automaton. The \textit{product} $\mathcal{I} \otimes \mathcal{A} = (Q \times S, \widecheck{T'}, \widehat{T'}, Acc', L')$ is an Interval-valued Markov Chain where:
\begin{itemize}
\setlength{\itemsep}{0pt}
\item $Q \times S$ is a set of states,\\
\item $\{(Q_j,s_0):Q_j\in Q\}$ is a finite set of initial states,\\
\item $\widecheck{T'}_{ \left<Q_{j},s\right> \rightarrow \left<Q_{\ell},s'\right>} = 
\begin{cases}
\widecheck{T'}_{Q_{j} \rightarrow Q_{\ell}}, \;\; \text{if} \;\; s' = \delta(s, L(Q_{\ell}))\\ \;\;\; \;\;\;\;\;\;0, \;\;\;\;\;\; \text{otherwise}
\end{cases}$\mbox{}\\\\
\item $\widehat{T'}_{ \left<Q_{j},s\right> \rightarrow \left<Q_{\ell},s'\right>} = 
\begin{cases}
\widehat{T'}_{Q_{j} \rightarrow Q_{\ell}}, \;\; \text{if} \;\; s' = \delta(s, L(Q_{\ell}))\\ \;\;\; \;\;\;\;\;\;0, \;\;\;\;\;\; \text{otherwise}
\end{cases}$\mbox{}\\
\item $Acc' = \lbrace E_{1}, E_{2}, \ldots, E_{k}, F_{1}, F_{2}, \ldots, F_{k} \rbrace$ is a set of atomic propositions, where $E_{i}$ and $F_{i}$ are the sets in the Rabin pairs of $Acc$,
\item $L': Q \times S \rightarrow 2^{Acc'}$ such that $H \in L'(\left<Q_{j},s\right>)$ if and only if $s \in H$, for all $H \in Acc' $ and for all $j$.
\end{itemize}\mbox{}
\end{definition}

\noindent A MC induced by $\mathcal{I} \otimes \mathcal{A}$ is called a \textit{product Markov Chain}, and we use the notation $\mathcal{M}_{\otimes}^{\mathcal{A}}$ to denote such an induced MC. %\footnote{In general, we cannot view an induced MC as a product of a MC with $\mathcal{A}$, $\mathcal{M} \otimes \mathcal{A}$. Regular MCs are interpreted as memoryless adversaries of $\mathcal{I}$, while induced MCs represent a larger class of memory-dependent adversaries of $\mathcal{I}$.} \\

The probability of satisfying $\Psi$ from initial state $Q_j$ in a discrete-time MC equals that of reaching an accepting \textit{Bottom Strongly Connected Component} (BSCC) from initial state $\left<Q_j,s_0 \right>$ in the product MC with $\mathcal{A}$ \cite{baier2008principles}.\\

\begin{definition}[Bottom Strongly Connected Component]
Given a Markov Chain $\mathcal{M}$ with states $Q$, a set $B \subseteq Q$ is a \textit{Bottom Strongly Connected Component} (BSCC) of $\mathcal{M}$ if
\begin{itemize}
\item $B$ is strongly connected: for each pair of states $(q,t)$ in $B$, there exists a path $q_{0}q_{1}\ldots q_n$ such that $T(q_i,q_{i+1}) > 0$, $i = 0,1, \ldots, n-1$, and $q_i \in B$ for $0 \leq i \leq n$ with $q_0 = q,$ $q_n = t$,
\item no proper superset of $B$ is strongly connected,
\item $\forall s \in B$, $\Sigma_{t \in B}T(s,t) = 1$.
\end{itemize}
\end{definition}\mbox{}

In words, every state in a BSCC $B$ is reachable from any state in $B$, and every state in $B$ only transitions to another state in $B$. Moreover, $B$ is accepting when at least one of its states maps to the accepting set of a Rabin pair, while no state in $B$ maps to the non-accepting set of that same pair.\\

\begin{definition}[Accepting Bottom Strongly Connected Component]
A Bottom Strongly Connected Component $B$ of a product MC $\mathcal{M}_{\otimes}^{\mathcal{A}}$ is said to be \textit{accepting} if:
\begin{align}
\exists i: & \Big( \; \exists \left<Q_{j},s_{\ell} \right> \in B \; . \; F_{i} \in L'(\left<Q_{j},s_{\ell} \right>) \; \Big) \nonumber \\ 
& \wedge \Big( \;  \forall \left<Q_{j},s_{\ell} \right> \in B \; . \; E_{i} \not \in L'(\left<Q_{j},s_{\ell} \right>) \; \Big).
\end{align}
\end{definition}\mbox{}

\begin{definition}[Non-Accepting Bottom Strongly Connected Component]
A Bottom Strongly Connected Component $B$ of a product MC $\mathcal{M}_{\otimes}^{\mathcal{A}}$ is said to be \textit{non-accepting} if it is not accepting.
\end{definition}\mbox{}

\noindent We denote by $U^A$ and $U^N$ the sets of states that respectively belong to an accepting and a non-accepting BSCC in a product MC.

Note that each product MC $\mathcal{M}_{\otimes}^{\mathcal{A}}$ induced by $\mathcal{I} \otimes \mathcal{A}$ simulates the behavior of $\mathcal{I}$ under some adversary $\mathcal{\nu} \in \mathcal{\nu}_{\mathcal{I}}$. Indeed, for any two states $Q_j$ and $Q_{\ell}$ in $\mathcal{I}$ and some states $s, s', s''$ and $s'''$ in $\mathcal{A}$, we allow $T_{ \left<Q_{j},s\right> \rightarrow \left<Q_{\ell},s'\right>}$ and $T_{ \left<Q_{j},s''\right> \rightarrow \left<Q_{\ell},s'''\right>}$ to assume different values in $\mathcal{M}_{\otimes}^{\mathcal{A}}$, which means that the transition probability between $Q_j$ and $Q_{\ell}$ may change depending on the history of the path in $\mathcal{I}$ as encoded in the state of $\mathcal{A}$.

Also, the adversary is history-independent or \textit{memoryless} in the product automaton, that is, the adversary's chosen transition probability only depends on the current states of the IMC and the DRA $\mathcal{A}$. For reachability problems in IMCs, it was shown in \cite{chen2013complexity} that memoryless adversaries yield the same bounds as the memory-dependent ones. The following facts establish that, therefore, such memoryless (in the product) adversaries are sufficient for IMC verification.\\

\begin{fact}
\cite[p. 792, Theorem 10.56]{baier2008principles} \cite{chen2013complexity}  We denote the set of adversaries of $\mathcal{I}$ that are memoryless in the product IMC $\mathcal{I} \otimes \mathcal{A}$ by $(\mathcal{\nu}_{\mathcal{I}})_{\otimes}^{\mathcal{A}} \subseteq \mathcal{\nu}_{\mathcal{I}}$. It holds that 
\begin{align}
\inf_{\nu \in \nu_{\mathcal{I}}} \mathcal{P}_{\mathcal{I}[\mathcal{\nu}]}(Q_i \models \Psi) & = \inf_{\nu \in (\mathcal{\nu}_{\mathcal{I}})_{\otimes}^{\mathcal{A}}} \mathcal{P}_{\mathcal{I}[\mathcal{\nu}]}(Q_i \models \Psi)\\
\sup_{\nu \in \nu_{\mathcal{I}}} \mathcal{P}_{\mathcal{I}[\mathcal{\nu}]}(Q_i \models \Psi) &  =  \sup_{\nu \in (\mathcal{\nu}_{\mathcal{I}})_{\otimes}^{\mathcal{A}}} \mathcal{P}_{\mathcal{I}[\mathcal{\nu}]}(Q_i \models \Psi) \ .
\end{align}
\end{fact}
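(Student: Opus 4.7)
The plan is to reduce the $\omega$-regular satisfaction problem on the IMC $\mathcal{I}$ to a reachability problem on the product IMC $\mathcal{I}\otimes\mathcal{A}$, and then invoke the known result that memoryless adversaries suffice for reachability objectives in IMCs/IMDPs.

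First, I would establish a probability-preserving correspondence between adversaries of $\mathcal{I}$ and adversaries of $\mathcal{I}\otimes\mathcal{A}$. Because $\mathcal{A}$ is deterministic, any finite path $\pi = q_0 q_1 \ldots q_k$ in $\mathcal{I}$ uniquely determines a finite path $\pi^\otimes = \langle q_0,s_0\rangle\langle q_1,s_1\rangle\ldots\langle q_k,s_k\rangle$ in the product, where $s_{i+1} = \delta(s_i, L(q_{i+1}))$. Thus an adversary $\nu$ on $\mathcal{I}$ lifts canonically to an adversary $\nu^\otimes$ on $\mathcal{I}\otimes\mathcal{A}$, and vice versa, in such a way that the induced probability measures on infinite paths match. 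Under this bijection, the probability that a path from $Q_i$ in $\mathcal{I}$ satisfies $\Psi$ (which, by definition of $\mathcal{A}$, is the probability that its lifted run in $\mathcal{A}$ is accepting) equals the probability that the corresponding path from $\langle Q_i,s_0\rangle$ in the induced product MC $\mathcal{M}^{\mathcal{A}}_\otimes$ eventually enters and remains in an accepting BSCC, by the standard BSCC-based characterization of $\omega$-regular satisfaction in finite Markov chains (Baier and Katoen, Thm.~10.56).

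Next, let $U^A$ denote the union of accepting BSCCs in $\mathcal{M}^{\mathcal{A}}_\otimes$. The previous step converts the problem of computing $\inf_\nu \mathcal{P}_{\mathcal{I}[\nu]}(Q_i\models\Psi)$ and $\sup_\nu \mathcal{P}_{\mathcal{I}[\nu]}(Q_i\models\Psi)$ into the problem of computing $\inf_{\nu^\otimes}$ and $\sup_{\nu^\otimes}$ of the reachability probability $\mathcal{P}_{\mathcal{M}^{\mathcal{A}}_\otimes}(\langle Q_i,s_0\rangle \models \Diamond U^A)$ over all adversaries on the product IMC. Note that $U^A$ depends in general on the chosen adversary, so some care is needed to frame this as a clean reachability problem; one way is to unify across adversaries by enlarging $U^A$ to the maximal set of product states from which an accepting BSCC can be made absorbing (this is standard in the $\omega$-regular verification literature) and then showing that the reachability objective with respect to this state-set agrees with the original satisfaction probability extremals.

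Finally, I would invoke the result of Chen, Han, Kwiatkowska (2013) that for reachability objectives on IMCs interpreted as IMDPs, the greatest lower bound and least upper bound are attained by adversaries that are memoryless with respect to the state of the IMC under consideration — here, the product IMC $\mathcal{I}\otimes\mathcal{A}$. Translating back through the bijection of the first step, a memoryless adversary on $\mathcal{I}\otimes\mathcal{A}$ corresponds to an adversary on $\mathcal{I}$ whose choice depends only on the pair (current IMC state, current DRA state), which is precisely the class $(\nu_\mathcal{I})^\mathcal{A}_\otimes$. The two equalities in the statement then follow. The main obstacle is the BSCC-reachability reduction step: accepting BSCCs are adversary-dependent in an IMC, and one must argue carefully that the supremum/infimum over adversaries of the $\omega$-regular satisfaction probability coincides with the supremum/infimum over adversaries of a fixed reachability objective, so that the Chen--Han--Kwiatkowska memoryless-sufficiency result is applicable.
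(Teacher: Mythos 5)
Your proposal follows essentially the same route as the paper, which states this as a Fact without its own proof and justifies it by exactly the combination you describe: the deterministic-automaton product construction and BSCC-based characterization of $\omega$-regular satisfaction from Baier--Katoen (Theorem 10.56), together with the memoryless-sufficiency result for reachability objectives in IMCs from Chen, Han and Kwiatkowska. You also correctly identify the adversary-dependence of the accepting BSCCs as the delicate point; the paper resolves this not here but later, via the largest winning/losing component machinery of Section V (Lemmas 5--10 and Theorem 2), where in particular the infimum is handled as one minus the maximal probability of reaching the largest \emph{losing} component rather than as a single reachability objective on an enlarged accepting set.
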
\mbox{}

\begin{fact}
For any adversary $\mathcal{\nu} \in (\mathcal{\nu}_{\mathcal{I}})_{\otimes}^{\mathcal{A}}$ in $\mathcal{I}$, it holds that  $\mathcal{P}_{\mathcal{I}[\nu]}(Q_i \models \Psi)$ =  $\mathcal{P}_{(\mathcal{M}_{\otimes}^{\mathcal{A}})_{\nu}}( \left<Q_i, s_{0} \right> \models \Diamond U^{A})$, where $(\mathcal{M}_{\otimes}^{\mathcal{A}})_{\nu}$ denotes the product MC induced by $\mathcal{I} \otimes \mathcal{A}$ corresponding to adversary $\nu$.
\end{fact}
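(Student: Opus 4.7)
The plan is to fix a memoryless adversary $\nu\in(\nu_{\mathcal{I}})_\otimes^{\mathcal{A}}$ so that, in the product space $Q\times S$, the behavior is governed by an ordinary (history-free) Markov chain $(\mathcal{M}_\otimes^{\mathcal{A}})_\nu$, and then argue that the probability that a trajectory of the IMC generates a word accepted by $\mathcal{A}$ coincides with the probability that the corresponding product-MC trajectory is absorbed into an accepting BSCC. The proof splits into a measure-preserving path bijection and a qualitative characterization of acceptance via BSCCs.

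First I would set up the path correspondence. Given an infinite path $\pi=Q_{i_0}Q_{i_1}Q_{i_2}\ldots$ in $\mathcal{I}$ with $Q_{i_0}=Q_i$, define the lifted path $\bar\pi=\langle Q_{i_0},s_0\rangle\langle Q_{i_1},s_1\rangle\ldots$ in the product by the recursion $s_{k+1}=\delta(s_k,L(Q_{i_{k+1}}))$. Since $\delta$ and $L$ are deterministic, the map $\pi\mapsto\bar\pi$ is a bijection between cylinder sets based at $Q_i$ and cylinder sets based at $\langle Q_i,s_0\rangle$. By the definition of $\widecheck{T'},\widehat{T'}$, the transition probability assigned by $\nu$ to $\langle Q_{i_k},s_k\rangle\to\langle Q_{i_{k+1}},s_{k+1}\rangle$ in $(\mathcal{M}_\otimes^{\mathcal{A}})_\nu$ equals the probability assigned by $\nu$ to $Q_{i_k}\to Q_{i_{k+1}}$ in $\mathcal{I}$, because the DRA-component transition is deterministic and consistent with $\delta$. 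A cylinder-by-cylinder comparison, followed by Carathéodory extension, then shows that the pushforward measure on product paths equals the standard cylinder measure of $(\mathcal{M}_\otimes^{\mathcal{A}})_\nu$ starting at $\langle Q_i,s_0\rangle$.

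Next I would translate the acceptance criterion. The run of $\mathcal{A}$ induced by the word $L(Q_{i_1})L(Q_{i_2})\ldots$ is exactly the projection $s_0 s_1 s_2\ldots$ of $\bar\pi$ onto the DRA component. Hence $\pi$ yields a word accepted by $\mathcal{A}$ if and only if there exists $i$ such that $\bar\pi$ visits $\{\langle Q_j,s\rangle : s\in F_i\}$ infinitely often and $\{\langle Q_j,s\rangle : s\in E_i\}$ only finitely often, which by the labeling $L'$ in Definition~6 is precisely the acceptance event on $\bar\pi$ in the product IMC.

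The main step, and the one most worth stating carefully, is the BSCC characterization for finite Markov chains: with probability one, a trajectory of $(\mathcal{M}_\otimes^{\mathcal{A}})_\nu$ eventually enters some BSCC and visits every state of that BSCC infinitely often (and no state outside it infinitely often). This is standard (\cite[Theorem 10.27]{baier2008principles}). Combining this dichotomy with Definition~8 yields: almost surely, $\bar\pi$ satisfies the Rabin condition if and only if $\bar\pi$ eventually reaches an accepting BSCC, i.e.\ reaches $U^A$. Therefore, using the measure-preserving bijection,
\begin{align}
\mathcal{P}_{\mathcal{I}[\nu]}(Q_i\models\Psi)
&=\Pr\!\left(\pi\text{ accepted by }\mathcal{A}\right)\\
&=\Pr\!\left(\bar\pi\text{ satisfies Rabin condition}\right)\\
&=\mathcal{P}_{(\mathcal{M}_\otimes^{\mathcal{A}})_\nu}\!\left(\langle Q_i,s_0\rangle\models\Diamond U^A\right),
\end{align}
which is the claimed identity. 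The only technical subtlety is confirming that the BSCC-dichotomy holds under memoryless $\nu$ regardless of whether the induced product MC has reachable transient structure; this is immediate since $(\mathcal{M}_\otimes^{\mathcal{A}})_\nu$ is a finite-state Markov chain, to which the classical result applies directly.
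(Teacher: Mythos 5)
Your proposal is correct and is essentially the standard automata-product argument that the paper relies on for this Fact (which it states without proof, deferring to \cite{baier2008principles}): a measure-preserving lift of paths onto the deterministic product, followed by the almost-sure BSCC dichotomy to convert the Rabin condition into reachability of $U^A$. The memoryless-in-the-product assumption on $\nu$ is exactly what makes $(\mathcal{M}_{\otimes}^{\mathcal{A}})_{\nu}$ a genuine finite Markov chain, and you use it correctly.
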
\mbox{}

Consequently, computing $\widecheck{\mathcal{P}}_{\mathcal{I}}(Q_i \models \Psi)$ and $\widehat{\mathcal{P}}_{\mathcal{I}}(Q_i \models \Psi)$ amounts to finding the product MCs induced by $\mathcal{I} \otimes \mathcal{A}$ that respectively minimize and maximize the probability of reaching an accepting BSCC from $\left<Q_j,s_0 \right>$. Such reachability problems in IMCs were solved when the destination states are fixed for all induced MCs \cite{chatterjee2008model} \cite{lahijanian2015formal}. 

However, in general, the sets $U^A$ and $U^N$ are not fixed in product IMCs and vary with the assumed values for each transition. Specifically, $U^A$ and $U^N$ are determined by transitions that are either ``on" or ``off", i.e. those whose lower bound is zero and upper bound is non-zero, as seen in the example in Fig. 3: in the product MC $(\mathcal{M}_\otimes^\mathcal{A})_1$ induced by $\mathcal{I} \otimes \mathcal{A}$, $U^A$ is $\lbrace Q_{0}, Q_{1}\rbrace$ while $\lbrace Q_{2} \rbrace$ is non-accepting. In $(\mathcal{M}_\otimes^\mathcal{A})_2$, another product MC induced by $\mathcal{I} \otimes \mathcal{A}$, all states are in $U^N$.

\begin{figure}[t]
\setlength{\belowcaptionskip}{-9pt}
\begin{center}
\includegraphics[scale=0.37]{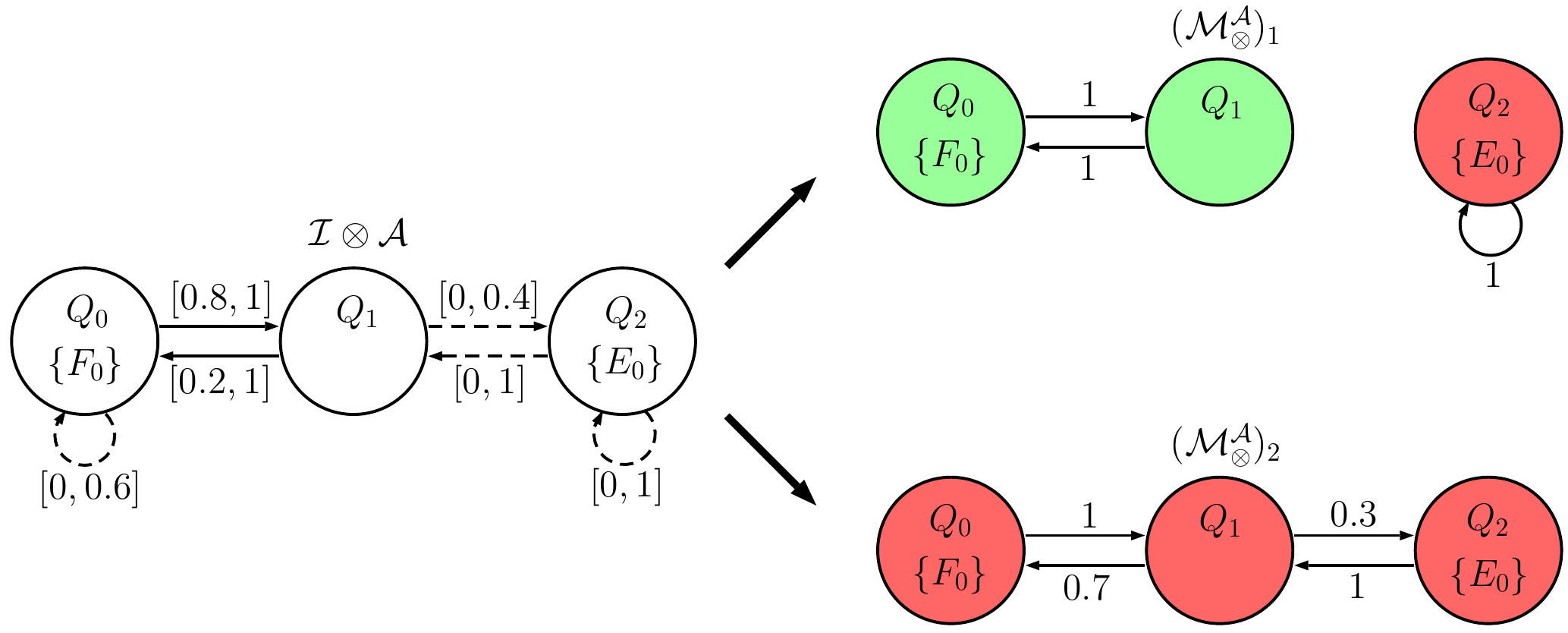}\\
\end{center}
\label{figex1}
\vspace{-0.2cm}
\caption{Two product MCs $(\mathcal{M}_{\otimes}^{\mathcal{A}})_1$ and $(\mathcal{M}_{\otimes}^{\mathcal{A}})_2$ induced by a product IMC $\mathcal{I} \otimes \mathcal{A}$. Accepting BSCCs are shown in green; non-accepting BSCCs are red. In $(\mathcal{M}_{\otimes}^{\mathcal{A}})_1$, $U^{A} = \lbrace Q_{0}, Q_{1} \rbrace$ and $U^{N} = \lbrace Q_{2} \rbrace$; in $(\mathcal{M}_{\otimes}^{\mathcal{A}})_2$, $U^{A} = \emptyset $ and $U^{N} = \lbrace Q_{0}, Q_{1}, Q_{2} \rbrace$.}
\end{figure}

The contribution of this paper to the verification of IMCs for $\omega$-regular properties is twofold. First, we show that a product IMC always induces a largest \textit{Losing Component} and \textit{Winning Component}. These components contain states that reach a BSCC with probability 1. Upper and lower bounds on $\Psi$ are computed by solving a reachability problem for these sets. We further introduce the notion of \textit{Permanent Losing Components} and \textit{Permanent Winning Components} which play a crucial role in the refinement algorithm derived in Section VI. Second, we describe a graph search algorithm to find these components.

\subsection{Computation of Satisfiability Bounds in IMCs}

Previous works highlighted the crucial role of BSCCs in product MCs \cite[Theorem 10.56]{baier2008principles}. As the probability of reaching an accepting BSCC in a product MC determines the probability of satisfying some property in the original abstraction, we now further introduce the notions of winning and losing components. These components include states that may not belong to a BSCC but from which any path is bound to reach a BSCC.\\ 

\begin{definition}[Winning/Losing Component]
\cite{baier2004controller} A \textit{winning (losing) component} $WC \; (LC)$  of a product MC $\mathcal{M}_{\otimes}^{\mathcal{A}}$ is a set of states satisfying $\mathcal{P}_{\mathcal{M}_{\otimes}^{\mathcal{A}}}(WC \models \Diamond U^{A}) = 1 \; ( \; \mathcal{P}_{\mathcal{M}_{\otimes}^{\mathcal{A}}}(LC \models \Diamond U^{N}) = 1 \; )$ , where $U^{A}$ ($U^{N}$) is the set of states belonging to an accepting (non-accepting) BSCC in $\mathcal{M}_{\otimes}^{\mathcal{A}}$.
\end{definition}\mbox{}

It naturally follows that the probability of eventually reaching a BSCC from some initial state is equal to that of reaching a winning or losing component.\\

\begin{corollary}
In any product MC $\mathcal{M}_{\otimes}^{\mathcal{A}}$,
\begin{align}
\mathcal{P}_{\mathcal{M}_{\otimes}^{\mathcal{A}}}( \left< Q_i, s_0 \right> \models \Diamond U^{A}) & =  \mathcal{P}_{\mathcal{M}_{\otimes}^{\mathcal{A}}}( \left< Q_i, s_0 \right> \models \Diamond WC)\\
\mathcal{P}_{\mathcal{M}_{\otimes}^{\mathcal{A}}}( \left< Q_i, s_0 \right> \models \Diamond U^{N}) & =  \mathcal{P}_{\mathcal{M}_{\otimes}^{\mathcal{A}}}( \left< Q_i, s_0\right> \models \Diamond LC) \;.
\end{align}
\end{corollary}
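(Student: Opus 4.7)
The claim is that reaching an accepting (non-accepting) BSCC and reaching a winning (losing) component are events of equal probability in a product MC. The plan is to prove the two inequalities separately via a set-containment argument in one direction and the strong Markov property in the other.

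First, I would argue that $U^{A}\subseteq WC$, where $WC$ is interpreted as the union of all winning components (equivalently, the \emph{largest} winning component, as promised in the preceding paragraph of Section V). Indeed, by Definition 7 each accepting BSCC $B\subseteq U^{A}$ is closed under transitions and contained in $U^{A}$, so $\mathcal{P}_{\mathcal{M}_{\otimes}^{\mathcal{A}}}(B\models\Diamond U^{A})=1$. Hence every state of $U^{A}$ satisfies the defining condition of a winning component and therefore belongs to $WC$. As events on paths, $\{\Diamond U^{A}\}\subseteq\{\Diamond WC\}$, which yields
\begin{align}
\mathcal{P}_{\mathcal{M}_{\otimes}^{\mathcal{A}}}(\langle Q_i,s_0\rangle\models\Diamond U^{A})
\;\leq\;
\mathcal{P}_{\mathcal{M}_{\otimes}^{\mathcal{A}}}(\langle Q_i,s_0\rangle\models\Diamond WC).
\end{align}

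Second, I would prove the reverse inequality by conditioning on the first visit to $WC$. Let $\tau_{WC}$ denote the (random) first hitting time of $WC$ and $X_{\tau_{WC}}$ the state of the MC at that time. By the strong Markov property,
\begin{align}
\mathcal{P}_{\mathcal{M}_{\otimes}^{\mathcal{A}}}(\langle Q_i,s_0\rangle\models\Diamond U^{A})
\;\geq\;
\sum_{q\in WC}\mathcal{P}_{\mathcal{M}_{\otimes}^{\mathcal{A}}}(\tau_{WC}<\infty,\,X_{\tau_{WC}}=q)\,
\mathcal{P}_{\mathcal{M}_{\otimes}^{\mathcal{A}}}(q\models\Diamond U^{A}).
\end{align}
By the definition of $WC$, every $q\in WC$ satisfies $\mathcal{P}_{\mathcal{M}_{\otimes}^{\mathcal{A}}}(q\models\Diamond U^{A})=1$, so the right-hand side collapses to $\mathcal{P}_{\mathcal{M}_{\otimes}^{\mathcal{A}}}(\tau_{WC}<\infty)=\mathcal{P}_{\mathcal{M}_{\otimes}^{\mathcal{A}}}(\langle Q_i,s_0\rangle\models\Diamond WC)$. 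The two bounds together give the first equality of the corollary, and the argument for $(U^{N},LC)$ is identical after swapping ``accepting'' for ``non-accepting'' throughout.

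The only subtle step is the containment $U^{A}\subseteq WC$, which depends on reading $WC$ as the \emph{union} of all winning components rather than a single one; since the union of sets satisfying the probability-1 reachability property still satisfies it, this interpretation is consistent with Definition 9 and with the section's remark that a largest winning component always exists. Everything else is a direct application of standard MC facts (BSCCs are absorbing, the strong Markov property applies at hitting times of a fixed subset of states), so I do not anticipate a serious technical obstacle beyond fixing this convention.
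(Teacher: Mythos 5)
Your proof is correct. The paper offers no explicit proof of this corollary---it simply asserts that it ``naturally follows'' from the definition of winning/losing components---and your two-inequality argument (the containment $U^{A}\subseteq WC$ giving one direction, and the strong Markov property at the first hitting time of $WC$ giving the other) is a rigorous filling-in of exactly that reasoning, including the correct reading of $WC$ as the union of all winning components.
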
\mbox{}

For any initial state in a product IMC, our goal is thus to find induced product MCs that minimize and maximize the probability of reaching a winning component. 

%The remainder of this section provides the tools for constructing such worst-case and best-case scenarios for all initial states in a product IMC.

We refer to the technical appendix for all lemmas and proofs leading to the proposed solution. The key observation is that any product IMC induces a \textit{largest winning component} and a \textit{largest losing component}. The largest winning component is the set of states of the product IMC belonging to a winning component for at least one induced product MC, while the largest losing component is the analogous set for losing components. Definitions of \textit{permanent} and \textit{potential} components follow directly from that of largest components.\\

\begin{definition}[Largest Winning/Losing Components]
A state $\left< Q_i, s_j \right> \in Q \times S$ of a product IMC $\mathcal{I} \otimes \mathcal{A}$ is a member of the \textit{Largest Winning (Losing) Component} $(WC)_{L}$ $\big( \; (LC)_{L} \; \big)$ if there exists a product MC induced by $\mathcal{I} \otimes \mathcal{A}$ such that $\left< Q_i, s_j \right>$ is a winning (losing) component.
\end{definition}\mbox{}

\begin{definition}[Permanent Winning/Losing Components]
A state $\left< Q_i, s_j \right> \in Q \times S$ of a product IMC $\mathcal{I} \otimes \mathcal{A}$ is a member of the \textit{Permanent Winning (Losing) Component} $(WC)_{P}$ $\big( \; (LC)_{P} \; \big)$  of $\mathcal{I} \otimes \mathcal{A}$ if $\left< Q_i, s_j \right>$ is a winning (losing) component for all product MCs induced by $\mathcal{I} \otimes \mathcal{A}$.
\end{definition}\mbox{}

\begin{definition}[Potential Winning/Losing Components]
A state $\left< Q_i, s_j \right> \in Q \times S$ of a product IMC $\mathcal{I} \otimes \mathcal{A}$ is a member of the \textit{Potential Winning (Losing) Component} $(WC)_{?}$ $\big( \; (LC)_{?} \; \big)$ of $\mathcal{I} \otimes \mathcal{A}$ if $\left< Q_i, s_j \right> \in (WC)_{L} \setminus (WC)_{P} \;\; \big( \left< Q_i, s_j \right> \in \; (LC)_{L} \setminus (LC)_{P} \; \big)$.
\end{definition}\mbox{}

Note that the sets $(WC)_{?}$ and $(LC)_{?}$ may intersect, and by extension $(WC)_{L}$ and $(LC)_{L}$, while $(WC)_{P}$ and $(LC)_{P}$ are disjoint. An important result established in this paper is that any product IMC induces a set of product MCs where all members of the largest winning component belong to a winning component simultaneously. A product IMC induces an analogous set of product MCs for the largest losing component. We provide proofs in Lemmas 5-7 of the Appendix.

We now state the main result of this section, which establishes that bounds on the probability of satisfying an $\omega$-regular property in an IMC can be computed by solving a reachability maximization problem on a fixed set of states in a product IMC. These sets are the largest components of the product IMC. Furthermore, solving these problems induce sets of best and worst-case product MCs where the probabilities of reaching a winning component are respectively maximized and minimized for all initial states of the product IMC.\\

\begin{theorem}
Let $\mathcal{I}$ be an IMC and $\mathcal{A}$ be a Rabin Automaton corresponding to omega-regular property $\Psi$. Let $(WC)_L$ and $(LC)_L$ be the largest winning and losing components and $(WC)_P$ and $(LC)_P$ be the permanent winning and losing components of the product IMC $\mathcal{I} \otimes \mathcal{A}$. Then for any initial state $Q_i$ of $\mathcal{I}$,
\begin{align}
\label{eq:teo1}
\widecheck{\mathcal{P}}_{\mathcal{I}}(Q_i \models \Psi) & = \;  1 - \mathcal{\widehat{P}}_{\mathcal{I} \otimes \mathcal{A}}( \;  \left<Q_i,s_0 \right> \models \Diamond (LC)_L \; ) \\
\label{eq:teo2} \widehat{\mathcal{P}}_{\mathcal{I}}(Q_i \models \Psi) & = \;  \mathcal{\widehat{P}}_{\mathcal{I} \otimes \mathcal{A}}( \;  \left<Q_i,s_0 \right> \models \Diamond (WC)_L \; ) \; .
\end{align}
\end{theorem}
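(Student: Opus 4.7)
The plan is to reduce both equalities to a reachability problem and exploit the structural result asserted in the paragraph preceding the theorem, namely that there exist induced product MCs in which every state of $(WC)_L$ (resp.\ $(LC)_L$) simultaneously lies in a winning (resp.\ losing) component (Lemmas~5--7 in the Appendix). By Fact~1 it suffices to take suprema/infima over memoryless adversaries in the product, and by Fact~2 together with Corollary~1 we have, for every such $\nu$, the chain of identities $\mathcal{P}_{\mathcal{I}[\nu]}(Q_i\models\Psi)=\mathcal{P}_{(\mathcal{M}_\otimes^\mathcal{A})_\nu}(\langle Q_i,s_0\rangle\models\Diamond U^A)=\mathcal{P}_{(\mathcal{M}_\otimes^\mathcal{A})_\nu}(\langle Q_i,s_0\rangle\models\Diamond WC^\nu)$, where $WC^\nu$ denotes the winning component of the induced product MC.

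For the upper-bound equality \eqref{eq:teo2}, I would prove the two inequalities separately. For $\leq$: any $WC^\nu$ satisfies $WC^\nu\subseteq(WC)_L$ by the very definition of the largest winning component, so reaching $WC^\nu$ implies reaching $(WC)_L$; taking the supremum over $\nu$ and using that the IMDP interpretation of the product IMC contains exactly the induced product MCs yields $\widehat{\mathcal{P}}_{\mathcal{I}}(Q_i\models\Psi)\leq \widehat{\mathcal{P}}_{\mathcal{I}\otimes\mathcal{A}}(\langle Q_i,s_0\rangle\models\Diamond(WC)_L)$. For $\geq$: let $\nu^\star$ achieve the supremum on the right-hand side; invoking the appendix lemma, modify $\nu^\star$ only on transitions strictly inside $(WC)_L$ so that $(WC)_L$ becomes a winning component in the resulting MC $\mathcal{M}^{\star\star}$. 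This modification does not change the probability of reaching $(WC)_L$ from $\langle Q_i,s_0\rangle$, and once inside $(WC)_L$ an accepting BSCC is reached almost surely, giving $\mathcal{P}_{\mathcal{M}^{\star\star}}(\Diamond U^A)\geq \mathcal{P}_{\mathcal{M}^{\star\star}}(\Diamond(WC)_L)=\widehat{\mathcal{P}}_{\mathcal{I}\otimes\mathcal{A}}(\Diamond(WC)_L)$, which is the desired inequality.

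For the lower-bound equality \eqref{eq:teo1}, I would pass to complements. In any induced product MC every infinite path reaches some BSCC with probability one, so $U^A$ and $U^N$ partition the set of BSCC states that are almost surely reached, and hence $\mathcal{P}_{(\mathcal{M}_\otimes^\mathcal{A})_\nu}(\Diamond U^A)=1-\mathcal{P}_{(\mathcal{M}_\otimes^\mathcal{A})_\nu}(\Diamond U^N)=1-\mathcal{P}_{(\mathcal{M}_\otimes^\mathcal{A})_\nu}(\Diamond LC^\nu)$. Taking the infimum over $\nu$ turns the infimum on the left into $1$ minus a supremum, and the argument of the previous paragraph, applied verbatim with losing components in place of winning ones, identifies that supremum with $\widehat{\mathcal{P}}_{\mathcal{I}\otimes\mathcal{A}}(\langle Q_i,s_0\rangle\models\Diamond(LC)_L)$.

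The main obstacle in this plan is the $\geq$ direction in the upper bound, which hinges on the non-trivial claim that the adversary maximizing the probability of reaching $(WC)_L$ can, without loss, be chosen so that $(WC)_L$ itself becomes a winning component. This requires showing that the choices governing how one enters $(WC)_L$ are decoupled from the choices that must be made among the interval-valued transitions \emph{inside} $(WC)_L$ to realize an accepting BSCC structure covering all of $(WC)_L$ simultaneously — precisely the content deferred to the appendix. Everything else is bookkeeping with Facts~1--2, Corollary~1 and the partition of BSCCs into accepting and non-accepting.
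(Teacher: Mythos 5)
Your proposal is correct and follows essentially the same route as the paper: both reduce the problem to reachability of the largest components via Facts~1--2 and Corollary~1, and both lean on the appendix lemmas (existence of induced product MCs simultaneously realizing all of $(WC)_L$, resp.\ $(LC)_L$, as winning, resp.\ losing, components, together with the monotonicity of the reachability probability in the component) for the nontrivial direction. The only difference is presentational --- the paper phrases \eqref{eq:teo1} as a proof by contradiction invoking Lemmas~5, 8, 9 and 10, whereas you argue the two inequalities directly with the splice-inside-$(WC)_L$ construction, which is the same decoupling argument made explicit.
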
\mbox{}\\
Moreover, there exists a set of induced product MCs $(\mathcal{M}_{\otimes}^{\mathcal{A}})_{\text{worst}}$, where, $\forall \mathcal{M}_{i} \in (\mathcal{M}_{\otimes}^{\mathcal{A}})_{\text{worst}}$, the sets of all losing and winning components of $\mathcal{M}_{i}$ are $(LC)_{L}$ and $(WC)_{P}$ respectively, and, $\forall \left< Q_{i}, s_{0} \right> \in (Q \times S)$, $\mathcal{P}_{\mathcal{M}_{i}}( \left< Q_{i}, s_{0} \right> \models \Diamond (LC)_{L}) = \widehat{\mathcal{P}}_{\mathcal{I} \otimes \mathcal{A}}( \left< Q_{i}, s_{0} \right> \models \Diamond (LC)_{L})$ . Likewise, there exists a set of induced product MCs $(\mathcal{M}_{\otimes}^{\mathcal{A}})_{\text{best}}$, where, $\forall \mathcal{M}_{i} \in (\mathcal{M}_{\otimes}^{\mathcal{A}})_{\text{best}}$, the sets of all losing and winning components of $\mathcal{M}_{i}$ are $(LC)_{P}$ and $(WC)_{L}$ respectively, and, $\forall \left< Q_{i}, s_{0} \right> \in (Q \times S)$, $\mathcal{P}_{\mathcal{M}_{i}}( \left< Q_{i}, s_{0} \right> \models \Diamond (WC)_{L}) = \widehat{\mathcal{P}}_{\mathcal{I} \otimes \mathcal{A}}( \left< Q_{i}, s_{0} \right> \models \Diamond (WC)_{L})$.\mbox{}

\begin{proof}
$\widecheck{\mathcal{P}}_{\mathcal{I}}(Q_i \models \Psi)$ is equivalent to a lower bound on the probability of reaching an accepting BSCC from $\left<Q_i,s_0 \right>$ in $\mathcal{I} \otimes \mathcal{A}$. Equation \eqref{eq:teo1} follows from Lemma 3 and the following reasoning: assume \eqref{eq:teo1} is not true. This implies that there exists an induced product MC where the probability of reaching an non-accepting BSCC from $\left< Q_{i}, s_{0} \right>$ is greater than the highest probability of reaching $(LC)_{L}$, which is a contradiction to Lemma 5 and 9. Next, we denote by $\mathcal{D}$ the set of induced product MCs with set of winning components $(WC)_{P}$ and set of losing components $(LC)_{L}$ constructed in Lemma 10. Lemma 8 and Lemma 9 guarantee that the probability of reaching an accepting BSCC from all $\left<Q_i,s_0 \right>$ is minimized in induced product MCs with the smallest set of winning components and the largest set of losing components respectively. Therefore, $(\mathcal{M}_{\otimes}^{\mathcal{A}})_{\text{worst}} \subseteq \mathcal{D}$. Equation \eqref{eq:teo2} and the existence of $(\mathcal{M}_{\otimes}^{\mathcal{A}})_{\text{best}}$ are proved identically.
\end{proof}\mbox{}

The equalities highlighted by this theorem are central to the elaboration of our verification procedure. We first solve a qualitative problem, which is to find the largest components of the product IMC. This can be achieved via graph search and will be the focus of the next section. Then, we compute upper and lower bound probabilities of reaching these components from all states in the product IMC using existing algorithms found in the literature \cite{chatterjee2008model} \cite{lahijanian2015formal} . By doing so, we construct a best-case product MC $(\mathcal{M}_{\otimes}^{\mathcal{A}})_u \in  (\mathcal{M}_{\otimes}^{\mathcal{A}})_{\text{best}}$ and a worst-case product MC $(\mathcal{M}_{\otimes}^{\mathcal{A}})_l \in  (\mathcal{M}_{\otimes}^{\mathcal{A}})_{\text{worst}}$ which respectively maximizes and minimizes the probability of reaching an accepting BSCC from all initial states. Note that the transition values between states inside the components do not affect the reachability probabilities and do not need to be considered.

\subsection{Winning and Losing Components Search Algorithm}

\begin{algorithm}[t!]
\caption{Find Potential and Permanent BSCCs} 
\begin{algorithmic}[1]
\STATE \textbf{Input}: Product IMC $\mathcal{I} \otimes \mathcal{A}$
\STATE \textbf{Output}: Potential and permanent BSCCs $(U^A)_?$, $(U^A)_P$, $(U^N)_?$, $(U^N)_P$

\STATE \textbf{Initialize}: $(U^A)_?$, $(U^A)_P$, $(U^N)_?$, $(U^N)_P := \emptyset$
\STATE Construct $G = (V,E)$ with a vertex for each state in $\mathcal{I} \otimes \mathcal{A}$ and an edge between states $Q_i$ and $Q_j$ if $\widehat{T}(Q_i, Q_j) > 0$
\STATE Find all SCCs of $G$ and list them in $S$
\FOR {$S_k \in S$}
\STATE $C_0 := \emptyset$,  $i := 0$
\REPEAT
\STATE $R_i := S_k \setminus \cup_{\ell = 0}^{i} C_{\ell}$; \; $Tr_i := V \setminus R_i$; \; $C_{i+1} := At_?(Tr_i, R_i)$; \; $i := i +1$
\UNTIL $C_i = \emptyset$
\IF {$i \not = 1$}
\STATE Find all SCCs of $R_i$ and add them to $S$
\ELSE
\IF{$S_k$ is accepting}
\STATE In $C$, list all states in $S_k$ mapping to some accepting set $F_i$ if no other state in $S_k$ maps to $E_i$. Find all SCCs of $S_k \setminus At_{?}(C, S_k)$ and add them to $S$.
\ELSE
\STATE For all sets $F_i$ to which at least one state in $S_k$ is mapped, set $S'_k = S_k$, list all states mapping to $E_i$ in $C$, find all SCCs of $S'_k \setminus At_{?}(C, S'_k)$ and add them to $S$.
\ENDIF
\IF{$At_{P}(V \setminus S_k, S_k) \not = \emptyset$}
\STATE $(U^A)_? := (U^A)_? \cup \{S_k\}$  or $(U^N)_? := (U^N)_? \cup \{S_k\} $ depending on the acceptance status of $S_k$.
\ELSE
\STATE $ (U^A)_P := (U^A)_P \cup \{S_k\}$  or $(U^N)_P := (U^N)_P \cup \{S_k\}$ depending on the acceptance status $S_k$ and if no other state in $S_k$ belongs to a potential BSCC of the opposite acceptance status. Else, $(U^A)_? := (U^A)_? \cup \{S_k\}$  or $(U^N)_? := (U^N)_? \cup \{S_k\}$.
\ENDIF
\ENDIF
\ENDFOR
\RETURN $(U^A)_?, (U^A)_P , (U^N)_?, (U^N)_P$
\end{algorithmic}
\end{algorithm}

We present a graph-based algorithm for finding $(WC)_P$, $(WC)_L$, $(LC)_P$ and $(LC)_L$ in a product IMC, divided into Algorithm 1 and 2. We define the sets of potential and permanent BSCCs $(U^A)_?$, $(U^A)_P$, $(U^N)_?$ and $(U^N)_P$. Algorithm 1 takes a product IMC as input and returns its potential and permanent BSCCs. Algorithm 2 takes as inputs a product IMC and its permanent and potential BSCCs and outputs $(WC)_P$, $(WC)_?$, $(LC)_P$ and $(LC)_?$. The largest components are the union of the potential and permanent components. 

We employ the following notations: a digraph $G$ is said to be generated by an induced MC $\mathcal{M}_{\otimes}^{\mathcal{A}}$ with transition matrix $T$ and states $Q \times S$ if $G$ has a representative vertex for all states in $\mathcal{M}_{\otimes}^{\mathcal{A}}$, and an edge exists between two such vertices if $T(Q_i, Q_j) > 0$, $Q_i, Q_j \in Q \times S$.  $Reach(S,G)$ denotes the set of vertices in graph $G$ from which there exists a path to the set of vertices $S$; $At_{?}(S,G)$ denotes the set of vertices in $G$ from which there exists a path to $S$ for all graphs $G'$ generated by an induced product MC of $\mathcal{I} \otimes \mathcal{A}$, where $G$ and $G'$ share the same set of vertices; and $At_{P}(S,G)$ denotes the set of vertices in $G$ from which there exists a path to $S$ for at least one graph $G'$ generated by an induced product MC of $\mathcal{I} \otimes \mathcal{A}$. A detailed description of the algorithms is found below.\\

\noindent \textbf{\underline{Algorithm 1}:}\vspace{0.2cm}

\textbf{Line 4}: We first assume all transitions with a non-zero upper bound to be ``on'' and generate a graph $G = (V,E)$ with a vertex for all states and an edge for all transitions in $\mathcal{I} \otimes A$. 

\textbf{Line 5}: Next, we find all strongly connected components (SCC) of $G$ and list them in $S$.

\textbf{Line 6 to 10}: For all SCC $S_k \in S$, we want to determine if there exists an induced MC where $S_k$ is a BSCC. To this end, for all the states $S_k^j$ in $S_k$, we check whether all outgoing transitions to states not in $S_k$ can be turned ``off'' for some induced product MC, that is if the transition lower bounds from $S_k^j$ to states in $Tr_{i} = V \setminus R_i$ are 0 and the sum of the transition upper bounds from $S_k^j$ to states in $S_k$ is greater than 1, which is captured by the use of the function $At_{?}$. Otherwise, $S_k^j$ is said to be \textit{leaky} in all induced product MCs and $S_k^j$ is added to the set $C_i$, which contains all leaky states of $S_{k}$ found at iteration $i$. Note that $R_{0} = S_{k}$ and that all leaky states previously found are removed from $S_k$ at each iteration via variable $R_i$. The loop terminates when all states have been checked and no more leaky states are found, that is $C_{i} = \emptyset$.

\textbf{Line 11 to 13}: If $S_k$ contained leaky states that were previously removed, we compute all SCCs formed by the remaining states in $R_i$ and add them to the list of SCCs of G. If $S_k$ did not contain any leaky state, it is a member of a largest set of BSCCs and the mapping of the states in $S_k$ with respect to the Rabin Pairs decides whether $S_k$ is accepting and $S_k \in (U^A)_L$  or non-accepting and $S_i \in (U^N)_L$. 

\textbf{Line 14 to 15}: If $S_k \in (U^A)_L$, it could still contain potential non-accepting BSCCs, since $(U^A)_{L}$ and $(U^N)_{L}$ may comprise intersecting sets. Treat all states causing $S_k$ to be accepting as leaky (states mapping to some $F_i$ in the Rabin pairs when no states in $S_k$ maps to $E_i$), remove from $S_k$ all states that have a permanent path to the leaky states, compute all SCCs formed by the remaining states and add them to $S$. 

\textbf{Line 16 to 17}:  If $S_k \in (U^N)_L$, potential accepting BSCCs may lie inside $S_k$. For all sets $F_i$ in the Rabin pairs to which at least one state in $S_k$ is mapped, create a ``copy'' $S_k'$ of $S_k$ where all states causing $S_k$ to be non-accepting are considered leaky (the states mapping to $E_i$), remove from $S_k'$ all states that have a permanent path to the leaky states, compute all SCCs formed by the remaining states and add them to S.

\textbf{Line 19 to 22}: We check whether some state in $B$ leaks outside of $B$ for at least one induced MC. If so, the BSCC is not permanent. Otherwise, $B$ is permanent if and only if no BSCC of the opposite acceptance status is found inside of $B$.\vspace{0.3cm}

\begin{algorithm}[t!]
\caption{Find Largest and Permanent Components} 
\begin{algorithmic}[1]
\STATE \textbf{Input}: Product IMC $\mathcal{I} \otimes \mathcal{A}$ and its potential and permanent BSCCs $(U^A)_?, (U^A)_P , (U^N)_?, (U^N)_P$
\STATE \textbf{Output}: Potential and permanent components $(WC)_?$, $(LC)_?$, $(WC)_P$, $(LC)_P$
\STATE \textbf{Initialize}: $(WC)_?$, $(WC)_P$, $(LC)_?$, $(LC)_P := \emptyset$
\STATE Construct $G = (V,E)$ with a vertex for each state in $\mathcal{I} \otimes \mathcal{A}$ and an edge between states $Q_i$ and $Q_j$ if $\widehat{T}(Q_i, Q_j) > 0$
\FOR {$B \in$ $(U^A)_? \cup (U^A)_P \cup (U^N)_? \cup (U^N)_P$}
\STATE $C_0 := \emptyset$, $V_0 := V$, $i := 0$
\REPEAT
\STATE $R_i := Reach(B \cap V_i, V_i)$; $Tr_i := V_i \setminus R_i$; $C_{i+1} := At_{?}(Tr_i, V_i)$ ; $V_{i+1} := V_i \setminus C_{i+1}$ ; $i := i+1$
\UNTIL $C_i = \emptyset$
\STATE $W := V \setminus \cup_{k=1}^{i} C_k$
\IF {$B \in$ $(U^A)_P$ or $B \in$ $(U^N)_P$}
\STATE In $D$, list the states of $V_i$ belonging to a potential BSCC with a different acceptance status from $B$
\STATE $V_P := V_i \setminus D$; $ n := i$
\REPEAT
\STATE $R_i := Reach(B \cap V_i, V_i)$; $Tr_i := V_i \setminus R_i$; $C_{i+1} := At_{P}(Tr_i, V_i)$; $V_{i+1} := V_i \setminus C_{i+1}$ ; $i := i+1$
\UNTIL $C_i = \emptyset$
\STATE $W_P := V_P \setminus \cup_{k=n}^{i} C_k$
\STATE $(WC)_P := (WC)_P \cup \{W_P\}$ or $(LC)_P := (LC)_P \cup \{W_P\}$ depending on $B$
\STATE $(WC)_? := (WC)_? \cup \{W \setminus W_P\}$ or $(LC)_? := (LC)_? \cup \{W \setminus W_P\}$ depending on $B$
\ELSE
\STATE $(WC)_? := (WC)_? \cup \{W\}$ or $(LC)_? := (LC)_? \cup \{W\}$ depending on $B$
\ENDIF
\ENDFOR
\RETURN $(WC)_?$, $(LC)_?$, $(WC)_P$, $(LC)_P$
\end{algorithmic}
\end{algorithm}

 \noindent \textbf{\underline{Algorithm 2}:}\vspace{0.2cm}
 
Inspired by the Classical Algorithm for Buchi MDPs \cite{de1999computing}, we perform a graph search to find permanent and potential winning and losing components for each BSCC. Permanent components only arise from permanent BSCCs, while potential components stem from both potential and permanent BSCCs. 

\textbf{Line 4}: We generate a graph $G = (V,E)$ where transitions with a non-zero upper bound are assumed to be ``on''. 

\textbf{Line 5 to 10} For all BSCCs $B$, we find the set $R_i$ of all states from which there is a path to $B$ in $G$. Other states in $G$ are ``trap states" denoted by $Tr_i$. Then, we iteratively remove the set of states $C_i$ from $R_i$ that ``leak" to $Tr_i$ for all induced MCs, and compute the new set $R_{i+1}$ of states that have a path to $B$ once the leaky states are discarded. The iteration stops when no more leaky states are found, that is  $C_i = \emptyset$. The remaining states belong to $(WC)_L$ or $(LC)_L$ according to $B$.

\textbf{Line 20-21}: If $B$ is a potential BSCC, these states have to belong to one of the potential components --- $(WC)_?$ or $(LC)_?$ --- depending on $B$.

\textbf{Line 11}: If $B$ is a permanent BSCC, we want to check whether the above set of states, denoted by $V_P$, contains members of the permanent components $(WC)_P$ or $(LC)_P$.

\textbf{Line 12}: If $B$ is accepting, remove potential non-accepting component from $V_P$ and treat them as trap state; for a non-accepting $B$, remove potential accepting components instead. 

\textbf{Line 13-20}:  Repeat the same procedure as in Algorithm 1, except that leaky states are now those which have a path to the trap states in at least one induced MC. The remaining states are permanent components of the same acceptance status as $B$.\\

To summarize, it is known that verification against temporal logic specifications in discrete-time MCs can be accomplished by solving a reachability problem on a product MC constructed from a Rabin automaton corresponding to the specification to be verified. The heart of this approach relies on analyzing winning and losing components of the product MC. These ideas do not directly extend to IMCs because BSCCs are not uniquely determined in this case; this is because some transitions can have a lower transition bound equal to 0 but an upper transition bound that is non-zero. Instead, we introduced the concepts of largest winning and losing components. In Theorem 2, we show that upper and lower bounds on the probability of satisfaction are obtained from these components. Algorithms 1 and 2 provide means for computing these components. Note that the proposed algorithm allows to perform verification of IMCs without constructing an exponentially large Markov decision process, as done in \cite{chatterjee2008model}.

\begin{algorithm}[t!]
\caption{State-Space Refinement Scoring Procedure} 
\begin{algorithmic}[1]
\STATE \textbf{Input}: Worst-case product MC $(\mathcal{M}_{\otimes}^{\mathcal{A}})_{l}$ and best-case product MC $(\mathcal{M}_{\otimes}^{\mathcal{A}})_{u}$ induced by the product $\mathcal{I} \otimes \mathcal{A}$
\STATE \textbf{Output}: Scores $\sigma = \left [ \sigma_{0}, \ldots, \sigma_{N} \right]$ for all states in $\mathcal{I}$
\STATE \textbf{Initialize}: $\sigma_i = 0$, with $\sigma_i$ the score of the $i$-th state of $\mathcal{I}$, $p_{stop} \in (0, 1)$ user-defined probability threshold 
\FOR {$Q_{\ell} \in Q_{\phi}^{?}$}
\STATE $\pi := q_0 := \left<Q_{\ell}, s_0 \right>$ in $(\mathcal{M}_{\otimes}^{\mathcal{A}})_{u}$
\REPEAT
\IF {$\mathcal{P}(\pi) < p_{stop}$ or $Exp(\pi) = R(\pi)$}
\STATE $\pi := \pi^{-}$
\ELSE
\IF{$Exp(\pi) \not = \emptyset$}
\STATE $q_i \rightarrow Exp(\pi)$, where $q_i$ is any state in $R(\pi) \setminus Exp(\pi)$; $\pi := \pi^{+}(q_i)$
\ELSE
\STATE $Exp(\pi) := \cup_{i} \pi_i$
\IF{$Last(\pi) \in (WC)_{?} \cup (LC)_{?}$}
\STATE $\sigma_j := \sigma_j + \mathcal{P}(\pi)(p_{max} - p_{min})$ for all states $\left<Q_j, s_i \right>$ in the potential BSCC of $ Last(\pi)$ with an outgoing transition which can be either zero or non-zero,  $p_{max}$ and $p_{min}$ are the probabilities of reaching an accepting BSCC  from $ Last(\pi)$ in $(\mathcal{M}_{\otimes}^{\mathcal{A}})_{u}$ and $(\mathcal{M}_{\otimes}^{\mathcal{A}})_{l}$ respectively; $\pi := \pi^{-}$
\ELSIF{$Last(\pi) \in (WC)_{P} \cup (LC)_{P}$}
\STATE $\pi := \pi^{-}$
\ELSE
\STATE $\sigma_j := \sigma_j + \mathcal{P}(\pi)(p_{max} - p_{min})$, where $j$ corresponds to $\left<Q_j, s_i \right> := Last(\pi)$, $p_{max}$ and $p_{min}$ are as in line 15;
\STATE $\pi :=\pi^{+}(q_i)$ where $q_i$ is any state in $R(\pi)$
\ENDIF
\ENDIF
\ENDIF
\UNTIL $\pi = \emptyset$
\ENDFOR
\RETURN $\sigma$
\end{algorithmic}
\end{algorithm}

\section{State-Space Refinement}

Given a partition $P$ of the domain $D$ and a specification $\phi$ as in \eqref{eq2}, the verification procedure derived in Section V assigns each discrete state of $P$ to one of the sets $Q_{\phi}^{yes}$, $Q_{\phi}^{no}$ or  $Q_{\phi}^{?}$. One aims to find a partition $P$ that yields a low volume of undecided states in $Q_{\phi}^{?}$. To this end, we suggest a specification-guided iterative method. Specifically, we first generate a rough partition $P$ of $D$ and successively refine $P$ into finer partitions by targeting the best candidate states for reducing the uncertainty in the abstraction with respect to $\phi$. These states are chosen after comparing the behavior of the system in the best and worst-case scenarios computed during verification. The procedure stops when a user-defined criterion is reached. Here, we terminate when the fractional volume of uncertain states is less than a threshold $V_{stop} $ $\in [0,1]$.

We seek to analyze the behavior of accepting paths in the best and worst-case product MCs $(\mathcal{M}_{\otimes}^{\mathcal{A}})_{u}$ and $(\mathcal{M}_{\otimes}^{\mathcal{A}})_{l}$ obtained at the time of verification and illustrated in Fig. 4. In particular, for every undecided state $Q_{j}$ in $Q_{\phi}^{?}$, we look at all paths starting from $ \left<Q_j, s_0 \right>$ in $(\mathcal{M}_{\otimes}^{\mathcal{A}})_{u}$ and assign a score to the states encountered along them depending on how these states behave in $(\mathcal{M}_{\otimes}^{\mathcal{A}})_{l}$.  We inspect a path until it reaches a state that belongs to either $(WC)_{L}$ or $(LC)_{L}$, or when its probability of occurrence in $(\mathcal{M}_{\otimes}^{\mathcal{A}})_{u}$ falls below a threshold $p_{stop}$. States with high scores are targeted for refinement.

We introduce some notation: for a finite path $\pi = q_0 q_1 \ldots q_k$ in $(\mathcal{M}_{\otimes}^{\mathcal{A}})_{u}$, $Last(\pi)$ denotes the last state $q_k$ of $\pi$; $\pi_{i}$ denotes the $i$-th state of $\pi$; $\mathcal{P}(\pi) = T(q_0, q_1) \cdot T(q_1, q_2) \cdot \ldots \cdot T(q_{k-1}, q_{k})$, $\mathcal{P}(q_0) = 1$, is the probability of path $\pi$ in $(\mathcal{M}_{\otimes}^{\mathcal{A}})_{u}$; $R(\pi)$ is the set of states that are one-step reachable from $Last(\pi)$ in $(\mathcal{M}_{\otimes}^{\mathcal{A}})_{u}$; $Exp(\pi)$ denotes all continuations of $\pi$ from $R(\pi)$ that have been explored and is initialized to the empty set for all $\pi$; $\pi^{-}$ is the path obtained by removing the last state of $\pi$ and $\pi^{+}(q_i)$ is the path with $q_i$ appended to $\pi$. $V_?$ is the fractional volume of uncertain states and is equal to the sum of the volume of all states in $Q_{\phi}^{?}$ divided by the volume of the domain $D$. Our procedure is as follows: \vspace{0.2cm}

\begin{enumerate}[leftmargin = 14pt]
\item Compute a refinement score for all states in $\mathcal{I}$ according to Algorithm 3, which is described below:\vspace{0.2cm}

\hspace{2mm} \textbf{Line 6 to 24}: This loop terminates when $\pi = \emptyset$, that is, when all paths starting from $\left<Q_j, s_0 \right>$ have been explored.

\hspace{1.5mm} \textbf{Line 7 to 8}: If $\mathcal{P}(\pi) < p_{stop}$ or $Exp(\pi) = R(\pi)$, the probability of the path is below the pre-defined exploration threshold or all continuations of $\pi$ have been explored. Thus, we return to the previous state in the path.

\hspace{2mm} \textbf{Line 10 to 13}: lf $Exp(\pi) \not = \emptyset$, add $q_i$ to $Exp(\pi)$, where $q_i$ is some unexplored state in $R(\pi)$ and extend the path to $q_i$. Else, $\pi$ is a path fragment which has not been explored yet. Add all states in $\pi$ to $Exp(\pi)$ to avoid loops.

\hspace{2mm} \textbf{Line 14 to 15}: If $Last(\pi) \in (WC)_{?}$ or $Last(\pi) \in (LC)_{?}$, the path reached a state in a potential component. We want to target the states which can either confirm or refute that $Last(\pi)$ belongs to such a component. These states are the ones inside the potential BSCCs that $Last(\pi)$ belongs to (or makes a transition to with probability 1)  that have outgoing transitions which can be either ``on'' or ``off", as depicted in Fig. 5. A potential "certainty gain" is added to the score of all such states and the path is returned to its previous state. If $Last(\pi)$ belongs to both $(WC)_{?}$ and $(LC)_{?}$, then the scoring scheme is applied to all intersecting potential BSCCs related to $Last(\pi)$. This heuristical gain quantifies a potential reduction in the width of the satisfaction interval of $Q_{\ell}$ in the scenario that the refinement of the considered states provides perfect information, i.e., the probability of reaching an accepting BSCC from $Last(\pi)$ becomes a fixed number.
  
\hspace{2mm} \textbf{Line 16 to 17}: If $Last(\pi) \in (WC)_{P}$ or $Last(\pi) \in (LC)_{P}$, the path reached a region of the state-space that does not require refinement as it belongs to a permanent component. The path returns to its previous state. 

\hspace{2mm} \textbf{Line 18 to 20}: Else, $Last(\pi)$ does not belong to a winning or losing component for any refinement of the product IMC. The potential ``certainty gain'' one can hope for by refining $\left<Q_j, s_i \right> = Last(\pi)$ is added to the score of $Q_j$.  The path is continued to an unexplored state.\vspace{0.2cm}

\item Refine the states in $P_k$ with scores above a user-defined threshold to generate $P_{k+1}$. \vspace{0.2cm}
\item Generate an IMC abstraction of the system with respect to $P_{k+1}$, perform model-checking and compute $V_{?}$.\vspace{0.2cm}
\item If $V_{?} > V_{stop}$, return to step 1. Else, terminate.\vspace{0.2cm}
\end{enumerate}

\begin{figure}[t]
\setlength{\belowcaptionskip}{-10pt}
\begin{center}
\includegraphics[scale=0.34]{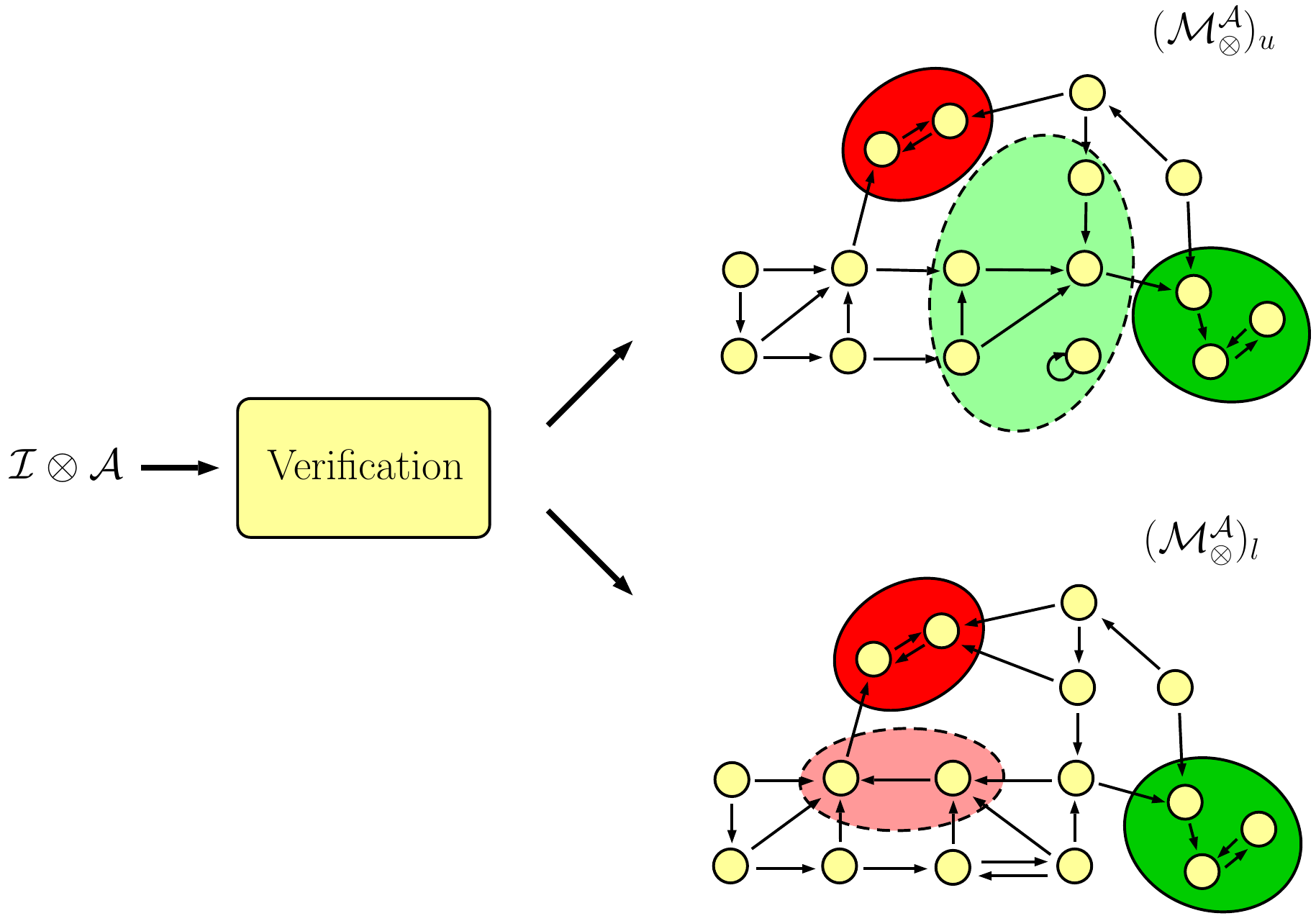}\\
\end{center}
\vspace{-0.3cm}
\caption{Our IMC verification algorithm generates a best and worst-case product MC $(\mathcal{M}_{\otimes}^ {\mathcal{A}})_u$ and $(\mathcal{M}_{\otimes}^{ \mathcal{A}})_l$. Winning and losing components are respectively in red and green; permanent and potential components are circled in bold and dotted lines respectively. Comparing the behavior of the paths in the two scenarios is the basis of our refinement algorithm.}
\end{figure}

\begin{figure}[t]
\setlength{\belowcaptionskip}{-10pt}
\begin{center}
\includegraphics[scale=0.37]{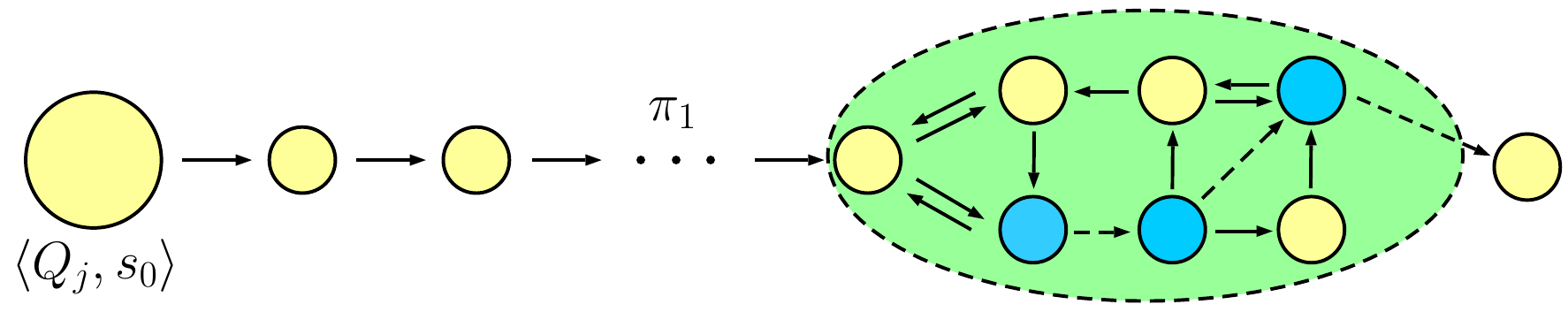}\\
\end{center}
\vspace{-0.2cm}
\caption{For all undecided states $Q_j$ of the IMC abstraction $\mathcal{I}$, we inspect all paths starting from $\left<Q_j,s_0 \right>$ in $(\mathcal{M}_{\otimes}^{\mathcal{A}})_u$ to determine which states to refine. Above is an example of a path $\pi_1$. A score is assigned to all states along $\pi_1$ as detailed in Section VI. In particular, if $\pi_1$ reaches a member of a potential BSCC, a score is assigned to the states which could possibly destroy the BSCC under refinement. These states are shown in blue and have outgoing transitions with lower bound 0.}
\end{figure}

\noindent It is not difficult to construct examples demonstrating that the volume of uncertain states $V_{?}$ need not decrease monotonically at each step of the refinement algorithm using our abstraction technique. This is because, when a parent state is refined to two children states, the sum of the upper transition bounds for the children states may be greater than the upper transition bound of the original parent state. Nevertheless, when $\mathcal{F}$ is continuous, the size of the reachable sets, and consequently the error in the transitions, approaches zero as the grid size decreases. Thus, in the limit, the volume of uncertain states $V_{?}$ decreases to zero.

A common refinement approach consists in splitting the chosen states in the partition in half along their greatest dimension. As the scoring procedure in Algorithm 3 may select the entire state-space of IMC $\mathcal{I}$ for refinement, the worst-case growth of the size of the product IMC $\mathcal{I} \otimes \mathcal{A}$ is exponential and scales in $\mathcal{O}(|S| \cdot 2^{|Q|})$, where $|S|$ and $|Q|$ are the number of states of automaton $\mathcal{A}$ and IMC $\mathcal{I}$ respectively.

However, because this path-based scoring procedure aims to target states which are most likely to reduce the volume of undecided states in the partition with respect to the specification under consideration, our refinement algorithm tends to focus on specific regions of the state-space, as shown in the next section, with the effect of mitigating state explosion. In particular, we were able to achieve lower volumes $V_{stop}$ before the number of states in the partition became prohibitive compared to our naive refinement method in \cite{dutreix2018} which systematically refined all uncertain states.\mbox{}

All algorithms discussed in this paper are implemented in a python package available at \url{https://github.com/gtfactslab/TACStochasticVerification}.

\section{Case Study}

We now apply our verification and refinement procedure in a case study. We consider a nonlinear, monotone bistable switch system with additive disturbance and governing equations
\begin{equation}
  \begin{aligned}
x_{1}[k+1] & = x_1[k]  + ( \; -a x_{1}[k] + x_{2}[k] \; ) \cdot \Delta T + w_1\\
x_{2}[k+1] & = x_2[k]  + \Big(\; \frac{(x_{1}[k])^{2}}{(x_{1}[k])^{2} + 1} - b x_{2}[k] \; \Big) \cdot \Delta T + w_2 \;\; ,
 \end{aligned}
\label{eq:90}
\end{equation}

\noindent where we assume $w_1$ and $w_2$ to be independent truncated Gaussian random variables sampled at each time step. $w_1\sim \mathcal{N}(\mu = -0.3 ; \sigma^2 = 0.1)$ and is truncated on $[-0.4, -0.2]$; $w_2$ is identical. To keep the system self-contained in $D$, we assume that any time the disturbance would push the trajectory outside of $D$, it is actually maintained on the boundary of $D$. This assumption reflects the behavior of systems with bounded capacity where the state variables are restricted to some intervals. We choose $a = 1.3$, $b =0.25$ and $\Delta T =0.05$. The deterministic piece of the system has two stable equilibria at $(0,0)$ and $(2.71, 3.52)$ and one unstable equilibrium. We seek to verify \eqref{eq:90} on a domain $D$, with initial rectangular partition $P$ depicted in Fig. 6 (Top) and Fig. 7 (Top), against the probabilistic LTL specifications 
\begin{align}
\phi_1 & = \mathcal{P}_{\geq 0.80}[\square((\neg A \wedge  \bigcirc A) \rightarrow (\bigcirc \bigcirc A \wedge \bigcirc \bigcirc \bigcirc A))]\\
\phi_2 & = \mathcal{P}_{\leq 0.90}[( \lozenge \square A \rightarrow \lozenge B) \wedge (\lozenge C \rightarrow \square \neg B)] \ .
\end{align}
\noindent Specification $\phi_1$ translates in natural language to ``trajectories that have more than a 80\% chance of remaining in an $A$ state for at least 2 more time steps when entering an $A$ state''. Specification $\phi_2$ translates to ``trajectories that have less than a 90\% chance of reaching a $B$ state if it eventually always remain in $A$, and of always staying outside of $B$ if it reaches a $C$ state''. Their Rabin automaton representations contain 5 and 7 states respectively. We perform verification with stopping criterion $V_{stop} = 0.13$ for $\phi_1$ and $V_{stop} = 0.1$ for $\phi_2$. To construct IMC abstractions of this system, we use the technique shown in Section IV. Graph search is based on Section V-B and we compute reachability bounds applying the algorithm in \cite{lahijanian2015formal}. Upon verification, we select states with an uncertainty score as defined in Section VI that is greater than $10\%$ of the highest score for refinement. Selected states are split into two rectangles along their largest dimension to keep the new partition rectangular. The procedure was conducted on a 3.3 GHz Intel Core i7 with 8 GB of memory using Python.

\begin{figure}[t]
\setlength{\belowcaptionskip}{-10pt}
\begin{center}
\includegraphics[scale=0.165]{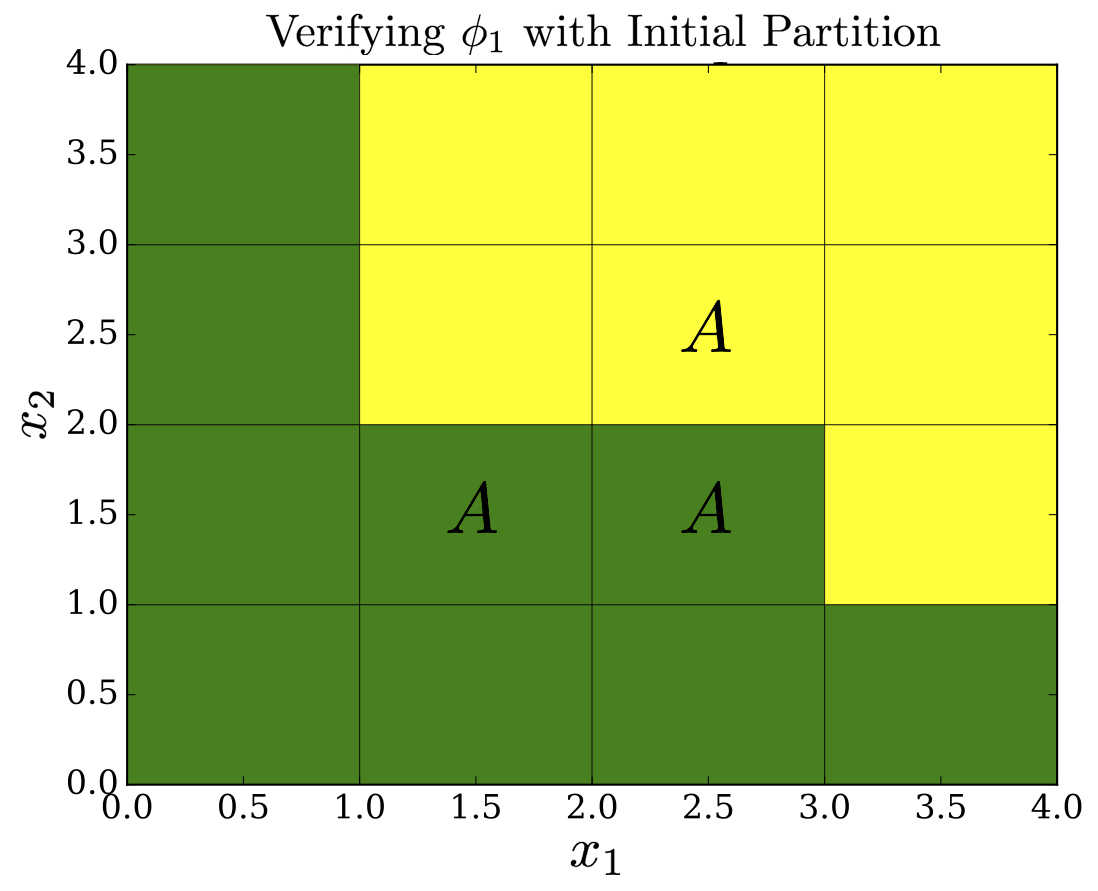}
\includegraphics[scale=0.345]{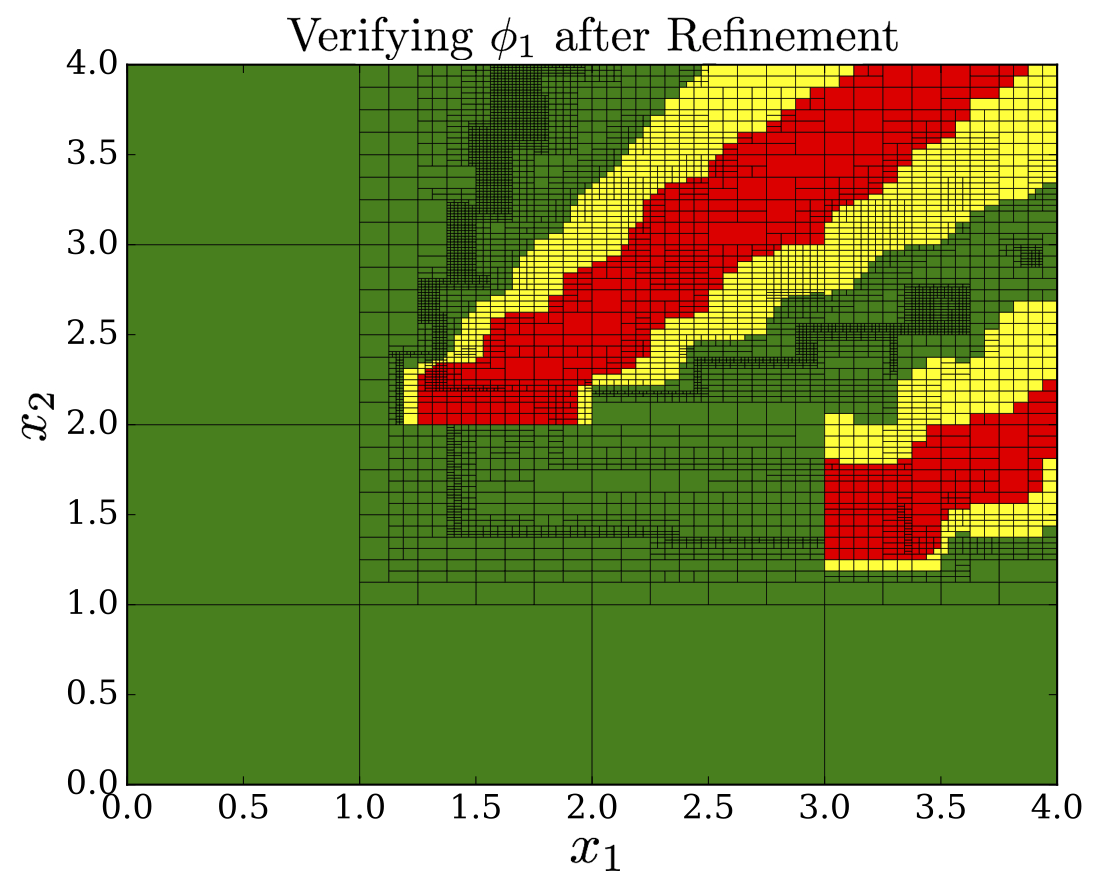}
\end{center}
\vspace{-0.5cm}
\caption{Initial verification of a partition of domain $D$ for specification $\phi_1$ (Top), and verification of final partition (Bottom). States satisfying $\phi_1$ are in green, states violating $\phi_1$ are in red, undecided states are yellow.}
\end{figure}

\begin{figure}[t]
\setlength{\belowcaptionskip}{-10pt}
\begin{center}
\includegraphics[scale=0.165]{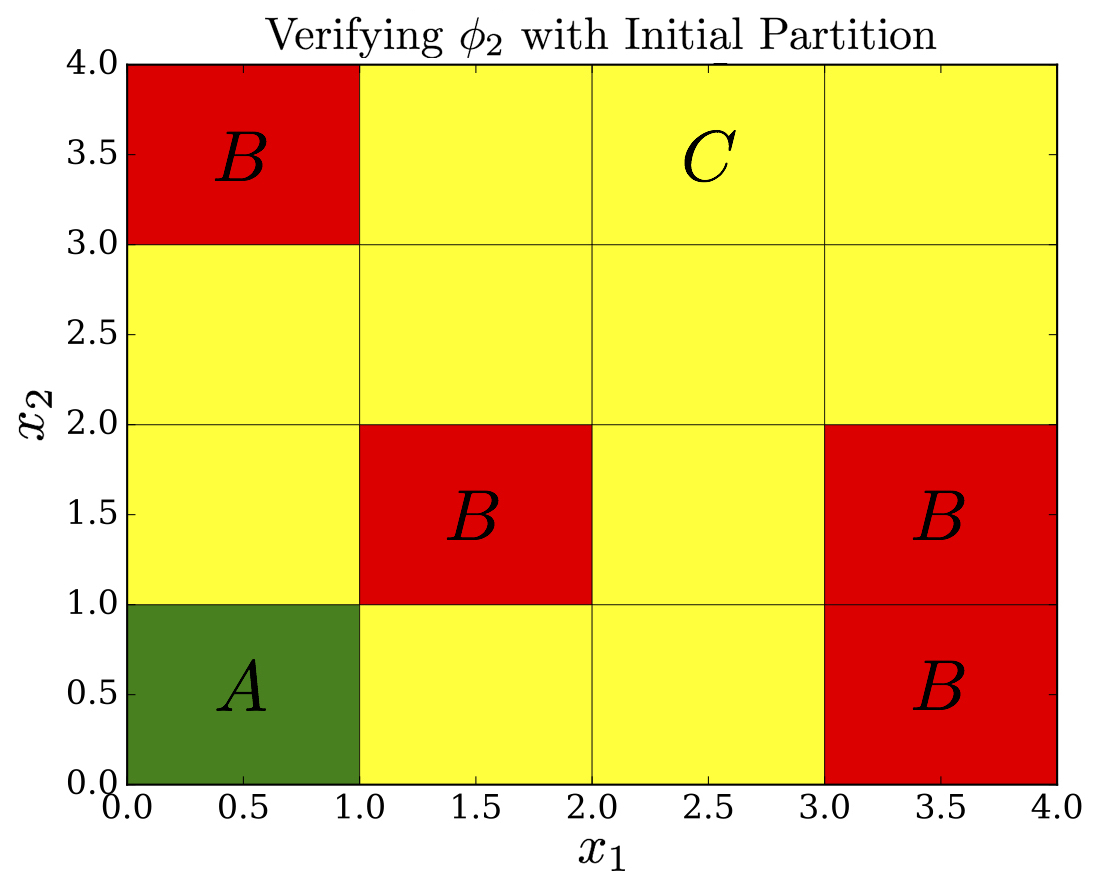}
\includegraphics[scale=0.345]{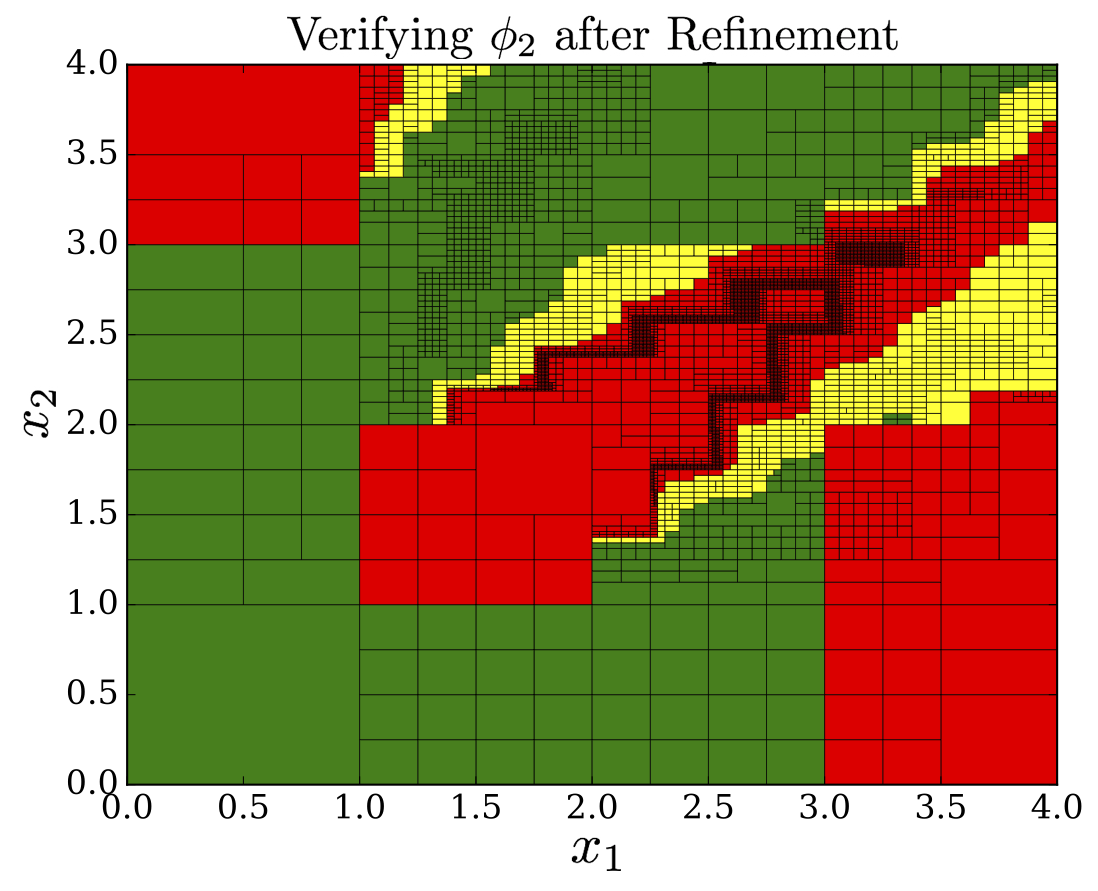}
\vspace{-0.3cm}
\caption{Initial verification of a partition of domain $D$ for specification $\phi_2$ (Top), and verification of final partition (Bottom). States satisfying $\phi_2$ are in green, states violating $\phi_2$ are in red, undecided states are yellow.}
\end{center}
\end{figure}

For $\phi_1$, the refinement algorithm produced 3531 states and terminated in 1h56min after 12 refinement steps. For $\phi_2$, it generated 4845 states and terminated in 3h15min after 13 steps. The final partitions are shown in Fig. 6 (Bottom) and Fig. 7 (Bottom). Our new method outperforms the algorithm we propose in \cite{dutreix2018} which refines all undecided states at each refinement step: for instance, for $\phi_1$, \cite{dutreix2018} achieves $V_? = 0.2137$ in 2 hours 58 min and 11 steps. 
Our algorithm non-uniformly refined the initial partition across the state-space. In the first example, the boundary between regions which can and cannot reach an $A$ state are heavily targeted, as well as boundaries between regions which could keep the system in an $A$ state for one and two time steps. In the second example, the edges of a region leading to $A$ via $B$ are refined the most, as this region is critical with respect to $\phi_2$. Although these two examples share the same dynamics, our algorithm generates very different partitions depending on the specification. Therefore, specification-free gridding approaches are likely to perform conservatively for these examples.

Because this procedure produces different partitions for different temporal objectives, our algorithm is suited for the verification of systems against well-identified specifications which are known a priori, while FAUST${}^2$, which provides error guarantees with respect to entire classes of temporal logic formulas, better accommodates situations requiring an analysis against numerous distinct specifications. Furthermore, unlike our work, FAUST${}^2$ only permits verification for specifications with bounded-time temporal operators \cite{cauchi2019stochy}.

On the other hand, the toolbox StocHy \cite{cauchi2019stochy} allows for the verification of certain unbounded-time operators through the use of IMC abstractions. However, because its verification algorithm is based on \cite{lahijanian2015formal}, it solely accommodates specification belonging to a fragment of LTL and not all $\omega$-regular 
properties. In addition, StocHy employs the IMC abstraction technique presented in \cite{cauchi2019efficiency} which applies only to affine-in-disturbance linear systems, while the abstraction method shown in this work and used in the case study involves the wide class of stochastic mixed monotone dynamics.

Finally, though similar in flavor, the objective of our verification technique is not identical to the one of FAUST${}^2$ and StocHy. The latter aim to create partitions engendering a user-defined abstraction error, whereas the goal of this work is to obtain a small volume of uncertain state with respect to a probabilistic $\omega$-regular specification. Note that low-error abstractions can generate a high volume of uncertain states while high-error abstractions can produce a low volume of uncertain states, depending on the threshold $p_{sat}$. 

%Note that these examples cannot be accommodated by popular tools such as FAUST${}^2$ or StocHy  that use the approximate MC approach instead of IMCs, and which require the disturbance to be differentiable everywhere and therefore cannot handle truncated noise. Furthermore, StocHy only accommodates specifications belonging to a fragment of LTL and consequently do not accept the formulas shown in this case study.

%In this case study, we successfully performed verification of a nonlinear stochastic system against two complex $\omega$-regular specifications. We achieved a low volume of uncertain states in a reasonable number of refinement steps using the heuristics suggested in Section VI.

\section{Conclusion}

In this paper, we described an algorithm for performing verification against $\omega$-regular properties in continuous state stochastic systems. The proposed approach relies on computing a finite state abstraction of the original system in the form of an IMC and can accommodate classes of specifications not previously covered in the literature. We have developed an efficient procedure for computing the IMC of a mixed monotone system with affine disturbance over rectangular partitions. Furthermore, we presented a specification-guided strategy for refining a finite partition of the continuous domain until a precision threshold has been met. Our technique resolves qualitative issues highlighted in the literature for abstraction-based methods by targeting states that are likely to confirm or destroy winning and losing components in a product IMC. We showed the practicality of this approach in two examples.
%

% if have a single appendix:
%\appendix[Proof of the Zonklar Equations]
% or
%\appendix  % for no appendix heading
% do not use \section anymore after \appendix, only \section*
% is possibly needed

% use appendices with more than one appendix
% then use \section to start each appendix
% you must declare a \section before using any
% \subsection or using \label (\appendices by itself
% starts a section numbered zero.)
%

\appendices
\section{Proofs of Section IV}
\subsection{Proof of Proposition \ref{prop:1}}
\begin{proof}
By Proposition \ref{prop:reach}, we observe
\begin{align}
\label{eq:8}
  \{\F(x):x\in Q_1\}\subseteq \{z: g(a^1,b^1)\leq z\leq g(b^1,a^1)\}.\quad
\end{align}
%To see this, we note that $x\in Q_1$ implies $a^1\leq x\leq b^1$ so that $g(a^1,b^1)\leq \F(x)\leq g(b^1,a^1)$ by Proposition \ref{prop:reach}, and thus \eqref{eq:8} holds. 
To prove \eqref{eq:2}, we have
\begin{align}
\nonumber&  \min_{x\in Q_1} Pr(\mathcal{F}(x)+w\in Q_2)\\
\label{eq:7}&\geq \min_{z: g(a^1,b^1)\leq z\leq g(b^1,a^1)}Pr(z+w\in Q_2)\\
\label{eq:7-2}&=\min_{z: g(a^1,b^1)\leq z\leq g(b^1,a^1)}\prod_{i=1}^n Pr(z_i+w_i\in [a^2_i,b^2_i])\\
\label{eq:7-3}&=\prod_{i=1}^n\min_{z_i: g_i(a^1,b^1)\leq z_i\leq g_i(b^1,a^1)} Pr(z_i+w_i\in [a^2_i,b^2_i])\qquad
\end{align}
where \eqref{eq:7} follows from \eqref{eq:8}, \eqref{eq:7-2} follows from the mutual independence of all components of $w$ in Assumption \ref{assum:indep}, and \eqref{eq:7-3} holds because $g(a^1,b^1)\leq z\leq g(b^1,a^1)$ if and only if $g_i(a^1,b^1)\leq z_i\leq g_i(b^1,a^1)$ for all $i=1,\ldots, n$. Then \eqref{eq:2} holds since $Pr(z_i+w_i\in [a^2_i,b^2_i])=\int_{a_i^2}^{b_i^2}f_{w_i}(x-z_i) dx$. By a symmetric argument replacing $\min$ with $\max$, \eqref{eq:5} holds.
\end{proof}

\subsection{Proof Lemma \ref{lem:main}}
\begin{proof}
%Without loss of generality, we assume $c=0$ since, for $c\neq 0$, we may consider the change of coordinates $\tilde{x}=x-c$. We therefore take $s_{max} = \frac{a+b}{2}$. Then
For $s\in\mathbb{R}$, define $H(s)=\int_{a}^{b} f_{\omega}(x-s) \; dx$. We claim
\begin{align}
  H(s_{max})=\max_{s\in\mathbb{R}} H(s),
\end{align}
and that, for all $s_1,s_2\in\mathbb{R}$ such that $|s_{max}-s_1|\geq |s_{max}-s_2|$, it holds that $H(s_1)\leq H(s_2)$, that is, $H(s)$ monotonically decreases as $|s_{max}-s|$ increases. Assuming the claim to be true, it follows that $\max_{s\in[r_1,r_2]} H(s)=H(s^r_{max})$ and $\min_{s\in[r_1,r_2]} H(s)=H(s^r_{min})$, \emph{i.e.}, \eqref{eq:6} and \eqref{eq:10}, completing the proof. To prove the claim, we have, for all $s\in\mathbb{R}$,
\begin{align}
\nonumber&  H(s_{max})-H(s)\\
&=\int_{a}^{b} f_{\omega}(x-s_{max}) \; dx - \int_{a}^{b} f_{\omega}(x-s) \; dx\\
&=\int_{a-s_{max}}^{b-s_{max}} f_{\omega}(x) \; dx - \int_{a-s}^{b-s} f_{\omega}(x) \; dx\\
&=\int_{a-s_{max}}^{a-s} f_{\omega}(x) \; dx - \int_{b-s_{max}}^{b-s} f_{\omega}(x) \; dx\\
\label{eq:12}&=\int_{0}^{s_{max}-s} \bigg[{\textstyle f_\omega(x+\frac{a-b}{2}-c)-f_\omega(x+\frac{b-a}{2}-c)}\bigg] \; dx. \quad \quad
%\label{eq:12}&=\int_{0}^{s_{max}-s} \bigg[f_w(x-a+s_{max}) - f_w(x-b+s_{max})\bigg] \; dx.
\end{align}
Moreover, because $f_\omega$ is symmetric and unimodal with mode $c$, $f_\omega(x+\frac{a-b}{2}-c)-f_\omega(x+\frac{b-a}{2}-c)$
% \begin{align}
%   &f_\omega(x-a+s_{max}) - f_\omega(x-b+s_{max})\\
% &\textstyle =f_\omega(x+\frac{a-b}{2}-c)-f_\omega(x+\frac{b-a}{2}-c)
% \end{align}
is an odd function of $x$ and is negative for $x>0$ and positive for $x<0$. Therefore, the integral in \eqref{eq:12} is nonnegative and monotonically decreases as $|s_{max}-s|$ increases, thus proving the claim.
\end{proof}
%We will show that $\int_{a}^{b} f_{\omega}(x-s) \; dx$ is maximized when $s=s_{max}$ and monotonically decreases as $|s_{max}-s|$ increases.

% \begin{align}
% \int_{a}^{b} f_{w}(x-s_{max)} \; dx - \int_{a}^{b} f_{w}(x-s) \; dx.
% \end{align}
% We use u-substitution on the left integral to obtain

% \begin{align}
% &=\int_{a-s_{max}}^{b-s_{max}} f_{w}(x) \; dx - \int_{a}^{b} f_{w}(x-s) \; dx\\
% &=\int_{a-s_{max}}^{a-s} f_{w}(x) \; dx - \int_{b-s_{max}}^{b-s} f_{w}(x-s) \; dx\\
% &=\int_{0}^{s_{max}-s} \bigg[f_w(x+\frac{a+b}{2}) - f_w(x+\frac{b-a}{2})\bigg] \; dx.\\
% \nonumber
% \end{align}

% \noindent We remark that:

% \begin{itemize}
% \item For $s<s_{max}$, $f_w(x+\frac{a+b}{2})$ is pointwise greater than $f_w(x+\frac{b-a}{2})$ over the interval of integration.
% \item For $s>s_{max}$, $f_w(x+\frac{b-a}{2})$ is pointwise greater than $f_w(x+\frac{a+b}{2})$ over the interval of integration.
%\end{itemize}

\subsection{Proof of Theorem \ref{thm:1}}
\begin{proof}
For all $Q_j,Q_\ell\in P$ and $i=1,\ldots,n$, 
  \begin{align}
\nonumber \min_{\substack{z_i\in[ g_i(a^j,b^j), g_i(b^j,a^j)]}}&\int_{a_i^\ell}^{b_i^\ell}f_{w_i}(x-z_i) dx       \\
    \label{eq:34}&= \int_{a^{\ell}_{i}}^{b^{\ell}_{i}} f_{w_{i}}(x_{i}-s_{i,min}^{j \rightarrow \ell})\; dx_{i},\\
\nonumber \max_{\substack{z_i\in[ g_i(a^j,b^j), g_i(b^j,a^j)]}}&\int_{a_i^\ell}^{b_i^\ell}f_{w_i}(x-z_i) dx\\
       &= \int_{a^{\ell}_{i}}^{b^{\ell}_{i}} f_{w_{i}}(x_{i}-s_{i,max}^{j \rightarrow \ell})\; dx_{i},    
  \end{align}
by Lemma \ref{lem:main}. Then, by Proposition \ref{prop:1}, 
\begin{align}
  \label{eq:35}
  \min_{x\in Q_j} Pr(\mathcal{F}(x)+w\in Q_\ell)\geq \prod_{i=1}^n \int_{a^{\ell}_{i}}^{b^{\ell}_{i}} f_{w_{i}}(x_{i}-s_{i,min}^{j \rightarrow \ell})\; dx_{i}\\
  \label{eq:35-2}  \max_{x\in Q_j} Pr(\mathcal{F}(x)+w\in Q_\ell)\leq \prod_{i=1}^n \int_{a^{\ell}_{i}}^{b^{\ell}_{i}} f_{w_{i}}(x_{i}-s_{i,max}^{j \rightarrow \ell})\; dx_{i},
\end{align}
so that \eqref{eq:15}--\eqref{eq:16} implies \eqref{eq:3}--\eqref{eq:3-2}. Furthermore, \eqref{eq:35}--\eqref{eq:35-2} implies $\widecheck{T}(Q_j,Q_\ell)\leq \widehat{T}(Q_j,Q_\ell)$ and \eqref{eq:3}--\eqref{eq:3-2} implies \eqref{eq:36} so that $\mathcal{I}=(P, \widecheck{T},\widehat{T})$ is a valid IMC, concluding the proof.
\end{proof}

\section{Lemmas and Proofs of Section V}

%In this appendix, we derive the lemmas used to prove Theorem 2 in Section V.

\begin{lemma}
\cite{baier2008principles} For any infinite sequence of states $\pi = q_{0}q_{1}q_{2}\ldots$ in a Markov Chain, there exists an index $i \geq 0$ such that $q_i$ belongs to a BSCC.
\end{lemma}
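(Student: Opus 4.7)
The plan is to interpret the lemma in its intended probabilistic sense: almost every realized sample path in a finite Markov chain eventually enters a BSCC. I would first invoke the standard decomposition of the finite state space $Q$ into transient states $T$ and recurrent states $R$, where $R$ is exactly the union of all BSCCs of the chain. The key structural fact I would establish is that from any state $q \in T$ there exists a finite path of length at most $|Q|$ reaching $R$; indeed, a state from which no BSCC is reachable would itself lie in a bottom strongly connected subgraph by finiteness and maximality, contradicting its classification as transient.

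Next, I would quantify this with a uniform positive lower bound. Since $T$ is finite, the quantity $p := \min_{q \in T} Pr(\text{reach } R \text{ within } |Q| \text{ steps} \mid q_0 = q)$ is strictly positive. Applying the Markov property iteratively, the probability that the path remains within $T$ for the first $k\cdot|Q|$ time steps is at most $(1-p)^k$, which vanishes as $k \to \infty$. Consequently, the event that the path never enters any BSCC has probability zero, so almost surely there exists a finite index $i$ with $q_i \in R$, and hence $q_i$ lies in some BSCC.

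The main obstacle is the statement's literal universal phrasing (``\emph{for any} infinite sequence''); strictly read it is false, since one can produce measure-zero symbolic sequences that remain in $T$ forever (e.g., the constant sequence at a transient state with a self-loop of probability less than one). I would resolve this interpretive issue by following the convention of the cited \cite{baier2008principles}: the claim is understood as holding for the sample paths of positive probability measure, in which case the Borel--Cantelli-style estimate above yields the result. Beyond this interpretive step, the proof is routine and mirrors the standard treatment in finite Markov chain theory, so I do not expect further technical difficulty.
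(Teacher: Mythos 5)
Your proof is correct and is precisely the standard argument that the paper defers to by citation: the paper gives no proof of its own for this lemma, simply pointing to \cite{baier2008principles}, where the result appears as an almost-sure statement proved exactly via the transient/recurrent decomposition and a geometric bound on the probability of avoiding the recurrent set. Your interpretive remark is also on point --- the lemma as literally phrased (``for any infinite sequence'') is false, and the correct reading, consistent with the cited source and with how the lemma is used in Corollary 2, is that the event of eventually entering a BSCC has probability one.
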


\begin{corollary}
For any initial state $ \left< Q_i, s_0 \right>$ in an induced product MC $\mathcal{M}_{\otimes}^{\mathcal{A}}$,
\begin{align}
\mathcal{P}_{\mathcal{M}_{\otimes}^{\mathcal{A}}}( \left<Q_i,s_0 \right> \models \Diamond U^A) + \mathcal{P}_{\mathcal{M}_{\otimes}^{\mathcal{A}}}(\left< Q_i,s_0 \right> \models \Diamond U^N) = 1 \ .
\end{align}
\end{corollary}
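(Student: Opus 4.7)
The plan is to argue that every infinite path almost surely reaches exactly one of the two sets $U^A$ and $U^N$, so the two reachability probabilities partition the unit mass of path measure starting at $\langle Q_i, s_0\rangle$.

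First, I would invoke Lemma 4 directly: every infinite path $\pi = q_0 q_1 q_2 \ldots$ in the induced product MC $\mathcal{M}_\otimes^{\mathcal{A}}$ contains an index $i \geq 0$ such that $q_i$ lies in some BSCC $B$. Since every BSCC is, by Definitions 8 and 9, either accepting or non-accepting, we have $B \subseteq U^A$ or $B \subseteq U^N$. Hence every infinite path satisfies either $\Diamond U^A$ or $\Diamond U^N$, so
\begin{equation}
\mathcal{P}_{\mathcal{M}_\otimes^{\mathcal{A}}}(\langle Q_i, s_0\rangle \models \Diamond U^A \vee \Diamond U^N) = 1.
\end{equation}

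Next, I would show the two events are almost surely disjoint. Suppose some path reaches a state $q_j \in U^A$, so $q_j$ belongs to some accepting BSCC $B_A$. By the third bullet in Definition 7, $\sum_{t \in B_A} T(q_j, t) = 1$, so the path remains in $B_A$ forever with probability one. Since $B_A$ is accepting, no state of $B_A$ lies in any non-accepting BSCC, hence the path never enters $U^N$. A symmetric argument applies if the path first reaches $U^N$. Therefore $\Diamond U^A$ and $\Diamond U^N$ are mutually exclusive up to a null set.

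Combining these two facts,
\begin{align}
1 &= \mathcal{P}_{\mathcal{M}_\otimes^{\mathcal{A}}}(\langle Q_i, s_0\rangle \models \Diamond U^A \vee \Diamond U^N)\\
&= \mathcal{P}_{\mathcal{M}_\otimes^{\mathcal{A}}}(\langle Q_i, s_0\rangle \models \Diamond U^A) + \mathcal{P}_{\mathcal{M}_\otimes^{\mathcal{A}}}(\langle Q_i, s_0\rangle \models \Diamond U^N),
\end{align}
which is the claim. The only subtle point is the mutual-exclusivity step, since $U^A$ and $U^N$ are defined as the union of \emph{all} accepting (resp. non-accepting) BSCCs; this is handled cleanly by the absorbing property of any single BSCC established in Definition 7. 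No further machinery is required beyond Lemma 4 and the basic definitions.
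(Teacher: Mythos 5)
Your argument is correct and is exactly the route the paper intends: the corollary is stated without proof as an immediate consequence of the preceding lemma (every path a.s.\ enters a BSCC), combined with the fact that $U^A$ and $U^N$ are disjoint absorbing sets by the BSCC definition, which you spell out carefully via the condition $\sum_{t\in B}T(s,t)=1$. The only cosmetic issue is that your lemma/definition numbers are shifted relative to the paper's (the path-to-BSCC lemma and the BSCC, accepting, and non-accepting definitions carry different indices), but the content you invoke is the right one.
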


\begin{lemma}
For any initial state $ \left< Q_i, s_0 \right>$ in an induced product Markov Chain $\mathcal{M}_{\otimes}^{\mathcal{A}}$,
\begin{align}
\mathcal{P}_{\mathcal{M}_{\otimes}^{\mathcal{A}}}( \left<Q_i,s_0 \right> \models \Diamond WC) + \mathcal{P}_{\mathcal{M}_{\otimes}^{\mathcal{A}}}(\left< Q_i,s_0 \right> \models \Diamond LC) = 1 \ .
\end{align}
\end{lemma}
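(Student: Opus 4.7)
The plan is to derive this identity directly by chaining together two results already stated in the excerpt, namely Corollary 2 (equating reachability of $U^A$ with reachability of $WC$, and similarly for $U^N$ and $LC$) and Corollary 3 (which gives the partition identity $\mathcal{P}(\Diamond U^A) + \mathcal{P}(\Diamond U^N) = 1$). No new probabilistic machinery should be needed.

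First I would invoke Corollary 2 to substitute
\begin{align*}
\mathcal{P}_{\mathcal{M}_{\otimes}^{\mathcal{A}}}(\left<Q_i,s_0\right> \models \Diamond WC) &= \mathcal{P}_{\mathcal{M}_{\otimes}^{\mathcal{A}}}(\left<Q_i,s_0\right> \models \Diamond U^A),\\
\mathcal{P}_{\mathcal{M}_{\otimes}^{\mathcal{A}}}(\left<Q_i,s_0\right> \models \Diamond LC) &= \mathcal{P}_{\mathcal{M}_{\otimes}^{\mathcal{A}}}(\left<Q_i,s_0\right> \models \Diamond U^N).
\end{align*}
Adding these two equalities and applying Corollary 3 to the right-hand side yields exactly the claimed identity. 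The whole argument is two lines.

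The only subtlety worth double-checking is that Corollary 2 is indeed applicable to an arbitrary induced product MC $\mathcal{M}_{\otimes}^{\mathcal{A}}$; it was stated as holding in "any product MC," so this is immediate. Similarly, Corollary 3 (which itself was obtained from Lemma 2, asserting that every infinite path in a finite MC almost surely enters some BSCC) applies to any induced product MC, and $U^A$ and $U^N$ are by definition disjoint and together cover all BSCC states, so $\Diamond U^A$ and $\Diamond U^N$ are almost-sure complementary events. I do not expect any real obstacle here — the statement is essentially a corollary of Corollary 2 and Corollary 3, and the proof should be presented as a short chain of equalities rather than a standalone argument.
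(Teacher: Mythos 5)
Your proposal is correct and matches the paper's own argument, which likewise derives the identity by combining the partition identity $\mathcal{P}(\Diamond U^A)+\mathcal{P}(\Diamond U^N)=1$ with the earlier corollary equating the reachability probabilities of $U^A$ with $WC$ and of $U^N$ with $LC$ (your corollary numbering is shifted by one relative to the paper, but the content is the same). No gaps.
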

\begin{proof}
This lemma follows from Corollary 2.
\end{proof}

%Next, we show that, given any BSCC generated by some induced MC of an IMC, there exists an induced MC with a largest set of states reaching that BSCC with probability 1.

\begin{lemma}
Let $\mathcal{I}$ be an IMC and let $\mathcal{M}_1$ and $\mathcal{M}_2$ be two MCs induced by $\mathcal{I}$ where the set $B$ is a BSCC for both. If $C_1$ and $C_2$ are the sets of states such that $\mathcal{P}_{\mathcal{M}_1}( C_1 \models \Diamond  B) = 1$ and $\mathcal{P}_{\mathcal{M}_2}( C_2 \models \Diamond  B) = 1$, then there exists a MC $\mathcal{M}_3$ induced by $\mathcal{I}$ such that $\mathcal{P}_{\mathcal{M}_3}( (C_1 \cup C_2) \models \Diamond  B) = 1$.
\end{lemma}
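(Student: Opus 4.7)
The plan is to construct $\mathcal{M}_3$ by splicing together the transition rows of $\mathcal{M}_1$ and $\mathcal{M}_2$ in a way that respects the partition $C_1, C_2\setminus C_1$, and then to verify reachability of $B$ by a case analysis. Concretely, letting $T_1$ and $T_2$ denote the transition matrices of $\mathcal{M}_1$ and $\mathcal{M}_2$, I would define $\mathcal{M}_3$'s transition matrix $T_3$ by
\begin{align*}
T_3(q,\cdot) = \begin{cases} T_1(q,\cdot) & \text{if } q \in C_1,\\ T_2(q,\cdot) & \text{if } q \in C_2 \setminus C_1,\\ T_{\text{any}}(q,\cdot) & \text{otherwise},\end{cases}
\end{align*}
where $T_{\text{any}}(q,\cdot)$ is any row induced by $\mathcal{I}$. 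Note that $B \subseteq C_1 \cap C_2$ (since every state in the BSCC $B$ reaches $B$ with probability one under either chain), so rows of $B$ are assigned by $T_1$, which already keeps $B$ as a BSCC. Each row is taken from a chain induced by $\mathcal{I}$, so $\mathcal{M}_3$ is itself induced by $\mathcal{I}$.

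The key structural observation is that $C_1$ is \emph{closed} under $T_1$, and $C_2$ is closed under $T_2$. Indeed, if some $q \in C_1$ had $T_1(q,q')>0$ for a state $q' \notin C_1$, then by definition $\mathcal{P}_{\mathcal{M}_1}(q' \models \Diamond B) < 1$, and the law of total probability would force $\mathcal{P}_{\mathcal{M}_1}(q \models \Diamond B) < 1$, contradicting $q \in C_1$. The same argument applies to $C_2$ under $T_2$. Because $\mathcal{M}_3$ matches $\mathcal{M}_1$ on every row indexed by $C_1$, a trajectory of $\mathcal{M}_3$ started in $C_1$ never leaves $C_1$ and is distributed identically to the corresponding trajectory in $\mathcal{M}_1$; hence it reaches $B$ with probability one.

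For the remaining case $q \in C_2 \setminus C_1$, trajectories of $\mathcal{M}_3$ agree in law with those of $\mathcal{M}_2$ up to the first time $\tau$ of exit from $C_2 \setminus C_1$, since both chains use $T_2$'s rows on this subset. Closure of $C_2$ under $T_2$ and $\mathcal{P}_{\mathcal{M}_2}(q \models \Diamond B)=1$ with $B \subseteq C_1 \cap C_2$ force $\tau < \infty$ almost surely, and the state at time $\tau$ lies in $C_1 \cap C_2 \subseteq C_1$. From that moment onward $\mathcal{M}_3$ evolves according to $T_1$ and, by the previous paragraph, reaches $B$ with probability one. Combining the two cases gives $\mathcal{P}_{\mathcal{M}_3}((C_1 \cup C_2)\models \Diamond B)=1$.

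The main obstacle is the potential pitfall that switching transition kernels at the boundary $C_2 \setminus C_1 \to C_1$ might change the reachability behavior, and it is resolved precisely by the two closure properties above: once the chain enters $C_1$ it cannot leak out under $T_1$, and the statistics on $C_2\setminus C_1$ under $T_3$ coincide with those under $\mathcal{M}_2$ until the first escape. Beyond this bookkeeping the argument is routine, and no additional assumptions on the behavior of $\mathcal{M}_3$ outside $C_1 \cup C_2$ are required.
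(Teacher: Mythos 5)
Your construction of $\mathcal{M}_3$ --- using $T_1$'s rows on $C_1$, $T_2$'s rows on $C_2 \setminus C_1$, and arbitrary induced rows elsewhere --- is exactly the construction in the paper's proof, and your closure and stopping-time observations simply make explicit the reachability argument the paper states more tersely. The proof is correct and takes essentially the same approach.
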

\begin{proof}
Let $T_1$ and $T_2$ denote the transition matrices of $\mathcal{M}_1$ and $\mathcal{M}_2$ respectively, and $Q$ denote the set of states in $\mathcal{I}$. Consider an induced MC $\mathcal{M}_3$ where $T_3(Q_i, Q_j) = T_1(Q_i, Q_j) \; \forall Q_i \in C_1 $ and $\forall \; Q_j \in Q$, and $T_3(Q_i, Q_j) = T_2(Q_i, Q_j) \; \forall Q_i  \in C_2 \setminus (C_1 \cap C_2 )$ and $\forall \; Q_j \in Q$. By assumption, any state in $C_1$ reaches $B$ with probability 1, while all states in $C_2 \setminus (C_1 \cap C_2 )$ reach $B \cup (C_1 \cap C_2)$ with probability 1. Since $\mathcal{P}_{\mathcal{M}_3}((C_1 \cap C_2) \models \Diamond  B) = 1$ by construction, we have $\mathcal{P}_{\mathcal{M}_3}( (C_1 \cup C_2) \models \Diamond  B) = 1$. 
\end{proof}

%From Lemma 4 and \cite{dutreixCDC2018}, we infer that any product IMC induces a set of product MCs with a largest set of losing components.

\begin{lemma}
Let $\mathcal{I} \otimes \mathcal{A}$ be a product IMC and $(LC)_i$ be the losing components of any product MC $(\mathcal{M}_{\otimes}^{\mathcal{A}})_i$ induced by $\mathcal{I} \otimes \mathcal{A}$. There exists a set of product MCs induced by $\mathcal{I} \otimes  \mathcal{A}$ with losing components $(LC)_L$ and such that $(LC)_i \subseteq (LC)_L$.
\end{lemma}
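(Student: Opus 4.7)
The plan is to construct, by iterated applications of Lemma 4, a single induced product MC whose set of losing components is exactly $(LC)_L$; the set of all such induced product MCs is then a witness. Since the state set of $\mathcal{I}\otimes\mathcal{A}$ is finite, only finitely many subsets can appear as BSCCs across all induced MCs, so let $B_1,\ldots,B_m$ be an enumeration of the distinct non-accepting BSCCs that appear in at least one induced product MC. By definition, $(LC)_L = \bigcup_{\alpha=1}^{m} L_\alpha$, where $L_\alpha$ is the union, over all induced MCs in which $B_\alpha$ is a non-accepting BSCC, of the attractor of $B_\alpha$ (the set of states reaching $B_\alpha$ with probability one).

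First I would fix an $\alpha$ and iterate Lemma 4 over all induced product MCs that realize $B_\alpha$ as a BSCC. Each application of Lemma 4 merges the transition rules of two induced MCs on disjoint pieces of their respective attractors (as in the construction in Lemma 4's proof: on $C_1$ use $T_1$, on $C_2\setminus C_1$ use $T_2$), producing a new induced MC in which $B_\alpha$ is still a BSCC but whose attractor is the union of the two attractors. Since the enumeration is finite, a finite induction yields an induced MC $\mathcal{M}_\alpha^*$ in which $B_\alpha$ is a BSCC with attractor exactly $L_\alpha$.

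Next I would splice these $\mathcal{M}_1^*,\ldots,\mathcal{M}_m^*$ together. The crucial observation is that distinct BSCCs of a given MC are disjoint, and the sets $B_1,\ldots,B_m$ are pairwise disjoint as subsets of the product state space (two states cannot simultaneously lie in two different BSCCs of the same MC). I would then define a transition assignment $T^*$ by picking, for each state $q$, the outgoing row of whichever $\mathcal{M}_\alpha^*$ places $q$ either in $B_\alpha$ or in the attractor-to-$B_\alpha$; for the remaining states, any valid induced assignment will do. Each row chosen is induced by $\mathcal{I}\otimes\mathcal{A}$, so $T^*$ itself is induced. By construction, every $B_\alpha$ remains a BSCC (its internal transitions are unchanged), and every state in $L_\alpha$ still reaches $B_\alpha$ with probability one. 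Hence the set of losing components of the resulting MC contains $\bigcup_\alpha L_\alpha = (LC)_L$, and the reverse inclusion is immediate from the definition of $(LC)_L$.

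The main obstacle I anticipate is the splicing step: I must verify that assigning outgoing rows from different $\mathcal{M}_\alpha^*$ to disjoint groups of states does not accidentally destroy the BSCC property of some $B_\alpha$ or introduce paths that make an attractor state leak outside its component. The disjointness of the $B_\alpha$ and of their attractors' commitments (each state is committed at most once, to the unique $\alpha$ for which $q \in B_\alpha \cup L_\alpha$) is what makes the construction well-posed; I would have to argue that when $q$ lies in several $L_\alpha$ simultaneously, choosing any one of the corresponding rows suffices, because reaching any $B_\alpha$ from $q$ lands inside $(LC)_L$. Once that bookkeeping is handled, the conclusion follows and the set of all induced MCs obtained by such compatible splicings is the desired set.
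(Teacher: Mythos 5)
Your overall strategy — iterate Lemma 4 to enlarge attractors and then splice the resulting induced MCs together — is the same in spirit as the paper's argument, which deduces the lemma from Lemma 4 together with a prior result (cited from the authors' CDC 2018 paper) that a product IMC induces a set of MCs realizing a \emph{largest set of non-accepting BSCCs} simultaneously. But your write-up contains a genuine gap, and it sits exactly where the paper leans on that prior result. You assert that the distinct non-accepting BSCCs $B_1,\ldots,B_m$ arising across \emph{different} induced MCs are pairwise disjoint, justifying this by the fact that two states cannot lie in two different BSCCs of the \emph{same} MC. That justification does not transfer: in an IMC, a transition can have lower bound $0$ and nonzero upper bound, so the same state can belong to different non-accepting BSCCs in different induced MCs (e.g., $\{q_1,q_2\}$ is a BSCC when a transition out of $q_2$ is switched off, while $\{q_1,q_2,q_3\}$ is a BSCC when it is switched on; or two BSCCs may overlap without nesting). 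The paper itself emphasizes this non-uniqueness of BSCCs (Fig.~3 and the discussion in Algorithm 1 of potential BSCCs with intersecting supports). Since the $B_\alpha$ need not be disjoint, your rule ``pick, for each state $q$, the outgoing row of whichever $\mathcal{M}_\alpha^*$ places $q$ in $B_\alpha$ or its attractor'' is not well-posed as stated: a state committed to keeping $B_\alpha$ closed may be required by $B_\beta$ to transition outside $B_\alpha$, so not all $B_\alpha$ can be BSCCs of the spliced chain, and you must instead argue that every state of every $L_\alpha$ still reaches \emph{some} non-accepting BSCC of the spliced chain with probability one.

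That residual claim is plausible and is essentially what the cited result from \cite{dutreixCDC2018} establishes, but it is not ``immediate bookkeeping'': after resolving a conflict at an overlap state $q\in B_\alpha\cap B_\beta$ in favor of one component, you must show that the states relying on the other component are still absorbed into non-accepting behavior, and that the resolution process terminates in a consistent global assignment (the natural argument is an induction on a suitable ordering of the overlapping components, or a direct appeal to the largest-set-of-non-accepting-BSCCs construction). Your acknowledgment that the splicing is ``the main obstacle'' is well placed; as written, though, the disjointness claim on which the construction's well-posedness rests is false, so the proof does not go through without replacing that step by an argument of the kind just described.
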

\begin{proof}
We proved in \cite{dutreixCDC2018} that any product IMC induces a set of MCs with a largest set of non-accepting BSCCs. Lemma 5 is deduced from this fact and Lemma 4.
\end{proof}

%Similarly, we demonstrate constructively that any product IMC induces a set of product MCs with a largest winning component in the following two lemmas.

\begin{lemma}
Let $\mathcal{I} \otimes \mathcal{A}$ be a product IMC. Let $(\mathcal{M}_{\otimes}^ \mathcal{A})_1$ and $(\mathcal{M}_{\otimes}^ \mathcal{A})_2$ be two product MCs induced by $\mathcal{I} \otimes \mathcal{A}$ with sets of accepting BSCC $U^A_1$ and $U^A_2$ respectively. There exists a set of product MCs induced by $\mathcal{I} \otimes \mathcal{A}$ with winning components $(WC)_3$ and such that $( U^A_1 \cup U^A_2 ) \subseteq (WC)_3$. 
\end{lemma}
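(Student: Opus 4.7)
The plan is to exhibit a single induced product MC $\mathcal{M}_3$ witnessing the lemma, by combining the transitions of $(\mathcal{M}_{\otimes}^{\mathcal{A}})_1$ and $(\mathcal{M}_{\otimes}^{\mathcal{A}})_2$ in the spirit of Lemma~4. Let $T_1$ and $T_2$ denote their transition matrices. Define $T_3$ row by row by $T_3(q, \cdot) = T_1(q, \cdot)$ if $q \in U^A_1$, $T_3(q, \cdot) = T_2(q, \cdot)$ if $q \in U^A_2 \setminus U^A_1$, and $T_3(q, \cdot) = T_1(q, \cdot)$ for every remaining state. Because each row of $T_3$ is copied from one of the induced matrices, $\mathcal{M}_3$ is itself a product MC induced by $\mathcal{I} \otimes \mathcal{A}$. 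I will then let $(WC)_3$ be the collection of winning components of $\mathcal{M}_3$ and verify $U^A_1 \cup U^A_2 \subseteq (WC)_3$.

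For the $U^A_1$ inclusion, observe that every accepting BSCC $B_1$ of $(\mathcal{M}_{\otimes}^{\mathcal{A}})_1$ contained in $U^A_1$ is preserved intact in $\mathcal{M}_3$: its states all use $T_1$, which keeps transitions inside $B_1$; strong connectivity on $B_1$ is inherited from $(\mathcal{M}_{\otimes}^{\mathcal{A}})_1$; maximality still holds because there are no transitions out of $B_1$ in $\mathcal{M}_3$; and the Rabin acceptance condition depends only on the labels of the states in $B_1$, not on transitions. Hence each such $B_1$ is an accepting BSCC of $\mathcal{M}_3$, giving $U^A_1 \subseteq (WC)_3$. For $U^A_2 \setminus U^A_1$, fix a state $q$ lying in some accepting BSCC $B_2$ of $(\mathcal{M}_{\otimes}^{\mathcal{A}})_2$. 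If $B_2 \cap U^A_1 = \emptyset$, then every state of $B_2$ uses $T_2$ in $\mathcal{M}_3$, so $B_2$ is preserved as an accepting BSCC of $\mathcal{M}_3$ by the same reasoning. If $B_2 \cap U^A_1 \neq \emptyset$, I would use a coupling argument: starting from $q$, the trajectories of $\mathcal{M}_3$ and $(\mathcal{M}_{\otimes}^{\mathcal{A}})_2$ are identically distributed up to the first visit to $B_2 \cap U^A_1$, because all intermediate states lie in $B_2 \setminus U^A_1$ and use $T_2$. Since $B_2$ is a BSCC of $(\mathcal{M}_{\otimes}^{\mathcal{A}})_2$ and $B_2 \cap U^A_1$ is non-empty, this hitting time is almost surely finite in $(\mathcal{M}_{\otimes}^{\mathcal{A}})_2$, hence in $\mathcal{M}_3$. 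Once the process enters $U^A_1$ it follows $T_1$ and, by the previous paragraph, reaches an accepting BSCC of $\mathcal{M}_3$ with probability one; therefore $q$ is in a winning component of $\mathcal{M}_3$.

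The hard part is justifying the almost-sure hitting claim in the second subcase, since $\mathcal{M}_3$ restricted to $B_2 \setminus U^A_1$ could \emph{a priori} trap trajectories in a proper sub-BSCC from which $B_2 \cap U^A_1$ is never reached. The coupling sketched above is what rules this out: the law of $\mathcal{M}_3$ and of $(\mathcal{M}_{\otimes}^{\mathcal{A}})_2$ match exactly before absorption in $U^A_1$, and inside the BSCC $B_2$ of $(\mathcal{M}_{\otimes}^{\mathcal{A}})_2$ every state is almost-surely hit from every other state. Making this coupling step precise is the only non-routine element; the rest of the proof is bookkeeping about which transitions are inherited from $T_1$ versus $T_2$ and the invariance of the Rabin acceptance condition under these inheritances.
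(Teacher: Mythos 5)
Your proof is correct and follows essentially the same route as the paper's: both build a hybrid induced MC that uses $T_1$ on $U^A_1$ and $T_2$ on $U^A_2 \setminus U^A_1$, and both rest on the observation that, before absorption into $U^A_1$, the process started in a BSCC $B_2$ of $(\mathcal{M}_{\otimes}^{\mathcal{A}})_2$ is distributed exactly as under $T_2$, so it almost surely hits $B_2 \cap U^A_1$ whenever that set is nonempty. Your version is in fact slightly more careful than the paper's, which argues at the level of the whole sets $U^A_1, U^A_2$ and defers the disjoint case to a citation, whereas you correctly split the argument BSCC by BSCC and make the coupling explicit.
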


\begin{proof}
Let $T_1$ and $T_2$ denote the transition matrices of $(\mathcal{M}_{\otimes}^{\mathcal{A}})_1$ and $(\mathcal{M}_{\otimes}^{\mathcal{A}})_2$ respectively, and $Q$ denote the set of states in $\mathcal{I} \otimes \mathcal{A}$. Assume $U^A_1 \cap U^A_2 = \emptyset$. There exists a set of product MCs induced by $\mathcal{I} \otimes \mathcal{A}$ such that $U^A_1 \cup U^A_2$ are accepting BSCCs (see \cite{dutreixCDC2018}), and thus winning components. If $U^A_1 \cap U^A_2 \not = \emptyset$, consider the set of all product MCs $(\mathcal{M}_{\otimes}^{\mathcal{A}})_i$ induced by $\mathcal{I} \otimes \mathcal{A}$ such that, for all transition matrices $T_i$ of the product MCs in this set, $T_i(Q_i, Q_j) = T_1(Q_i, Q_j) \; \forall Q_i \in U^A_1 $ and $\forall \; Q_j \in Q$, and $T_i(Q_i, Q_j) = T_2(Q_i, Q_j) \; \forall Q_i \in U^A_2 \setminus (U^A_1 \cap U^A_2 )$ and $\forall \; Q_j \in Q$. Clearly, $U^A_1$ is an accepting BSCC in all $(\mathcal{M}_{\otimes}^{\mathcal{A}})_i$. For all $(\mathcal{M}_{\otimes}^{\mathcal{A}})_i$, it holds that $\mathcal{P}_{(\mathcal{M}_{\otimes}^{\mathcal{A}})_i}((U^A_2 \setminus (U^A_1 \cap U^A_2 )) \models \Diamond  (U^A_1 \cap U^A_2)) = 1$, since $U^A_2$ is a BSCC for the same probability assignments in $(\mathcal{M}_{\otimes}^{\mathcal{A}})_2$. Thus, $U^A_1 \cup U^A_2$ are winning components with respect to all $(\mathcal{M}_{\otimes}^{\mathcal{A}})_i$.
\end{proof}

\begin{lemma}
Let $\mathcal{I} \otimes \mathcal{A}$ be a product IMC and $(WC)_i$ be the winning components of any product MC $(\mathcal{M}_{\otimes}^{\mathcal{A}})_i$ induced by $\mathcal{I} \otimes \mathcal{A}$. There exists a set of product MCs induced by $\mathcal{I} \otimes  \mathcal{A}$ with winning component $(WC)_L$ and such that $(WC)_i \subseteq (WC)_L$.
\end{lemma}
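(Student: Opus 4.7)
The strategy mirrors the proof of Lemma 5, with Lemma 6 playing the analogous role that the largest-non-accepting-BSCC result from \cite{dutreixCDC2018} played there, and a Lemma 4--style merging step completing the argument.

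First, I would establish the existence of a set $\mathcal{D}$ of induced product MCs in which the union $U^A_L = \bigcup_i U^A_i$ of accepting BSCCs across all induced product MCs constitutes a set of winning components. Lemma 6 yields this for any pair of induced MCs; iterating Lemma 6 over all combinatorially distinct induced MCs --- of which there are only finitely many, since the BSCC and winning-component structure depends solely on the zero vs.\ nonzero pattern of transitions --- produces the desired $\mathcal{D}$.

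Next, fix any induced product MC $(\mathcal{M}_{\otimes}^{\mathcal{A}})_i$ with winning components $(WC)_i$, so that every $q \in (WC)_i$ reaches $U^A_i \subseteq U^A_L$ with probability 1 in $(\mathcal{M}_{\otimes}^{\mathcal{A}})_i$. Pick $\mathcal{M}^* \in \mathcal{D}$ and construct a new induced MC in the style of Lemma 4: use $\mathcal{M}^*$'s transitions at every state in $U^A_L$ (so $U^A_L$ remains a set of winning components), and use $(\mathcal{M}_{\otimes}^{\mathcal{A}})_i$'s transitions at every state in $(WC)_i \setminus U^A_L$ (so those states still reach $U^A_L$ with probability 1). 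In the resulting MC, every state of $(WC)_i$ reaches an accepting BSCC with probability 1, i.e., belongs to a winning component. Iterating this absorption across all induced MCs yields an induced MC in which every state of $(WC)_L = \bigcup_i (WC)_i$ belongs to a winning component, proving the lemma.

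The main obstacle I anticipate is handling overlap between successive $(WC)_i$: a state $s$ may appear in $(WC)_i \setminus U^A_L$ for several $i$, forcing a choice of transitions at $s$. The resolution, paralleling the argument in the proof of Lemma 4, is to freeze the transitions at any state already absorbed into the running merged MC's winning components, and to inherit transitions at newly visited states from the relevant $(\mathcal{M}_{\otimes}^{\mathcal{A}})_i$. Reachability of $U^A_L$ from any such $s$ is preserved because every continuation of $s$ in the merged MC eventually enters states whose transitions were committed earlier and that already reach $U^A_L$ with probability 1.
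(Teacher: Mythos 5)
Your proposal is correct and follows essentially the same route as the paper, whose proof of this lemma is simply the one-line remark that it ``follows from Lemmas 4 and 6'': you use Lemma 6 (iterated over the finitely many combinatorially distinct induced MCs) to obtain induced MCs in which the union of all accepting BSCCs is winning, and then a Lemma 4--style patching of transition matrices to absorb each $(WC)_i$ while freezing transitions on already-winning states. Your explicit treatment of the overlap issue between successive $(WC)_i$ is exactly the mechanism implicit in the paper's proof of Lemma 4, so no gap remains.
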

\begin{proof}
This lemma follows from Lemmas 4 and 6.
\end{proof}

%The following two lemmas show that the probability of reaching an accepting BSCC from any initial state is maximized in induced product MCs with a largest winning component and a smallest losing component, and minimized in induced product MCs with a largest losing component and a smallest winning component.

\begin{lemma}
Let $\mathcal{I} \otimes \mathcal{A}$ be a product IMC. Let $(\mathcal{M}_{\otimes}^ \mathcal{A})_1$ and $(\mathcal{M}_{\otimes}^{\mathcal{A}})_2$ be two product MCs induced by $\mathcal{I} \otimes \mathcal{A}$ with winning components $(WC)_1$ and $(WC)_2$ respectively, and such that $(WC)_2 \subseteq (WC)_1$. Also, their losing components $(LC)_1$ and $(LC)_2$ are such that $(LC)_1 = (LC)_2 = LC$ and their respective transition matrices $T_1$ and $T_2$ satisfy $T_1(Q_i, Q_j) = T_2(Q_i, Q_j)  \; \forall Q_i \in (Q \times S) \setminus ((WC)_1 \cup LC)$ and $\forall Q_j \in (Q \times S)$. The sets of accepting BSCCs of $(\mathcal{M}_{\otimes}^ \mathcal{A})_1$ and $(\mathcal{M}_{\otimes}^ \mathcal{A})_2$ are denoted by $U^A_1$ and $U^A_2$ respectively. For any initial state $\left<Q_i,s_0 \right>$, it holds that
\begin{align*}
\mathcal{P}_{(\mathcal{M}_{\otimes}^ \mathcal{A})_1}(\left<Q_i,s_0 \right> \models \Diamond U^A_1) \geq \mathcal{P}_{(\mathcal{M}_{\otimes}^{\mathcal{A}})_2}(\left<Q_i,s_0 \right> \models \Diamond U^A_2) \; .
\end{align*} 
\end{lemma}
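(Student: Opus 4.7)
The plan is to compare the two chains by conditioning on the first hitting time of the set $C := (WC)_1 \cup LC$, on the complement of which the two transition matrices agree by hypothesis. Let $p_i(q) := \mathcal{P}_{(\mathcal{M}_{\otimes}^{\mathcal{A}})_i}(q \models \Diamond U^A_i)$ for $i=1,2$. I will first record the boundary values on $C$: for $q\in (WC)_1$ we have $p_1(q)=1$ by definition of a winning component, and trivially $p_2(q)\leq 1$; for $q\in LC=(LC)_1=(LC)_2$ we have $p_1(q)=p_2(q)=0$ by Corollary~2 (or directly because losing components almost surely reach a non-accepting BSCC). Hence $p_1(q)\geq p_2(q)$ on $C$.

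Next, I will argue that from any initial state $q\notin C$ the hitting time $\tau_C$ of $C$ is finite almost surely in both chains, and that the law of the trajectory up to time $\tau_C$ is identical in the two chains. Finiteness follows because every BSCC of either chain lies in $C$: the non-accepting BSCCs of each chain lie in $(LC)_1=(LC)_2=LC$ by definition of the losing components, the accepting BSCCs of $(\mathcal{M}_\otimes^{\mathcal{A}})_1$ lie in $(WC)_1$, and the accepting BSCCs of $(\mathcal{M}_\otimes^{\mathcal{A}})_2$ lie in $(WC)_2\subseteq (WC)_1$; together with Lemma~3 this gives $\tau_C<\infty$ a.s. The equality of the two path laws up to $\tau_C$ is immediate from the hypothesis $T_1(q,\cdot)=T_2(q,\cdot)$ for all $q\in (Q\times S)\setminus C$, since on this set the one-step kernels driving the two chains coincide until absorption into $C$.

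With both ingredients in hand, the strong Markov property yields
\begin{align}
p_i(q) \;=\; \mathbb{E}_q\!\left[p_i(X_{\tau_C})\right]
\;=\; \sum_{q'\in (WC)_1} \mathbb{P}_q(X_{\tau_C}=q')\,p_i(q'),
\end{align}
where the common probability measure $\mathbb{P}_q$ does not depend on $i$, and where the $LC$-terms vanish since $p_i(q')=0$ for $q'\in LC$. Applying the pointwise inequality $p_1(q')=1\geq p_2(q')$ for every $q'\in (WC)_1$ inside the sum gives $p_1(q)\geq p_2(q)$ for all $q\notin C$. Combined with the boundary comparison on $C$ already established, this proves the claim for every initial state, in particular for $\langle Q_i,s_0\rangle$.

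The only subtlety I anticipate is the finiteness of $\tau_C$ and the rigorous use of the strong Markov property at $\tau_C$; these follow from the standard fact that a finite-state Markov chain almost surely enters a BSCC together with the containment of all BSCCs of both chains in $C$ argued above. Once $\tau_C<\infty$ a.s.\ and the two path laws up to $\tau_C$ coincide, the reachability probabilities decompose as a mixture over the (common) distribution of $X_{\tau_C}$, and the inequality follows term by term from the boundary comparison $p_1\geq p_2$ on $(WC)_1\cup LC$.
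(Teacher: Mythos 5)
Your proof is correct, and it rests on the same two pillars as the paper's: the chains share their transition kernel outside $C=(WC)_1\cup LC$, and the reachability values compare pointwise on $C$ (equal to $1$ versus at most $1$ on $(WC)_1$, both $0$ on $LC$ via Corollary~2). Where you differ is in how the comparison is propagated to states outside $C$. The paper performs a first-step decomposition of $\mathcal{P}_{(\mathcal{M}_{\otimes}^{\mathcal{A}})_1}(\langle Q_i,s_0\rangle\models\Diamond U^A_1)$ and compares it term by term against the corresponding one-step expansion for $(\mathcal{M}_{\otimes}^{\mathcal{A}})_2$; as written, that is really one iteration of a fixed-point argument and implicitly relies on the reachability vector being the solution of the associated linear system. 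You instead decompose over the first hitting time $\tau_C$ of $C$, justify $\tau_C<\infty$ a.s.\ by noting that every BSCC of either chain lies in $C$, observe that the path law up to $\tau_C$ coincides for the two chains, and apply the strong Markov property to get $p_i(q)=\mathbb{E}_q\left[p_i(X_{\tau_C})\right]$. This is a cleaner, fully rigorous rendering of the same idea: the single mixture over the common hitting distribution replaces the paper's one-step comparison, which would otherwise need to be iterated or backed by a monotonicity argument for the reachability fixed point. One small point worth stating explicitly in your write-up: the identity $p_i(q)=\mathbb{E}_q\left[p_i(X_{\tau_C})\right]$ requires $U^A_i\subseteq C$ (so that the event $\Diamond U^A_i$ cannot occur strictly before $\tau_C$); this holds because accepting BSCCs are contained in the respective winning components and $(WC)_2\subseteq(WC)_1$ — you invoke this containment for finiteness of $\tau_C$, but it is also what licenses the displayed identity.
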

\begin{proof}
For any initial state $\left<Q_i,s_0 \right> \in LC $, it holds that $\mathcal{P}_{(\mathcal{M}_{\otimes}^ \mathcal{A})_1}( \left<Q_i,s_0 \right> \models \Diamond U^A_1) = \mathcal{P}_{(\mathcal{M}_{\otimes}^ \mathcal{A})_2}(\left<Q_i,s_0 \right> \models \Diamond U^A_2) = 0$. For any initial state $\left<Q_i,s_0 \right> \in ((WC)_1 \cap (WC)_2)$, it holds that $\mathcal{P}_{(\mathcal{M}_{\otimes}^ \mathcal{A})_1}(\left<Q_i,s_0 \right> \models \Diamond U^A_1) = \mathcal{P}_{(\mathcal{M}_{\otimes}^ \mathcal{A})_2}(\left<Q_i,s_0 \right> \models \Diamond U^A_2) = 1$. For any initial state $\left<Q_i,s_0 \right> \in ((WC)_1 \setminus (WC)_2)$, it holds that $\mathcal{P}_{(\mathcal{M}_{\otimes}^ \mathcal{A})_1}(\left<Q_i,s_0 \right> \models \Diamond U^A_1) = 1 \geq \mathcal{P}_{(\mathcal{M}_{\otimes}^ \mathcal{A})_2}(\left<Q_i,s_0 \right> \models \Diamond U^A_2)$. For any initial state $\left<Q_i,s_0 \right> \in (Q \times S) \setminus ((WC)_1 \cup LC)$ (denoted by $H$ for clarity), we have
\begin{align*}
& \mathcal{P}_{(\mathcal{M}_{\otimes}^ \mathcal{A})_1}(\left<Q_i,s_0 \right> \models \Diamond U^A_1) = \\
& T_{1}(\left<Q_i,s_0 \right>, (WC)_1 \setminus (WC)_2) \cdot \mathcal{P}_{(\mathcal{M}_{\otimes}^ \mathcal{A})_1}(\; (WC)_1 \setminus (WC)_2  \models  \\
& \Diamond U^A_1 \;) + T_{1}(\left<Q_i,s_0 \right>, (WC)_2) \cdot \mathcal{P}_{(\mathcal{M}_{\otimes}^ \mathcal{A})_1}(\; (WC)_2  \models \Diamond U^A_1)\\
& + \sum_{Q_j \in H}T_{1}(\left<Q_i,s_0 \right>, Q_j) \cdot \mathcal{P}_{(\mathcal{M}_{\otimes}^ \mathcal{A})_2}(\; Q_j  \models \Diamond U^A_2 \;) \\
& \geq \\
&  \sum_{Q_j \in (WC)_1 \setminus (WC)_2}T_{2}(\left<Q_i,s_0 \right>, Q_j) \cdot \mathcal{P}_{(\mathcal{M}_{\otimes}^ \mathcal{A})_2}(\; Q_j  \models \Diamond U^A_2 \;) \\
& + T_{2}(\left<Q_i,s_0 \right>, (WC)_2) \cdot \mathcal{P}_{(\mathcal{M}_{\otimes}^ \mathcal{A})_2}(\; (WC)_2  \models \Diamond U^A_2)\\
& + \sum_{Q_j \in H}T_{2}(\left<Q_i,s_0 \right>, Q_j) \cdot \mathcal{P}_{(\mathcal{M}_{\otimes}^ \mathcal{A})_2}(\; Q_j  \models \Diamond U^A_2 \;) \\
& = \mathcal{P}_{(\mathcal{M}_{\otimes}^ \mathcal{A})_2}(\left<Q_i,s_0 \right> \models \Diamond U^A_2)
\end{align*}
\noindent based on the transition matrices assumptions.
\end{proof}\mbox{}

\begin{lemma}
Let $\mathcal{I} \otimes \mathcal{A}$ be a product IMC. Let $(\mathcal{M}_{\otimes}^ \mathcal{A})_1$ and $(\mathcal{M}_{\otimes}^{\mathcal{A}})_2$ be two product MCs induced by $\mathcal{I} \otimes \mathcal{A}$ with losing components $(LC)_1$ and $(LC)_2$ respectively, and such that $(LC)_2 \subseteq (LC)_1$. Also, their winning components $(WC)_1$ and $(WC)_2$ are such that $(WC)_1 = (WC)_2 = WC$ and their respective transition matrices $T_1$ and $T_2$ satisfy $T_1(Q_i, Q_j) = T_2(Q_i, Q_j)  \; \forall Q_i \in (Q \times S) \setminus ((LC)_1 \cup WC)$ and $\forall Q_j \in (Q \times S)$. The sets of non-accepting BSCCs of $(\mathcal{M}_{\otimes}^ \mathcal{A})_1$ and $(\mathcal{M}_{\otimes}^ \mathcal{A})_2$ are denoted by $U^N_1$ and $U^N_2$ respectively. For any initial state $\left<Q_i,s_0 \right>$, it holds that
\begin{align*}
\mathcal{P}_{(\mathcal{M}_{\otimes}^ \mathcal{A})_1}(\left<Q_i,s_0 \right> \models \Diamond U^N_1) \geq \mathcal{P}_{(\mathcal{M}_{\otimes}^{\mathcal{A}})_2}(\left<Q_i,s_0 \right> \models \Diamond U^N_2) \;.
\end{align*} 
\end{lemma}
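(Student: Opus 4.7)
The plan is to mirror closely the proof strategy of Lemma 8, but applied to losing components and non-accepting BSCCs. The key symmetry is: in Lemma 8 we enlarged the set of winning components while keeping losing components fixed, and showed that reachability to $U^A$ weakly increases; here we enlarge the set of losing components while keeping winning components fixed, and want reachability to $U^N$ to weakly increase.

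First, I would partition the state space $(Q \times S)$ into four disjoint regions: (i) the common winning component $WC$, (ii) the smaller losing component $(LC)_2$, (iii) the extra losing states $(LC)_1 \setminus (LC)_2$, and (iv) the remainder $H := (Q \times S) \setminus ((LC)_1 \cup WC)$, on which the transition matrices $T_1$ and $T_2$ agree by hypothesis. For initial state $\langle Q_i, s_0\rangle$ in each of regions (i)--(iii) the inequality is immediate: in (i), both probabilities equal $0$ by Lemma 3 applied to $WC$; in (ii), both equal $1$ since $(LC)_2 \subseteq (LC)_1$; in (iii), the left-hand probability is $1$ and the right-hand probability is at most $1$.

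The substantive case is $\langle Q_i, s_0\rangle \in H$. Here I would expand the one-step reachability probability in $(\mathcal{M}_\otimes^\mathcal{A})_1$ as a sum over the four destination regions, exactly as in the proof of Lemma 8. Using $T_1 = T_2$ on $H$ and grouping: transitions into $WC$ contribute $0$ on both sides, transitions into $(LC)_2$ contribute the same value $1$ on both sides, and transitions into $(LC)_1 \setminus (LC)_2$ contribute $1$ on the left but $\mathcal{P}_{(\mathcal{M}_\otimes^\mathcal{A})_2}(\,\cdot\,\models \Diamond U^N_2) \leq 1$ on the right. The inequality then follows term by term, using for the transitions that remain inside $H$ the same substitution trick employed in Lemma 8, namely replacing $p_1$ by $p_2$ on $H$-destinations to obtain exactly $\mathcal{P}_{(\mathcal{M}_\otimes^\mathcal{A})_2}(\langle Q_i,s_0\rangle \models \Diamond U^N_2)$.

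The main obstacle is rigorously justifying that replacement on $H$: informally, the expansion must not simply terminate after one step but rather propagate through arbitrarily long paths. I would handle this either by induction on path length (bounding the probability contribution from paths of length $\le k$ and taking $k\to\infty$) or by viewing $p_1$ and $p_2$ as the unique solutions of their respective Bellman systems on $H$ (which are well-posed because the transition submatrix on $H$ is strictly substochastic, since every state in $H$ has positive probability of eventually leaving $H$). Subtracting the two Bellman systems yields $(I - T|_H)(p_1 - p_2) = r$ on $H$, where $r \geq 0$ collects the nonnegative contributions from $(LC)_1 \setminus (LC)_2$, and $(I - T|_H)^{-1}$ has nonnegative entries; therefore $p_1 - p_2 \geq 0$, which is the desired inequality.
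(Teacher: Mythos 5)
Your proposal is correct and follows essentially the same route as the paper, which simply states that the proof is identical to that of Lemma 8; your four-region decomposition and one-step expansion mirror that argument with the roles of winning and losing components exchanged. Your additional Bellman-system (or path-length induction) justification for the substitution on $H$ supplies rigor that the paper's own proof of Lemma 8 leaves implicit, but it does not change the underlying approach.
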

\begin{proof}
The proof is identical to the one of Lemma 8
\end{proof}

\begin{lemma}
Let $\mathcal{I} \otimes \mathcal{A}$ be a product IMC with permanent and largest sets $(WC)_{P}$, $(LC)_{P}$, $(WC)_{L}$ and $(LC)_{L}$ as previously defined. There exists a set of induced MCs of $\mathcal{I} \otimes \mathcal{A}$ whose sets of winning and losing components are $(WC)_{P}$ and $(LC)_L$, and a set of induced MCs whose sets of losing and winning components are $(LC)_P$ and $(WC)_{L}$.
\end{lemma}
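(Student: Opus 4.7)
The plan is to construct the two required MCs by explicitly combining witness MCs, in the same spirit as Lemma 4 and as in the arguments behind Lemmas 5 and 7. I will describe the first assertion in detail, namely the existence of an induced MC $\mathcal{M}_1$ with losing components $(LC)_L$ and winning components $(WC)_P$, and only note that the second assertion follows by a dual argument that swaps the roles of accepting and non-accepting BSCCs and invokes Lemma 7 in place of Lemma 5.

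First, by Lemma 5 there is an induced MC $\mathcal{M}^*$ whose set of losing components is exactly $(LC)_L$. Next, for each $q \in (WC)_? = (WC)_L \setminus (WC)_P$, non-permanence of $q$ in the winning component yields an induced MC $\mathcal{M}_q$ in which $q$ is not winning; hence $q$ reaches some non-accepting BSCC $B_q$ in $\mathcal{M}_q$ with strictly positive probability, and $B_q \subseteq (LC)_L$ by maximality of $(LC)_L$. In particular, in the directed graph $G$ underlying $\mathcal{I} \otimes \mathcal{A}$ whose edges are the pairs with $\widehat{T} > 0$, there is a directed path from every $q \in (WC)_?$ into $(LC)_L$. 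I would then fix a reachability forest in $G$ rooted at $(LC)_L$ that covers every such $q$, and define $\mathcal{M}_1$ as follows: for every state in $(LC)_L$, copy the transitions of $\mathcal{M}^*$; for every state outside $(LC)_L$, choose any valid transition distribution in which the forest edge toward $(LC)_L$ receives strictly positive mass. Such a distribution exists because $\widehat{T} > 0$ along every forest edge and because the IMC axioms guarantee a feasible distribution in the interval $[\widecheck{T}, \widehat{T}]$.

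It then remains to verify three facts in $\mathcal{M}_1$: (i) the losing-component set equals $(LC)_L$, which follows because losing components are determined by the transitions emitted from $(LC)_L$, left unchanged from $\mathcal{M}^*$, together with the maximality of $(LC)_L$; (ii) every $q \in (WC)_?$ is non-winning, because the forest guarantees a positive-probability path from $q$ into the non-accepting BSCCs that populate $(LC)_L$ in $\mathcal{M}_1$; and (iii) the winning-component set equals exactly $(WC)_P$, since members of $(WC)_P$ are winning in every induced MC by definition, while by (ii) and the maximality of $(WC)_L$ no other state is winning. The main obstacle is (ii): although each witness $\mathcal{M}_q$ only certifies a positive-probability path for its own $q$, I must install a single globally consistent transition matrix that realizes this for every $q \in (WC)_?$ simultaneously. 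I would handle this by assigning exactly one forest edge per intermediate state and exploiting the slack inside the interval $[\widecheck{T}, \widehat{T}]$ on each outgoing transition to absorb any conflicts between the different $\mathcal{M}_q$'s. Once this unified construction is in hand, the second assertion follows by the symmetric argument that interchanges winning with losing throughout.
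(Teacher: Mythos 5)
Your proposal is correct and is essentially the paper's own argument: both start from an induced MC realizing $(LC)_L$ as the set of losing components (Lemma 5), use non-permanence to obtain, for each state of $(WC)_?$, a positive-probability escape route, patch these rows together into a single induced MC, and conclude by maximality of $(WC)_L$ and $(LC)_L$; your reachability forest is simply a more explicit treatment of the simultaneity step that the paper dispatches in one line, and your escape target (a non-accepting BSCC inside $(LC)_L$) makes the non-winningness conclusion slightly more direct than the paper's escape to the complement of the winning region. The one slip is the claim that $\widehat{T}>0$ on a forest edge together with the IMC axioms guarantees a valid row giving that edge positive mass --- this fails when the lower bounds out of a state already sum to one --- but it is harmless here because your forest edges come from positive-probability paths in the witness MCs $\mathcal{M}_q$, so each edge is already realized with positive probability by an admissible row that you can copy.
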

\begin{proof}
Consider the set $\mathcal{C}$ of all induced MCs of $\mathcal{I} \otimes \mathcal{A}$ whose set of losing components is $(LC)_{L}$. For any $\left<Q_i, s_0 \right> \in ((WC)_? \setminus (LC)_{L})$, consider an induced product MC $\mathcal{M} \in \mathcal{C}$ with transition matrix $T$ such that $\left<Q_i, s_0 \right>$ is not a winning component of $\mathcal{M}$. Such an induced product MC always exists by the definition of $(WC)_?$ and Lemma 8. Denote by $(WC)_?^{\mathcal{M}}$ the winning components of $\mathcal{M}$ which also belong to $(WC)_?$. There exists an induced product MC $\mathcal{M'}$ with transition matrix $T'$ such that, for all $q_{i} \in (WC)_?^{\mathcal{M}}$, $\mathcal{P}_{\mathcal{M'}}(q_{i} \models \Diamond \neg ((WC)_?^{\mathcal{M}} \cup (WC)_P)) > 0$, otherwise $q_{i} \in (WC)_{P}$, which is a contradiction. Consider the induced product MC $\mathcal{M''} \in \mathcal{C}$ with transition matrix $T''$ such that $T''(q_{i}, q_{j}) = T'(q_{i}, q_{j})$ for all $q_{i} \in (WC)_?^{\mathcal{M}}$ and $q_{j} \in (Q \times S)$,  and $T'' = T$ for all other transitions. The sets of winning and losing components of $\mathcal{M''}$ are $(WC)_{P}$ and $(LC)_{L}$, proving the claim.
The proof with respect to $(LC)_{P}$ and $(WC)_{L}$ is identical.
\end{proof}

\bibliographystyle{IEEEtran}
\bibliography{TAC1.bib}

\begin{IEEEbiography}[{\includegraphics[width=1in,height=1.25in,clip,keepaspectratio]{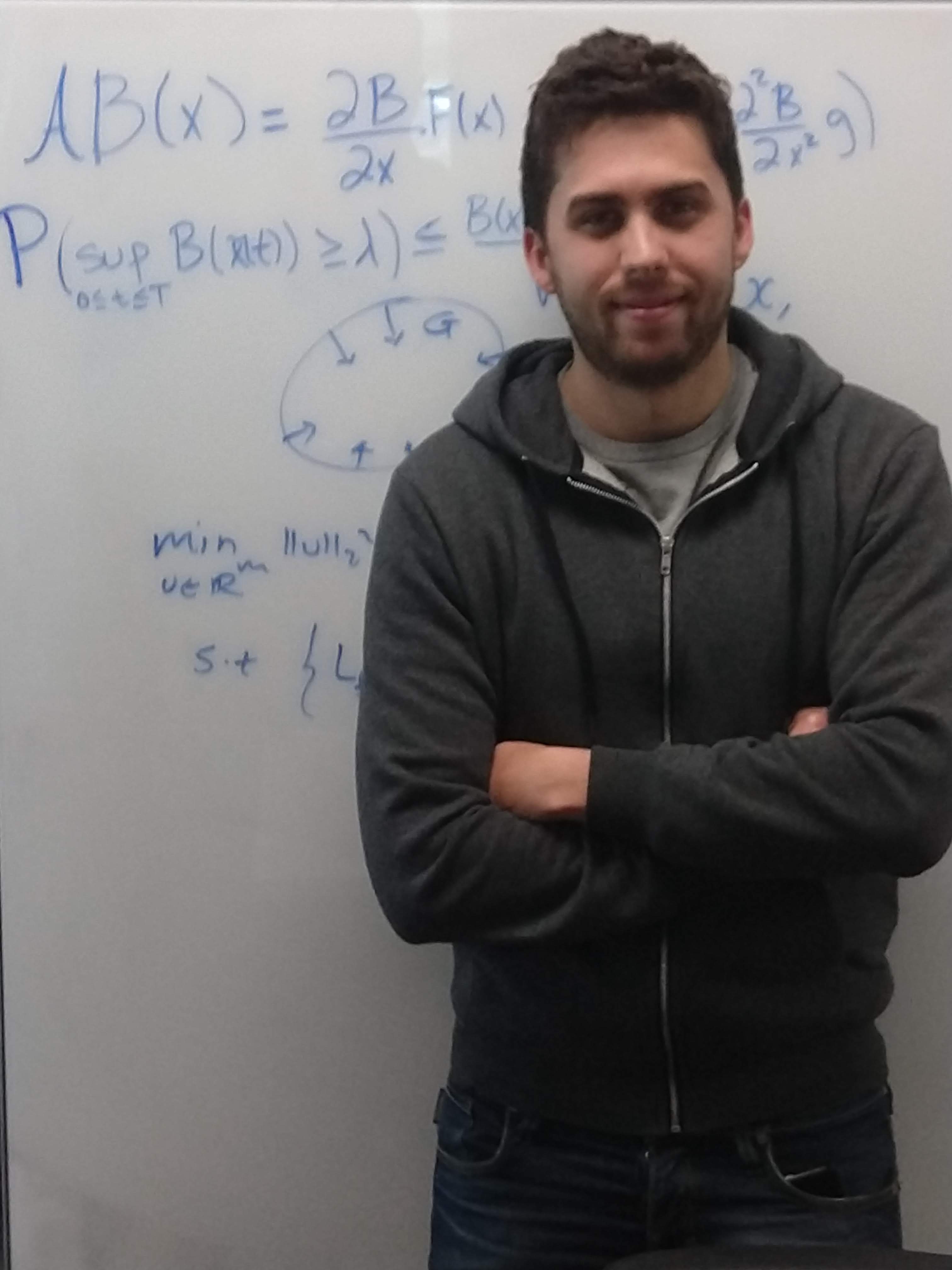}}]{Maxence Dutreix} received the B.S. degree in Engineering Physics from the University of California, San Diego, CA, USA, in 2015.
He is currently a Ph.D. student in Electrical and Computer Engineering at the Georgia Institute of Technology, Atlanta, GA, USA. His research interest is in the area of verification and synthesis for stochastic dynamical systems.
\end{IEEEbiography}

\begin{IEEEbiography}[{\includegraphics[width=1in,height=1.25in,clip,keepaspectratio]{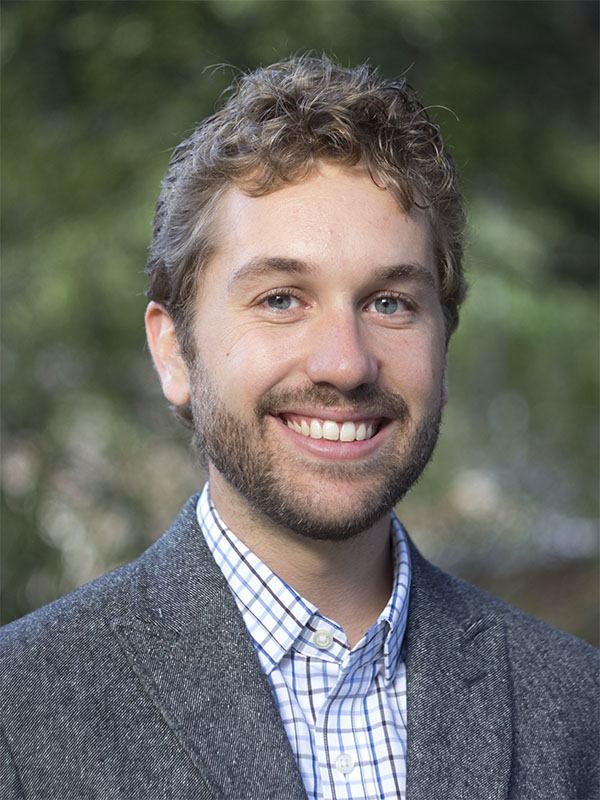}}]{Samuel Coogan} is an Assistant Professor at Georgia Tech in the School of Electrical and Computer Engineering and the School of Civil and Environmental Engineering. Prior to joining Georgia Tech in July 2017, he was an assistant professor in the Electrical Engineering Department at UCLA from 2015-2017. He received the B.S. degree in Electrical Engineering from Georgia Tech and the M.S. and Ph.D. degrees in Electrical Engineering from the University of California, Berkeley. His research is in the area of dynamical systems and autonomy and focuses on developing scalable tools for verification and control of networked, cyber-physical systems with an emphasis on transportation systems. He received a CAREER award from NSF in 2018, a Young Investigator Award from the Air Force Office of Scientific Research in 2018, and the Outstanding Paper Award for the IEEE Transactions on Control of Network Systems in 2017.
\end{IEEEbiography}

\end{document}